\PassOptionsToPackage{unicode}{hyperref}
\PassOptionsToPackage{hyphens}{url}
\PassOptionsToPackage{dvipsnames,svgnames,x11names}{xcolor}
\documentclass[11pt]{article}

\usepackage{amsmath, amssymb, amsfonts, amsthm}
\usepackage{graphicx}
\usepackage{natbib}
\usepackage{setspace}
\usepackage{geometry}
\usepackage{hyperref}
\usepackage{mathptmx}
\usepackage{booktabs}
\usepackage[table]{xcolor}
\usepackage{subfigure}
\usepackage{subcaption}
\usepackage{algorithm}
\usepackage{algorithmic}

\newtheorem{theorem}{Theorem}
\newtheorem{proposition}{Proposition}
\newtheorem{corollary}{Corollary}
\newtheorem{remark}{Remark}

\usepackage{amsmath,amssymb}
\usepackage{iftex}
\ifPDFTeX
  \usepackage[T1]{fontenc}
  \usepackage[utf8]{inputenc}
  \usepackage{textcomp} 
\else 
  \usepackage{unicode-math}
  \defaultfontfeatures{Scale=MatchLowercase}
  \defaultfontfeatures[\rmfamily]{Ligatures=TeX,Scale=1}
\fi
\usepackage{lmodern}
\ifPDFTeX\else  
\fi
\IfFileExists{upquote.sty}{\usepackage{upquote}}{}
\IfFileExists{microtype.sty}{
  \usepackage[]{microtype}
  \UseMicrotypeSet[protrusion]{basicmath} 
}{}
\makeatletter
\@ifundefined{KOMAClassName}{
  \IfFileExists{parskip.sty}{%
    \usepackage{parskip}
  }{
    \setlength{\parindent}{0pt}
    \setlength{\parskip}{6pt plus 2pt minus 1pt}}
}{
  \KOMAoptions{parskip=half}}
\makeatother
\usepackage{xcolor}
\makeatletter
\ifx\paragraph\undefined\else
  \let\oldparagraph\paragraph
  \renewcommand{\paragraph}{
    \@ifstar
      \xxxParagraphStar
      \xxxParagraphNoStar
  }
  \newcommand{\xxxParagraphStar}[1]{\oldparagraph*{#1}\mbox{}}
  \newcommand{\xxxParagraphNoStar}[1]{\oldparagraph{#1}\mbox{}}
\fi
\ifx\subparagraph\undefined\else
  \let\oldsubparagraph\subparagraph
  \renewcommand{\subparagraph}{
    \@ifstar
      \xxxSubParagraphStar
      \xxxSubParagraphNoStar
  }
  \newcommand{\xxxSubParagraphStar}[1]{\oldsubparagraph*{#1}\mbox{}}
  \newcommand{\xxxSubParagraphNoStar}[1]{\oldsubparagraph{#1}\mbox{}}
\fi
\makeatother

\usepackage{longtable,booktabs,array}
\usepackage{calc} 
\usepackage{etoolbox}
\makeatletter
\patchcmd\longtable{\par}{\if@noskipsec\mbox{}\fi\par}{}{}
\makeatother
\IfFileExists{footnotehyper.sty}{\usepackage{footnotehyper}}{\usepackage{footnote}}
\makesavenoteenv{longtable}
\usepackage{graphicx}
\makeatletter
\def\maxwidth{\ifdim\Gin@nat@width>\linewidth\linewidth\else\Gin@nat@width\fi}
\def\maxheight{\ifdim\Gin@nat@height>\textheight\textheight\else\Gin@nat@height\fi}
\makeatother
\setkeys{Gin}{width=\maxwidth,height=\maxheight,keepaspectratio}
\makeatletter
\def\fps@figure{htbp}
\makeatother

\makeatletter
\@ifpackageloaded{caption}{}{\usepackage{caption}}
\AtBeginDocument{%
\ifdefined\contentsname
  \renewcommand*\contentsname{Table of contents}
\else
  \newcommand\contentsname{Table of contents}
\fi
\ifdefined\listfigurename
  \renewcommand*\listfigurename{List of Figures}
\else
  \newcommand\listfigurename{List of Figures}
\fi
\ifdefined\listtablename
  \renewcommand*\listtablename{List of Tables}
\else
  \newcommand\listtablename{List of Tables}
\fi
\ifdefined\figurename
  \renewcommand*\figurename{Figure}
\else
  \newcommand\figurename{Figure}
\fi
\ifdefined\tablename
  \renewcommand*\tablename{Table}
\else
  \newcommand\tablename{Table}
\fi
}
\@ifpackageloaded{float}{}{\usepackage{float}}
\floatstyle{ruled}
\@ifundefined{c@chapter}{\newfloat{codelisting}{h}{lop}}{\newfloat{codelisting}{h}{lop}[chapter]}
\floatname{codelisting}{Listing}

\makeatother
\makeatletter
\@ifpackageloaded{caption}{}{\usepackage{caption}}
\@ifpackageloaded{subcaption}{}{\usepackage{subcaption}}
\makeatother

\ifLuaTeX
  \usepackage{selnolig}  
\fi
\usepackage[]{natbib}
\usepackage{bookmark}

\IfFileExists{xurl.sty}{\usepackage{xurl}}{} 
\hypersetup{
  pdftitle={Title},
  pdfauthor={Author 1; Author 2},
  pdfkeywords={3 to 6 keywords, that do not appear in the title},
  colorlinks=true,
  linkcolor={blue},
  filecolor={Maroon},
  citecolor={Blue},
  urlcolor={Blue},
  pdfcreator={LaTeX via pandoc}}

\newcommand{\anon}{1}

\begin{document}

\def\spacingset#1{\renewcommand{\baselinestretch}%
{#1}\small\normalsize} \spacingset{1}


\if1\anon
{
  \title{\bf A Complete-Data Likelihood for Epidemic Processes on Partially Observed Dynamic Networks}
  \author{Md Asaduzzaman\thanks{
    Department of Engineering, University of Staffordshire, College Road, University Quarter, Stoke on Trent ST4 2DE, United Kingdom, Email: md.asaduzzaman@staffs.ac.uk
    }\hspace{.2cm}\\
    Department of Engineering, University of Staffordshire\\
}
  \maketitle
} \fi

\if0\anon
{
  \bigskip
  \bigskip
  \bigskip
  \begin{center}
    {\LARGE\bf Title}
\end{center}
  \medskip
} \fi

\bigskip
\begin{abstract}
Inference for infectious disease transmission on dynamic contact networks is complicated by latent infection times, partially observed network evolution, measurement error in contact data, and infection originating from outside the observed population. Existing likelihood-based approaches typically address these challenges separately and often rely on restrictive assumptions such as fully observed networks, closed populations, or symptom onset as a surrogate for infection time. We develop a unified complete-data likelihood framework for epidemic processes evolving on partially observed dynamic networks. The proposed formulation represents disease progression, network evolution, and observation mechanisms as interacting continuous-time stochastic processes within a common probabilistic framework. Specifically, we couple a susceptible–exposed–infectious–removed (SEIR) epidemic process with a status-dependent dynamic contact network and explicit observation models for symptoms and contacts. The resulting framework accommodates latent incubation periods, intermittent network observation, contact measurement error, and external infection pressure while preserving a coherent likelihood structure. Our principal contribution is the derivation of a complete-data event-history likelihood for the joint epidemic–network process under partial observation. The likelihood provides a rigorous foundation for likelihood-based and Bayesian inference through data augmentation, clarifies how information from disease progression and contact dynamics jointly determines parameter estimability, and reveals a broad class of existing epidemic network models as special cases. More generally, the framework contributes to statistical inference for partially observed interacting stochastic systems on evolving networks and establishes a foundation for uncertainty-aware analysis of complex transmission processes.
\end{abstract}

\noindent%
{\it Keywords:} Dynamic networks; Event-history analysis; Likelihood inference; Partially observed stochastic processes; SEIR epidemic models.
\vfill

\newpage
\spacingset{1.2} 

\section{Introduction}
\label{sec:introduction}
Understanding how transmission processes unfold in populations requires statistical models that jointly account for disease progression and patterns of interpersonal contact \citep{buckee2021thinking}. Classical epidemic models often assume homogeneous mixing, whereby all individuals are equally likely to interact \citep{keeling2005networks}. While such assumptions can yield tractable mathematical models, they may substantially misrepresent transmission pathways when contact behaviour is heterogeneous, clustered, or evolves over time. Dynamic network models address this limitation by allowing transmission risk to depend explicitly on a time-varying contact structure, thereby linking epidemic spread to changing social, behavioural, or environmental interactions \citep{ben2010modeling, enright2018epidemics, dubey2022modeling, bu2022likelihood}.

Over the past two decades, substantial progress has been made in the development of stochastic epidemic models on networks and associated inferential methodologies \citep{allen2008introduction, kretzschmar2009mathematical, kiss2017mathematics, britton2019stochastic}. In many formulations, disease transmission is represented as a continuous-time stochastic process whose transition intensities depend on both the epidemic state and the underlying network configuration \citep{kotnis2013stochastic, gomez2023new}. Likelihood-based approaches are particularly attractive in this setting because they provide a coherent framework for parameter estimation, uncertainty quantification, model comparison, and prediction. Moreover, event-history representations establish natural connections between epidemic modelling, counting-process theory, survival analysis, and continuous-time Markov processes, allowing epidemic inference to be grounded within a broader statistical framework \citep{breto2018modeling, bu2022likelihood, whitaker2025sequential, ball2025fast}.

Despite these advances, realistic epidemic data remain only partially informative about the underlying transmission process \citep{yang2015inference, wang2018characterizing, kamkumo2025stochastic}. Infection times are rarely observed directly and must instead be inferred from symptom onset, diagnostic testing, or other imperfect proxies \citep{dawson2015epidemic, grinsztajn2021bayesian, bu2022likelihood}. Such observations introduce uncertainty due to latent incubation periods and unobserved disease progression. Contact networks are similarly difficult to observe completely \citep{eames2015six}. In practice, network information is often collected through surveys, digital contact tracing systems, proximity sensors, or administrative records, all of which may contain measurement error, missing edges, or intermittent observation \citep{grekousis2021digital}. Furthermore, many epidemic systems operate within open populations, where infection may arise from sources outside the observed network \citep{almutiry2021contact, rahnsch2024network, schneider2022epidemic}. 

These features fundamentally alter the statistical structure of the inference problem. When both disease trajectories and network evolution are only partially observed, the observed data correspond to a coarse projection of an underlying stochastic process \cite{crambes2023functional, wang2025bayesian}. Consequently, inference requires simultaneous treatment of latent disease states, latent contact histories, and imperfect observation mechanisms. Existing likelihood formulations often simplify this problem by assuming fully observed networks, known infection times, or closed populations. While such assumptions may be useful in specific applications, they restrict the scope of inference and may obscure important questions concerning parameter identifiability and uncertainty propagation \citep{bu2022likelihood, bu2025stochastic, morsomme2025exact}.

This paper develops a unified complete-data likelihood framework for epidemic processes evolving on partially observed dynamic networks. We formulate a joint continuous-time stochastic model that combines a susceptible–exposed–infectious–removed (SEIR) epidemic process with a status-dependent dynamic network and explicit observation models for symptoms and contacts. By treating both epidemic and network histories as latent processes, the proposed framework yields a complete-data event-history likelihood that accommodates latent incubation periods, intermittent network observation, contact measurement error, and external infection pressure within a single probabilistic structure.

Our contribution is methodological. Rather than introducing a new epidemic model, we derive a general likelihood framework that separates the latent process from the observation process while preserving their statistical dependence. This formulation provides a principled foundation for likelihood-based and Bayesian inference through data augmentation, clarifies how information from symptoms and contacts contributes to parameter estimation, and reveals how many existing network epidemic models arise as special cases. More broadly, the framework contributes to statistical inference for partially observed interacting stochastic systems on evolving networks, a class of problems that extends beyond epidemiology to information diffusion, behavioural contagion, and social interaction processes.

The remainder of the paper introduces the joint epidemic–network model, derives the complete-data likelihood, examines identifiability and inferential implications, and discusses computational strategies for estimation under partial observation.

\section{Background}
\label{sec:background}
We first describe the latent epidemic process through a susceptible--infectious--exposed--removed (SIER) formulation, then define the dynamic contact network on which transmission occurs, and finally specify the observation model that links the latent process to the data. This establishes a coherent probabilistic structure in which disease progression, evolving contact patterns, and imperfect observations are modeled jointly \citep{hethcote2000mathematics}.

\subsection{SIER Epidemic Model}
\label{subsec:sier_model}

Let $N$ denote the population size and index individuals by $i=1,\dots,N$. For each individual $i$, let
\[
X_i(t) \in \{S,E,I,R\}
\]
denote the disease state at time $t$, where $S$ stands for susceptible, $E$ for exposed, $I$ for infectious, and $R$ for removed. Although different applications may motivate slightly different disease compartments, the SIER structure is sufficiently rich to capture a latent incubation period while remaining analytically tractable. In particular, the exposed state allows the model to distinguish between the time of infection and the time at which an individual becomes infectious or symptomatic. Fig. \ref{fig:seir_model} demonstrtes the the diease progression on individuals.

\bigskip
\begin{figure}[htbp]
  \centering
  \includegraphics[width=0.90\textwidth]{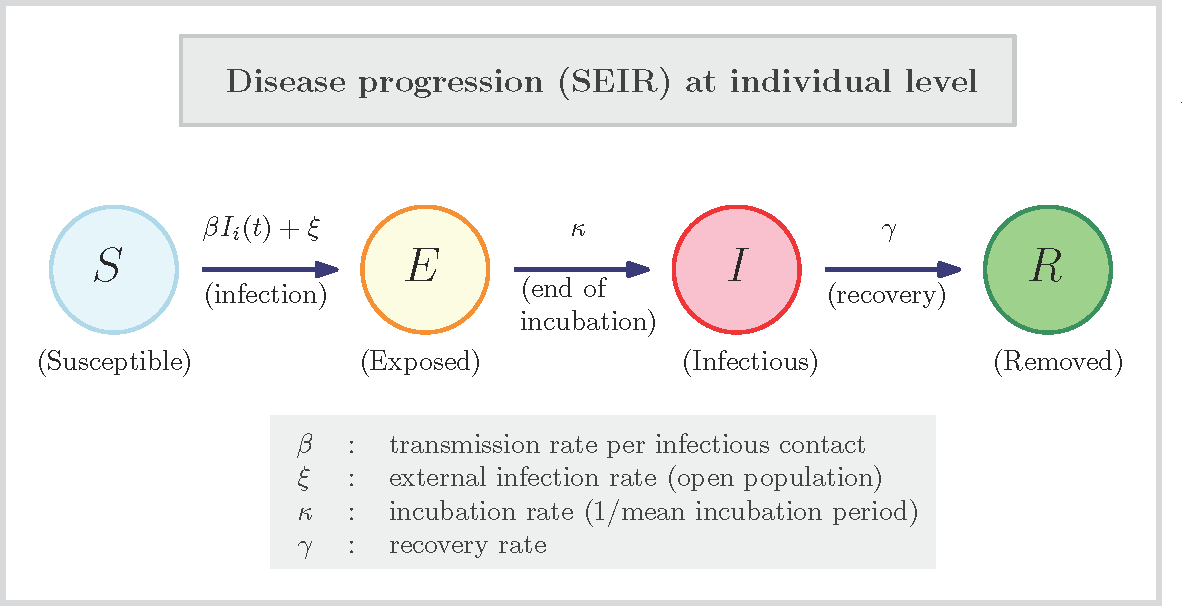}
  \caption{Structure and compartments of the SEIR model. The parameters $\beta$, $\kappa$, and $\gamma$ govern the transition rates between Susceptible ($S$), Exposed ($E$), Infectious ($I$), and Recovered ($R$) compartments, while $\xi$ represent the external infection rate}
  \label{fig:seir_model}
\end{figure}

We assume that disease progression follows a continuous-time stochastic process. A susceptible individual may become exposed through contact with infectious neighbors or through infection pressure from outside the observed network. Once exposed, an individual progresses to the infectious state after a random waiting time, and an infectious individual eventually transitions to the removed state. This gives the transition sequence
\[
S \to E \to I \to R.
\]
Let $\beta$ denote the within-network transmission rate, $\xi$ the external infection rate, $\kappa$ the incubation progression rate, and $\gamma$ the removal rate. Then the transition hazards may be written as
\[
\lambda_i^{SE}(t)=\beta I_i(t)+\xi,\qquad
\lambda_i^{EI}(t)=\kappa,\qquad
\lambda_i^{IR}(t)=\gamma,
\]
where $I_i(t)$ is the number of infectious contacts of individual $i$ at time $t$.

The key feature of this formulation is that infection risk is not homogeneous across individuals. Rather, it depends on the current contact structure and the infection status of nearby individuals. This allows the model to represent transmission clustering, repeated exposure, and heterogeneity in connectivity. The latent epidemic process is therefore not defined only through marginal disease-state transitions, but through its coupling to the evolving social or contact network. Such coupling is essential when the transmission mechanism is mediated by interactions that are themselves time varying.

The SIER model is especially useful in settings where infection is not observed directly. In many practical applications, only symptom reports, test results, or diagnosis times are available. These observations are often delayed relative to infection, so the latent exposed state captures an important source of uncertainty. From a statistical perspective, this latent structure prevents the analyst from equating symptom onset with infection time and provides a more realistic model for inference.

\subsection{Dynamic Contact Network}
\label{subsec:dynamic_network}

Transmission in the proposed framework occurs over a dynamic contact network. Let
\[
A_{ij}(t)\in\{0,1\}, \qquad 1\le i<j\le N,
\]
denote the contact indicator between individuals $i$ and $j$ at time $t$, where $A_{ij}(t)=1$ indicates that a contact is present and $A_{ij}(t)=0$ indicates no contact. We assume symmetry, so that $A_{ij}(t)=A_{ji}(t)$, and no self-contacts, so that $A_{ii}(t)=0$.

Unlike static-network models, the dynamic network evolves over time and may depend on the disease states of the incident nodes. This is important because contact behavior often changes in response to infection status, symptoms, awareness, or intervention. For example, infectious individuals may reduce social interactions, susceptible individuals may avoid contact with known cases, and removed individuals may no longer contribute to transmission. To accommodate such behavior, we model the contact network as a continuous-time Markov process with state-dependent transition intensities as shown in Fig.\ \ref{fig:dcn_model}.

\bigskip
\begin{figure}[htbp]
  \centering
  \includegraphics[width=0.95\textwidth]{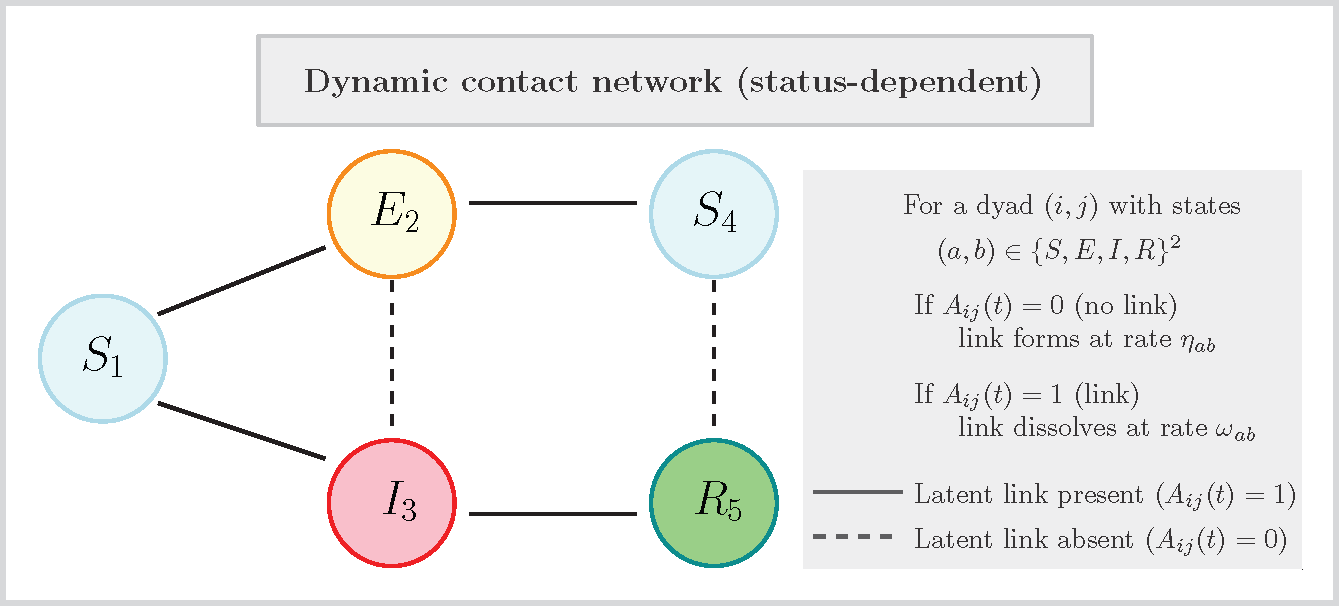}
  \caption{Dynamic contact network}
  \label{fig:dcn_model}
\end{figure}

Specifically, for each dyad $(i,j)$, the edge may form or dissolve according to the current pair of disease states $(X_i(t),X_j(t))$. If the edge is absent, it forms at rate $\eta_{ab}$ when the incident states are $(a,b)$; if the edge is present, it dissolves at rate $\tau_{ab}$. Thus,
\[
A_{ij}(t): 0 \to 1 \quad \text{at rate } \eta_{X_i(t),X_j(t)},
\]
and
\[
A_{ij}(t): 1 \to 0 \quad \text{at rate } \tau_{X_i(t),X_j(t)}.
\]
We impose symmetry constraints
\[
\eta_{ab}=\eta_{ba}, \qquad \tau_{ab}=\tau_{ba},
\]
which reflect the undirected nature of the contact network.

This dynamic network representation is deliberately flexible. It allows contacts to be persistent or transient, random or status dependent, and sparse or dense depending on the parameter values. It also permits the disease process to feed back into the network process through state-dependent contact behavior. Such feedback is often present in real epidemics and can materially affect both the speed and the structure of spread. Ignoring this dependence may lead to misleading inference, especially when contact tracing, behavioral change, or risk avoidance are present.

From a modeling standpoint, the network process and the epidemic process should be viewed as a single coupled system rather than separate components. The contact network determines who can infect whom, while the epidemic states may alter how contacts form and dissolve. This mutual dependence is what makes the joint model both scientifically realistic and statistically challenging. It also motivates the need for a complete-data likelihood in which both latent histories are treated as unknown and estimated jointly from the data.

\subsection{Observation Model}
\label{subsec:observation_model}

In practice, neither the epidemic states nor the contact network are observed continuously or without error. The data typically consist of discrete-time measurements that may be incomplete, noisy, or misclassified. The observation model describes how these recorded data are generated from the latent epidemic--network process (as shown in Fig.\ \ref{fig:om_model}).

\begin{figure}[htbp]
  \centering
  \includegraphics[width=0.80\textwidth]{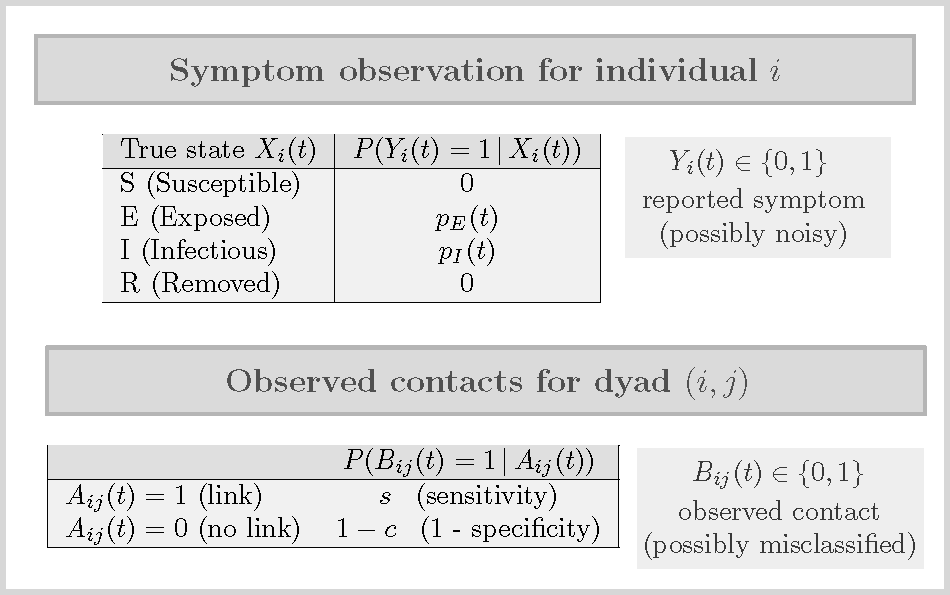}
  \caption{Symtom observation for individual $i$ and observed contacts for dyads $(i,j)$}
  \label{fig:om_model}
\end{figure}

Suppose observations are available at times
\[
0<t_1<\cdots<t_M\le T.
\]
At each observation time, we may observe symptom reports, diagnostic outcomes, or other disease-related indicators. Let
\[
Y_i(t_m)
\]
denote the observed disease measurement for individual $i$ at time $t_m$. In the simplest binary form, $Y_i(t_m)$ may indicate whether symptoms are reported or whether a test is positive. Conditional on the latent state $X_i(t_m)$, the observation distribution is modeled through a misclassification mechanism. For example,
\[
\Pr\{Y_i(t_m)=1\mid X_i(t_m)=E\}=p_E,\qquad
\Pr\{Y_i(t_m)=1\mid X_i(t_m)=I\}=p_I,
\]
with corresponding false-positive or false-negative probabilities allowed for the susceptible and removed states if needed.

Likewise, contact observations are often imperfect. Let
\[
B_{ij}(t_m)\in\{0,1\}
\]
denote the observed contact indicator for dyad $(i,j)$ at time $t_m$. This observed network may be obtained from surveys, wearable sensors, contact-tracing logs, or administrative records, each of which may miss true contacts or record spurious ones. A simple but useful error model is
\[
\Pr\{B_{ij}(t_m)=1\mid A_{ij}(t_m)=1\}=s,
\qquad
\Pr\{B_{ij}(t_m)=1\mid A_{ij}(t_m)=0\}=1-c,
\]
where $s$ is sensitivity and $c$ is specificity.

This observation layer is essential because the latent process is rarely directly visible. The analyst often observes only a noisy projection of the true epidemic and network histories. The observation model therefore links the scientific process to the available data, while also accounting for measurement error and intermittent missingness. Conditional on the latent trajectory, the symptom and contact observations are assumed to be independent across individuals, dyads, and observation times, which yields a tractable likelihood decomposition.

The combination of latent epidemic dynamics, evolving contact structure, and imperfect observation produces the central inferential challenge of the paper. The complete-data likelihood will exploit this structure by treating the latent process as missing data and integrating the observation model into a unified statistical framework. This formulation provides a principled basis for likelihood-based estimation, Bayesian inference, and identifiability analysis.

\section{Mathematical Formulation}
\label{sec:math_formulation}

\subsection{State Space and Joint Process}

Let $V=\{1,\dots,N\}$ denote the set of individuals in the population. For each $i\in V$, let
\[
X_i(t)\in\mathcal{S}=\{S,E,I,R\}
\]
denote the disease state of individual $i$ at time $t$, and for each unordered pair $(i,j)$ with $1\le i<j\le N$, let
\[
A_{ij}(t)\in\{0,1\}
\]
denote the status of the contact edge between $i$ and $j$ at time $t$. Define the full latent state as
\[
Z(t)=\Bigl(X_1(t),\dots,X_N(t),\{A_{ij}(t):1\le i<j\le N\}\Bigr).
\]
We assume that $\{Z(t):0\le t\le T\}$ is a continuous-time Markov chain on a finite state space $\mathcal{Z}$.

For any latent state $z\in\mathcal{Z}$, write $x_i(z)$ for the disease state of node $i$ and $a_{ij}(z)$ for the dyad state. The process is governed by a generator
\[
\mathcal{Q}_\theta = \bigl(q_\theta(z,z')\bigr)_{z,z'\in\mathcal{Z}},
\]
with parameters $\theta=(\theta_{\mathrm{epi}},\theta_{\mathrm{net}},\theta_{\mathrm{obs}})$.

\subsection{Epidemic Transition Intensities}

Define the infectious neighborhood count for node $i$ under latent state $z$ by
\[
I_i(z)=\sum_{j\ne i} a_{ij}(z)\,\mathbf{1}\{x_j(z)=I\}.
\]
The epidemic intensities are specified as
\[
\lambda_i^{SE}(z)=\beta I_i(z)+\xi,
\qquad
\lambda_i^{EI}(z)=\kappa,
\qquad
\lambda_i^{IR}(z)=\gamma,
\]
where $\beta>0$ is the within-network transmission rate, $\xi\ge 0$ is the external infection rate, $\kappa>0$ is the incubation progression rate, and $\gamma>0$ is the recovery rate.

Thus, if $z^{i,S\to E}$ denotes the latent state obtained from $z$ by changing $x_i(z)=S$ to $E$, then
\[
q_\theta(z,z^{i,S\to E})=\lambda_i^{SE}(z).
\]
Similarly,
\[
q_\theta(z,z^{i,E\to I})=\kappa,\qquad
q_\theta(z,z^{i,I\to R})=\gamma.
\]

\subsection{Dynamic Network Intensities}

Let $\eta_{ab}$ and $\tau_{ab}$ denote the formation and dissolution parameters for a dyad whose endpoint disease states are $(a,b)\in\mathcal{S}^2$. We impose symmetry:
\[
\eta_{ab}=\eta_{ba},\qquad \tau_{ab}=\tau_{ba}.
\]
If $a_{ij}(z)=0$ and the endpoint states are $(a,b)$, then the formation intensity is
\[
q_\theta(z,z^{ij,+})=\eta_{ab},
\]
where $z^{ij,+}$ denotes the state obtained by setting $a_{ij}=1$. If $a_{ij}(z)=1$, then the dissolution intensity is
\[
q_\theta(z,z^{ij,-})=\tau_{ab},
\]
where $z^{ij,-}$ is the state obtained by setting $a_{ij}=0$.\\
The network parameter vector is
\[
\theta_{\mathrm{net}}=\{\eta_{ab},\tau_{ab}:(a,b)\in\mathcal{S}^2, \, a\le b\}.
\]

\subsection{Total Exit Rate}

For a given latent state $z$, the total hazard is
\[
\Lambda_\theta(z)=\sum_{z'\ne z} q_\theta(z,z').
\]
Equivalently, this may be decomposed into epidemic and network components:
\[
\Lambda_\theta(z)=\Lambda_{\mathrm{epi}}(z)+\Lambda_{\mathrm{net}}(z),
\]
where
\[
\Lambda_{\mathrm{epi}}(z)=\sum_{i:x_i(z)=S}\lambda_i^{SE}(z)+N_E(z)\kappa+N_I(z)\gamma,
\]
with $N_E(z)$ and $N_I(z)$ denoting the number of exposed and infectious individuals, and
\[
\Lambda_{\mathrm{net}}(z)=\sum_{1\le i<j\le N}
\Bigl\{
\mathbf{1}\{a_{ij}(z)=0\}\eta_{x_i(z)x_j(z)}
+
\mathbf{1}\{a_{ij}(z)=1\}\tau_{x_i(z)x_j(z)}
\Bigr\}.
\]

\subsection{Observed Data Model}

Suppose observations are collected at times $0<t_1<\cdots<t_M\le T$. Let
\[
Y_i(t_m)\in\mathcal{Y}
\]
denote the observed disease-related measurement for individual $i$ at time $t_m$, and let
\[
B_{ij}(t_m)\in\mathcal{B}
\]
denote the observed contact indicator for dyad $(i,j)$ at time $t_m$.\\
We specify conditional observation kernels
\[
g_\theta\bigl(Y_i(t_m)\mid X_i(t_m)\bigr),
\qquad
h_\theta\bigl(B_{ij}(t_m)\mid A_{ij}(t_m)\bigr),
\]
with conditional independence across individuals, dyads, and observation times given the latent process. For binary symptom data and binary contact data, a common choice is
\[
\Pr\{Y_i(t_m)=1\mid X_i(t_m)=E\}=p_E,\qquad
\Pr\{Y_i(t_m)=1\mid X_i(t_m)=I\}=p_I,
\]
and
\[
\Pr\{B_{ij}(t_m)=1\mid A_{ij}(t_m)=1\}=s,\qquad
\Pr\{B_{ij}(t_m)=1\mid A_{ij}(t_m)=0\}=1-c,
\]
where $s$ and $c$ denote sensitivity and specificity.

\section{Complete-Data Likelihood}
\label{sec:complete_likelihood}

\subsection{Event-History Representation}

Let $0<\tau_1<\cdots<\tau_K\le T$ denote the ordered latent event times, and let $e_k$ denote the event type at time $\tau_k$. The latent path likelihood under the CTMC representation is
\[
p_\theta\bigl(Z_{[0,T]}\bigr)
=
\prod_{k=1}^K q_\theta\!\bigl(Z(\tau_k^-),Z(\tau_k)\bigr)
\exp\left\{
-\int_0^T \Lambda_\theta\bigl(Z(u)\bigr)\,du
\right\}.
\]

Equivalently, if the path is represented through sufficient event counts and integrated risks, one may write
\[
\log p_\theta\bigl(Z_{[0,T]}\bigr)
=
\sum_{k=1}^K \log q_\theta\!\bigl(Z(\tau_k^-),Z(\tau_k)\bigr)
-
\int_0^T \Lambda_\theta\bigl(Z(u)\bigr)\,du.
\]

\subsection{Observation Likelihood}

Conditional on the latent path, the observation likelihood is
\[
p_\theta(Y,B\mid Z_{[0,T]})
=
\prod_{m=1}^M \prod_{i=1}^N g_\theta\bigl(Y_i(t_m)\mid X_i(t_m)\bigr)
\prod_{m=1}^M \prod_{1\le i<j\le N} h_\theta\bigl(B_{ij}(t_m)\mid A_{ij}(t_m)\bigr).
\]

\subsection{Complete-Data Likelihood}

The complete-data likelihood is therefore
\[
L(\theta;Z_{[0,T]},Y,B)
=
p_\theta\bigl(Z_{[0,T]}\bigr)\,
p_\theta(Y,B\mid Z_{[0,T]}).
\]
Thus,
\begin{align*}
\log L(\theta;Z_{[0,T]},Y,B) 
&= \sum_{k=1}^K \log q_\theta\!\bigl(Z(\tau_k^-),Z(\tau_k)\bigr) - \int_0^T \Lambda_\theta\bigl(Z(u)\bigr)\,du \\
&\quad\quad\quad\quad\quad\quad + \sum_{m=1}^M \sum_{i=1}^N \log g_\theta\bigl(Y_i(t_m)\mid X_i(t_m)\bigr) \\
&\quad\quad\quad\quad\quad\quad\quad\quad\quad\quad\quad + \sum_{m=1}^M \sum_{1\le i<j\le N} \log h_\theta\bigl(B_{ij}(t_m)\mid A_{ij}(t_m)\bigr).
\end{align*}

\subsection{Proposition 1: Markov Property}
\begin{proposition}
The joint latent process $Z(t)=(X(t),A(t))$ is a continuous-time Markov chain with respect to its natural filtration, provided that all transition intensities depend on the current latent state only through $Z(t)$.
\end{proposition}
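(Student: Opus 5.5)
The approach is constructive: build the process from its generator and read off the Markov property from the construction. The state space $\mathcal{Z}=\mathcal{S}^N\times\{0,1\}^{\binom{N}{2}}$ is finite, with $|\mathcal{Z}|=4^N 2^{N(N-1)/2}$. On it, define the rate matrix $\mathcal{Q}_\theta$ entrywise as in the preceding subsections. The off-diagonal entries are $q_\theta(z,z^{i,S\to E})=\beta I_i(z)+\xi$, $q_\theta(z,z^{i,E\to I})=\kappa$, $q_\theta(z,z^{i,I\to R})=\gamma$, $q_\theta(z,z^{ij,+})=\eta_{x_i(z)x_j(z)}$ and $q_\theta(z,z^{ij,-})=\tau_{x_i(z)x_j(z)}$, with all other off-diagonal entries zero. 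The diagonal entries are $q_\theta(z,z)=-\Lambda_\theta(z)$.

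\emph{Step 1 (a valid generator).} Every off-diagonal entry is a nonnegative function of $z$ alone: $I_i(z)$ is determined by $z$, and the network rates depend only on the endpoint states $x_i(z),x_j(z)$. This is exactly where the hypothesis that intensities depend on the current state only through $Z(t)$ is used. Rows sum to zero by construction, and $\sup_z\Lambda_\theta(z)<\infty$ because $\mathcal{Z}$ is finite and $\Lambda_\theta(z)\le N(\beta(N-1)+\xi+\kappa+\gamma)+\binom{N}{2}\max_{a,b}(\eta_{ab}\vee\tau_{ab})$. Hence $\mathcal{Q}_\theta$ is a stable, conservative, bounded $Q$-matrix.

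\emph{Step 2 (construction).} Realize $Z$ through the jump-chain/holding-time construction. In state $z$, wait an $\mathrm{Exp}(\Lambda_\theta(z))$ time, independent of the past. Then jump to $z'$ with probability $q_\theta(z,z')/\Lambda_\theta(z)$; if $\Lambda_\theta(z)=0$ the state is absorbing. Equivalently, run independent competing exponential clocks, one for each node transition and each dyad toggle, and restart them after every event. Boundedness of the rates rules out explosion, so only finitely many jumps occur on $[0,T]$ almost surely, and the path is càdlàg.

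\emph{Step 3 (Markov property).} Show that $P(t)=e^{t\mathcal{Q}_\theta}$ gives the transition probabilities with respect to the natural filtration $\mathcal{F}_t=\sigma(Z(u):u\le t)$, that is,
\[
\Pr\{Z(t+h)=z'\mid \mathcal{F}_t\}=P_{Z(t),z'}(h).
\]
For this I would invoke the standard equivalence theorem for finite-state chains (Norris, Thm.~2.8.2): the jump-chain/holding-time description, the infinitesimal description $\Pr\{Z(t+h)=z'\mid\mathcal{F}_t\}=\mathbf{1}\{z'=Z(t)\}+q_\theta(Z(t),z')h+o(h)$, and the transition-matrix description are equivalent. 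Alternatively, argue directly. By memorylessness of the exponential clocks, conditional on $\mathcal{F}_t$ the residual holding time at time $t$ is again $\mathrm{Exp}(\Lambda_\theta(Z(t)))$, and the future jump mechanism depends on $\mathcal{F}_t$ only through $Z(t)$; a strong Markov argument at the jump times then gives the conditional law of $(Z(t+u))_{u\ge0}$. Uniqueness follows from the Kolmogorov forward equations $P'=P\mathcal{Q}_\theta$, which have a unique solution on a finite space. As a corollary, the path density stated in the event-history representation above is the standard CTMC likelihood.

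The main obstacle is Step 3, specifically the memorylessness argument conditional on the whole past $\mathcal{F}_t$ rather than only on jump times. The subtlety is that the coupling makes the rates of every component change whenever any other component jumps. I would handle this by working with the full joint state, so that all clocks are reset at each event, rather than treating the epidemic and network separately. That way the coupling appears only through $z\mapsto q_\theta(z,\cdot)$ and poses no additional difficulty.
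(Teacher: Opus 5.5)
Your proposal is correct. It rests on the same observation as the paper's proof: every off-diagonal entry of $\mathcal{Q}_\theta$ is a function of the current joint state $z$ alone. It reaches the conclusion by a more complete route, however.

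The paper argues in one line from the infinitesimal picture: short-interval transition probabilities depend only on $Z(t)$ through the generator, so the conditional law of the future given the past depends only on $Z(t)$. That argument has two gaps. It never checks that the specified rates form a legitimate generator. It also never justifies the passage from the short-interval statement to the full conditional law of $(Z(t+u))_{u\ge 0}$ given $\mathcal{F}_t$. Moreover, since the state-space subsection already \emph{assumes} $Z$ is a CTMC, the paper's argument largely restates that assumption.

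Your proposal fills both gaps. Step 1 handles the first: nonnegativity, conservativeness, the explicit bound on $\Lambda_\theta$, and hence non-explosion. Your appeal to the equivalence of the jump-chain/holding-time, infinitesimal and transition-matrix descriptions handles the second. The competing-clocks construction on the full joint state also makes clear why the epidemic--network feedback causes no extra difficulty. The paper's version gains only brevity.

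Two points need more precision.

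First, your construction proves that the chain \emph{built from} $\mathcal{Q}_\theta$ is Markov. The proposition could instead be read as saying that any jump process on $\mathcal{Z}$ whose $\mathcal{F}_t$-intensities are $q_\theta(Z(t^-),\cdot)$ is Markov. Under that reading you also need that these intensities determine the law. Your forward-equation remark supplies this if you apply it to the conditional probabilities, not just to $P(t)$. Set $p(s)=\bigl(\Pr\{Z(t+s)=z'\mid\mathcal{F}_t\}\bigr)_{z'}$. By the compensator (martingale) property, $p(s)=\delta_{Z(t)}+\int_0^s p(r)\,\mathcal{Q}_\theta\,dr$. Uniqueness of this equation then gives that $p(s)$ equals the $Z(t)$-th row of $e^{s\mathcal{Q}_\theta}$.

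Second, Norris's theorem is stated with conditioning on finitely many past times. Passing to $\sigma(Z(u):u\le t)$ needs a routine monotone-class step, which you should state.
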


\begin{proof}
For any future time increment, the probability of a transition over a small interval depends only on the present state through the generator $\mathcal{Q}_\theta$. Since all epidemic and network intensities are functions of $Z(t)$ alone, the conditional distribution of the future given the past depends only on $Z(t)$. Hence the process is Markov.
\end{proof}

\subsection{Proposition 2: Factorization of the Complete Likelihood}

\begin{proposition}
Under conditional independence of observation kernels given the latent path, the complete-data likelihood factorizes into latent-process and observation components as
\[
L(\theta;Z_{[0,T]},Y,B)
=
p_\theta(Z_{[0,T]})\,p_\theta(Y,B\mid Z_{[0,T]}).
\]
\end{proposition}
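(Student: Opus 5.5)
The factorization will follow from the definition of the joint law of $(Z_{[0,T]},Y,B)$, so the plan is to make that joint law explicit and then read off the two factors. I would fix a dominating measure on the path space of the latent process: $K$ event times with Lebesgue measure on $\{0<\tau_1<\cdots<\tau_K\le T\}$, counting measure on the event types and on the initial state $Z(0)$, and counting measure on the finite observation spaces $\mathcal{Y}$ and $\mathcal{B}$. With respect to this product measure the joint density of (latent path, observations) is, by the chain rule for densities, the marginal density of the path times the conditional density of the data given the path. So the proof reduces to identifying each factor with the expressions given earlier in the paper.

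\emph{Latent factor.} By Proposition 1, $Z$ is a continuous-time Markov chain with generator $\mathcal{Q}_\theta$ on the finite space $\mathcal{Z}$. Given $Z(\tau_{k-1})=z$, the holding time is exponential with rate $\Lambda_\theta(z)$, and the jump goes to $z'$ with probability $q_\theta(z,z')/\Lambda_\theta(z)$. Multiplying these over $k=1,\dots,K$ cancels the $\Lambda_\theta$ factors. Including the probability $e^{-\Lambda_\theta(Z(\tau_K))(T-\tau_K)}$ of no further jump on $(\tau_K,T]$, the product is
\[
\prod_{k=1}^K q_\theta\bigl(Z(\tau_k^-),Z(\tau_k)\bigr)\exp\Bigl\{-\int_0^T\Lambda_\theta(Z(u))\,du\Bigr\}.
\]
This uses that $\Lambda_\theta(Z(u))$ is piecewise constant between jumps. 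The result is $p_\theta(Z_{[0,T]})$, conditional on $Z(0)$; an initial-state distribution can be carried as a separate factor.

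\emph{Observation factor.} The observations $Y_i(t_m)$ and $B_{ij}(t_m)$ are functions of the path only through the evaluations $X_i(t_m)$ and $A_{ij}(t_m)$ of the càdlàg path at fixed times. By the assumed conditional independence across individuals, dyads and times, their conditional density given $Z_{[0,T]}$ is the product of the kernels $g_\theta$ and $h_\theta$. This is exactly $p_\theta(Y,B\mid Z_{[0,T]})$.

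Combining the two identifications gives $L=p_\theta(Z_{[0,T]})\,p_\theta(Y,B\mid Z_{[0,T]})$, and taking logs recovers the displayed additive decomposition. I would also note that if $\theta_{\mathrm{obs}}$ is variation-independent of $(\theta_{\mathrm{epi}},\theta_{\mathrm{net}})$, the factorization separates the score equations.

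The main obstacle is measure-theoretic: one must justify that the path density is a genuine Radon–Nikodym derivative with respect to a parameter-free dominating measure. Without this, multiplying a path "density" by a conditional pmf is not obviously meaningful. I would handle it by the finite-jump representation. Because $\mathcal{Z}$ is finite and rates are bounded, there are almost surely finitely many jumps on $[0,T]$, so the path space is the disjoint union over $K$ and jump-type sequences of open simplices, and the dominating measure above is well defined and independent of $\theta$.
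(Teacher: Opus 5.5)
Your proposal is correct and follows essentially the same route as the paper. The factorization is the chain rule applied to the hierarchically defined joint law, and the conditional independence assumption, together with the observations depending on the path only through $(X_i(t_m),A_{ij}(t_m))$, turns $p_\theta(Y,B\mid Z_{[0,T]})$ into the product of the kernels $g_\theta$ and $h_\theta$; beyond the paper's two-line argument, you also derive the CTMC path density from exponential holding times and jump probabilities and justify a parameter-free dominating measure on the path space.
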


\begin{proof}
This follows directly from the joint law of the latent path and observations and the assumed conditional independence structure. Since the observation model depends on the latent trajectory only through the current values of $(X_i(t_m),A_{ij}(t_m))$, the stated factorization is immediate.
\end{proof}

\subsection{Proposition 3: Observed-Data Intractability}

\begin{proposition}
The observed-data likelihood
\[
p_\theta(Y,B)=\int p_\theta(Z_{[0,T]},Y,B)\,dZ_{[0,T]}
\]
is intractable in general for moderate to large $N$.
\end{proposition}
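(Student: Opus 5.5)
The plan is to make ``intractable'' precise as a statement about the size of the exact computation. We then show that the only generic exact route to $p_\theta(Y,B)$ costs time exponential in $N$, and that no structural shortcut removes this in general.

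First, we use Proposition 1 and the finiteness of $\mathcal{Z}$ to rewrite the integral over paths in matrix form. Since $Z$ is a CTMC with generator $\mathcal{Q}_\theta$, and the observations depend on the path only through $Z(t_m)$ (Proposition 2), integrating out the path between observation times gives a forward (HMM) recursion
\[
p_\theta(Y,B)=\pi_0^\top \Bigl[\prod_{m=1}^M e^{(t_m-t_{m-1})\mathcal{Q}_\theta}\,D_m\Bigr]\mathbf{1},
\]
where $t_0=0$, $\pi_0$ is the initial law, and $D_m$ is the diagonal matrix with entries $\prod_i g_\theta(Y_i(t_m)\mid x_i(z))\prod_{i<j}h_\theta(B_{ij}(t_m)\mid a_{ij}(z))$. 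This is the exact likelihood. Its cost is governed by $|\mathcal{Z}|=4^N 2^{N(N-1)/2}$, which is super-exponential in $N$. Even sparse matrix-exponential actions cost at least $O(|\mathcal{Z}|)$ per step, so the computation is infeasible already for $N$ around $10$.

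Second, we argue that this cost cannot in general be avoided by factorization. The key observation is that $\mathcal{Q}_\theta$ does not decompose into independent per-node or per-dyad generators. The hazard $\lambda_i^{SE}(z)=\beta I_i(z)+\xi$ couples node $i$ to every neighbour's state and every incident edge. The rates $\eta_{ab},\tau_{ab}$ couple each dyad to both endpoints' disease states. So for $\beta>0$ and non-constant $\eta,\tau$, the marginal law at time $t_m$ is not a product measure. Once $Z(t_1)$ is conditioned on noisy observations, the filtering distribution over all $4^N$ disease configurations is fully supported and has no exploitable product structure. Equivalently, in the path-integral form, the infection times and network switch times enter through $\int_0^T\Lambda_\theta(Z(u))\,du$. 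This integral involves the products $a_{ij}(u)\mathbf{1}\{x_j(u)=I\}$, so the multiple integral over the latent infection, incubation and removal times of all $N$ individuals does not separate. It reduces to a sum over all infection orderings, of which there are exponentially many in $N$.

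The main obstacle is making ``in general'' rigorous, since special cases are tractable. Examples are $\beta=0$ with status-independent $\eta,\tau$ (everything factorizes), or a complete static network with exchangeable observations (lumpable counts). I would handle this by exhibiting a concrete family of parameter values and observation patterns, for instance $\tau\equiv 0$ and $\eta\equiv 0$ with an arbitrary fixed initial graph. This reduces the model to an SEIR epidemic on an arbitrary graph. On such a graph, exact observed-data likelihood evaluation contains as a special case computing final-size or outbreak probabilities, and these are $\#$P-hard, by analogy with network reliability. If a clean reduction proves too delicate, I would settle for the weaker, standard meaning: exact evaluation by any method that enumerates latent configurations is exponential in $N$. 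That suffices to motivate the data-augmentation strategy built on the complete-data likelihood.
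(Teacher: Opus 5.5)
Your argument holds at the level of rigour this informally stated proposition allows, but it takes a different route from the paper. The paper's proof is a counting heuristic. The integral ranges over all epidemic histories, network trajectories and event times, the number of latent paths grows combinatorially in $N$ and in the number of unobserved events, and hence analytic marginalization is infeasible outside special cases.

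You instead use the finiteness of $\mathcal{Z}$, together with the fact that $(Y,B)$ depend on the path only through $Z(t_m)$, to integrate the event times out exactly. This gives the forward recursion $\pi_0^\top\prod_m e^{(t_m-t_{m-1})\mathcal{Q}_\theta}D_m\mathbf{1}$. Your route diagnoses the difficulty more accurately than the paper's does. The marginal likelihood does have a closed form, and the obstruction is computational. Its cost is driven by $|\mathcal{Z}|=4^N2^{N(N-1)/2}$, not by the uncountable path space.

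Your non-factorization step also fills a logical gap in the paper's heuristic. A sum over combinatorially many paths need not be expensive: your own example with $\beta=0$ and status-independent $\eta,\tau$ splits into $N$ node chains and $\binom{N}{2}$ dyad chains. So what matters is the coupling through $\beta I_i(z)$ and the state-dependent $\eta_{ab},\tau_{ab}$, not the path count.

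Neither argument, however, proves that no efficient algorithm exists. Your \#P-hardness step would do so, but as written it is only an analogy, and two things block a direct transfer from network reliability:
\begin{itemize}
\item In the Markovian model, the potential transmissions from one infective to different neighbours share a single $\mathrm{Exp}(\gamma)$ infectious period, so they are positively correlated. For $\beta=\gamma$, the probability of transmitting to two given neighbours is $1/3$, not $1/4$. The outbreak is therefore a locally dependent percolation, not the independent-edge model behind reliability results.
\item The finite observation times $t_m\le T$ yield time-limited infection probabilities, not final-size probabilities.
\end{itemize}
A genuine reduction must handle both. Without it, your fallback (exact evaluation by enumerating latent configurations is exponential in $N$) ends at the same informal level as the paper's proof. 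It does, however, make the cost and the source of the coupling explicit.
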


\begin{proof}
The integral ranges over all latent epidemic histories, all latent network trajectories, and all compatible event times. Since the number of possible latent paths grows combinatorially with $N$ and with the number of unobserved events, analytic marginalization is infeasible except in highly simplified special cases.
\end{proof}

\subsection{Corollary 1: Basis for Data Augmentation}

\begin{corollary}
Because the complete-data likelihood is available in closed event-history form, inference may be implemented by augmenting the latent event times, latent states, and unobserved edges, and then alternating between path updates and parameter updates.
\end{corollary}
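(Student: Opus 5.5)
The corollary combines three earlier facts: Proposition 2, which says the complete-data likelihood factorizes into a closed-form path term and an observation term; Proposition 3, which says the observed-data likelihood $p_\theta(Y,B)$ cannot be used directly; and Proposition 1, which gives the Markov structure that makes the path term computable. The plan is to exhibit an alternating scheme whose stationary (or fixed-point) target is the posterior $p(\theta, Z_{[0,T]}\mid Y,B)$ or the maximizer of $p_\theta(Y,B)$. For this scheme, every step must need only the complete-data likelihood $L(\theta;Z_{[0,T]},Y,B)$ up to a normalising constant. I will present it as a data-augmentation Gibbs sampler and note that Monte Carlo EM is the frequentist analogue.

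\textbf{Step 1: the latent path is a finite-dimensional augmentation.} By Proposition 1 and finiteness of $\mathcal{Z}$, a path on $[0,T]$ is determined almost surely by the initial state $Z(0)$, the number $K$ of events, the times $\tau_1<\dots<\tau_K$, and the types $e_1,\dots,e_K$. These types comprise the $S\to E$, $E\to I$ and $I\to R$ transitions of nodes and the edge formations and dissolutions of dyads. The CTMC path density displayed in the event-history representation is therefore a density with respect to a product of counting measure and Lebesgue measure on a union of finite-dimensional spaces. Together with the observation factor from Proposition 2, this gives $\pi(\theta,Z_{[0,T]})\propto L(\theta;Z_{[0,T]},Y,B)\,\pi(\theta)$. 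Its $Z$-marginal integrates to the posterior under $p_\theta(Y,B)$, which is simply the definition of the marginal in Proposition 3.

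\textbf{Step 2: parameter update given the path.} Substitute the explicit intensities into $\log L$. The path term becomes a sum, over each rate parameter $\rho\in\{\beta,\xi,\kappa,\gamma,\eta_{ab},\tau_{ab}\}$, of an event count times $\log\rho$ minus $\rho$ times an integrated exposure.
\begin{itemize}
\item For $\kappa$, $\gamma$, $\eta_{ab}$ and $\tau_{ab}$ the exposures are total time spent in $E$, total time spent in $I$, and total dyad-time at risk with endpoint states $(a,b)$. These give Poisson-process likelihoods, so the updates are gamma conjugate updates or closed-form MLEs count$/$exposure.
\item For $\beta$ and $\xi$, the infection term $\prod\log(\beta I_i+\xi)$ couples the two parameters. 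I would split it by augmenting each infection event with a source label (network or external). This restores conjugacy and otherwise needs only a one-dimensional Metropolis step.
\item The observation term is a product of Bernoulli factors in $p_E$, $p_I$, $s$ and $c$, so those parameters get beta updates from misclassification counts.
\end{itemize}

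\textbf{Step 3: path update given $\theta$.} This is the main obstacle. One must show that path updates can be built that leave $\pi(Z\mid\theta,Y,B)\propto L$ invariant and together are irreducible. I would use Metropolis--Hastings moves that insert, delete or shift a single event time. They include moving an infection time $\tau$ within $(\tau^{SE}_{i-1}$-constraints$)$, adding or removing a pair of edge on/off events in a dyad, and shifting an $E\to I$ time. Each acceptance ratio is a ratio of complete-data likelihoods, and that ratio is computable because only local intensities and integrated hazards change (Step 1). Alternatively, I would use block updates per dyad or per node by uniformization and forward-filtering backward-sampling, conditioning on the rest of the path. This works because, given the other coordinates, a single dyad or node is a time-inhomogeneous finite-state chain whose rates are read off $\mathcal{Q}_\theta$.

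\textbf{Step 4: validity of the alternation.} Irreducibility must be argued on the support, meaning paths compatible with $Y$ and $B$ whenever $g$ and $h$ have zeros. I would argue that any compatible path can be reached from any other by finitely many insert/delete moves, since the intensities are positive whenever $\beta,\kappa,\gamma>0$ and the relevant $\eta,\tau>0$. Standard Gibbs/MH theory then yields convergence to the joint posterior. The EM version follows because the E-step expectation of the linear sufficient statistics is approximated by the path draws.
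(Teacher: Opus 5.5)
Your proposal is correct and is essentially the paper's own route. The paper gives no standalone proof: it treats the corollary as immediate from Propositions 2--3, and later spells out the same alternating scheme in Section 5 (conditionally conjugate Gamma/Beta updates, local insert/delete/shift moves on event times and edge histories, and the posterior-invariance proposition). Your explicit source-label augmentation of infection events is what justifies the paper's separated $N_{SE}^{\mathrm{int}}\log\beta$ and $N_{SE}^{\mathrm{ext}}\log\xi$ terms, which do not follow from the CTMC factor $\beta I_i+\xi$ alone.

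Two cautions. First, in your FFBS alternative, the full conditional of a single dyad or node is not the chain with rates read off $\mathcal{Q}_\theta$. That coordinate also enters its neighbours' infection hazards (and, for a node, the $\eta,\tau$ rates of every incident dyad), so those integrated-hazard and event-time factors must be folded in as a tilt, or the draw used only as an MH proposal. Second, the positivity you invoke for irreducibility in Step 4 fails when $\xi=0$, since $\beta I_i+\xi$ then vanishes for any susceptible without an infectious neighbour.
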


\subsection{Identifiability Considerations}

The parameters $(\beta,\xi)$ may be weakly separable when infection times are not directly observed and contact data are sparse. Likewise, the pairwise network parameters $(\eta_{ab},\tau_{ab})$ may not be identifiable without sufficient temporal resolution in the observed network. These limitations are intrinsic to the partial observation regime and should be characterized through simulation, sensitivity analysis, or formal identifiability arguments.

\subsection{Special Cases}

The framework contains several important special cases:
\begin{itemize}
\item If the network is fully observed, the model reduces to an epidemic process on a known dynamic contact graph.
\item If the network is static, all network transition intensities vanish and the model becomes a SEIR epidemic model on a fixed graph.
\item If infection times are directly observed, the latent epidemic component simplifies substantially and inference for $(\beta,\xi,\kappa,\gamma)$ becomes easier.
\item If $\xi=0$, the population is closed with respect to infection importation.
\end{itemize}

These reductions are useful both for interpretation and for validating the proposed methodology against simpler benchmark settings.

\section{Inferential Implications}
\label{sec:inf_imp}
\subsection{Derivation of the Complete-Data MLE}
\label{subsec:complete_data_mle}

To clarify the inferential structure of the proposed model, we first derive the maximum likelihood estimators under the complete-data formulation, that is, conditional on the full latent epidemic and network trajectory. Although these estimators are not directly available from the observed data alone, they provide the natural target for likelihood-based and data-augmentation methods.

Let
\[
Z_{[0,T]}=\{X(t),A(t):0\le t\le T\}
\]
denote the complete latent trajectory over the observation window. Recall that the complete-data likelihood factorizes as
\[
L(\theta;Z_{[0,T]},Y,B)
=
p_\theta\!\left(Z_{[0,T]}\right)
\,p_\theta\!\left(Y,B\mid Z_{[0,T]}\right),
\]
where $\theta=(\theta_{\mathrm{epi}},\theta_{\mathrm{net}},\theta_{\mathrm{obs}})$ collects all epidemic, network, and observation parameters. Because the latent process is a continuous-time Markov chain, its density admits the standard event-history representation
\[
p_\theta\!\left(Z_{[0,T]}\right)
=
\left\{
\prod_{k=1}^{K} q_\theta\!\left(z_{k^-},z_k\right)
\right\}
\exp\!\left\{
-\int_0^T \Lambda_\theta\!\left(Z(u)\right)\,du
\right\},
\]
where $z_{k^-}$ and $z_k$ denote the latent states immediately before and after the $k$th event time, and $\Lambda_\theta(z)$ is the total exit rate from state $z$.

\begin{theorem}[Complete-data MLE]
\label{thm:complete_data_mle}
Under the complete-data formulation, the log-likelihood separates into epidemic, network, and observation components, each of which yields a closed-form maximum likelihood estimator of event-count-over-exposure form.
\end{theorem}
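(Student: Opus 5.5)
The plan is to start from the factorization in Proposition~2 and the event-history form of $p_\theta(Z_{[0,T]})$. I will rewrite $\log L$ as a sum of terms, each depending on a disjoint block of parameters, and then maximize each block separately. Concretely, I will split the event set $\{\tau_k\}$ by type: $S\to E$ transitions, $E\to I$ transitions, $I\to R$ transitions, and edge formations and dissolutions for dyads whose endpoint states are $(a,b)$. I will split the integrated hazard $\int_0^T\Lambda_\theta(Z(u))\,du$ the same way, using $\Lambda_\theta=\Lambda_{\mathrm{epi}}+\Lambda_{\mathrm{net}}$. For each type I define an event count and an exposure:
\[
D_E=\int_0^T N_E(Z(u))\,du,\qquad D_I=\int_0^T N_I(Z(u))\,du,
\]
and, for each unordered state pair $(a,b)$,
\[
D^0_{ab}=\int_0^T \#\{i<j:a_{ij}=0,\{x_i,x_j\}=\{a,b\}\}\,du,
\]
with $D^1_{ab}$ defined analogously for present edges. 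The corresponding counts are $n_{EI}$, $n_{IR}$, $n^+_{ab}$ and $n^-_{ab}$. The symmetry constraints $\eta_{ab}=\eta_{ba}$ and $\tau_{ab}=\tau_{ba}$ are what allow the ordered pairs to be pooled into unordered ones.

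With this bookkeeping, the $\kappa$ term of the log-likelihood is $n_{EI}\log\kappa-\kappa D_E$. The $\gamma$, $\eta_{ab}$ and $\tau_{ab}$ terms have the same Poisson form. Each is strictly concave on $(0,\infty)$ and is maximized at
\[
\hat\kappa=\frac{n_{EI}}{D_E},\quad \hat\gamma=\frac{n_{IR}}{D_I},\quad \hat\eta_{ab}=\frac{n^+_{ab}}{D^0_{ab}},\quad \hat\tau_{ab}=\frac{n^-_{ab}}{D^1_{ab}},
\]
whenever the exposure is positive. If the exposure is zero, the parameter is absent from the likelihood and is not estimable, which ties in with the identifiability remarks above.

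The observation block is handled the same way. Given $Z$, each $Y_i(t_m)$ is a Bernoulli draw with success probability indexed by $X_i(t_m)$. The $Y$-term therefore splits into binomial terms by state, giving $\hat p_E$ as the number of positive reports among person-times in state $E$, divided by the number of such person-times, and similarly for $p_I$ and for any false-positive rates in $S$ or $R$. The contact term similarly gives $\hat s$ as the proportion of true edges recorded as present, and $\hat c$ as the proportion of true non-edges recorded as absent. These are the discrete-time analogue of count over exposure, with the exposure being the number of observation occasions.

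I expect the main obstacle to be the infection block. Its contribution is
\[
\sum_{k:\,S\to E}\log\bigl(\beta I_{i_k}(\tau_k^-)+\xi\bigr)-\beta\int_0^T\sum_{i:x_i=S}I_i\,du-\xi\int_0^T N_S\,du,
\]
and the logarithm of a sum does not separate into $\beta$ and $\xi$ parts. So the claim holds literally only in the special cases. When $\xi=0$, I get $\hat\beta=n_{SE}/\int_0^T\sum_{i:x_i=S} I_i\,du$, the number of infections divided by the accumulated susceptible--infectious edge-time. When the force of infection is purely external, $\hat\xi$ is the number of infections over total susceptible time. To obtain the general closed form, I will augment the complete data with a source label $\sigma_k\in\{\text{network},\text{external}\}$ for each infection. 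This is consistent with Corollary~1, and is legitimate because competing independent hazards superpose: the label has conditional probability $\xi/(\beta I_{i_k}+\xi)$ of being external. Including it replaces $\log(\beta I+\xi)$ by $\mathbf 1\{\sigma_k=\text{net}\}\log(\beta I)+\mathbf 1\{\sigma_k=\text{ext}\}\log\xi$. This yields $\hat\beta=n_{SE}^{\mathrm{net}}/\int_0^T\sum_{i:x_i=S} I_i\,du$ and $\hat\xi=n_{SE}^{\mathrm{ext}}/\int_0^T N_S\,du$. Without the labels, I would state instead that $(\hat\beta,\hat\xi)$ solves a two-dimensional concave score equation, and that the count-over-exposure form is recovered through the EM fixed point. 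I will state this caveat explicitly in the proof.
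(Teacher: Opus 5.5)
Your proposal is correct and follows the same block-by-block route as the paper: split $\log L$ by event type into Poisson kernels of the form $n\log\theta-\theta D$ for the rate parameters and binomial terms for $(p_E,p_I,s,c)$, then maximize each block separately (strict concavity and an interior maximizer require a positive event count, not only positive exposure). Your explicit infection-source labelling is what the paper tacitly assumes when it writes $\ell_{\mathrm{epi}}$ in terms of $N_{SE}^{\mathrm{int}}$ and $N_{SE}^{\mathrm{ext}}$, since the unlabelled path contributes the non-separable term $\log(\beta I_i+\xi)$, so your version states the same argument more carefully.
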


\begin{proof}
We derive the estimators block by block.
\end{proof}

\paragraph{Epidemic parameters.}
Let $N_{SE}^{\mathrm{int}}$ denote the number of internal $S\to E$ events, $N_{SE}^{\mathrm{ext}}$ the number of external $S\to E$ events, $N_{EI}$ the number of $E\to I$ events, and $N_{IR}$ the number of $I\to R$ events. Define the integrated exposure processes
\[
U_{SI}=\int_0^T S_I(u)\,du,\qquad
U_S=\int_0^T S(u)\,du,\qquad
U_E=\int_0^T E(u)\,du,\qquad
U_I=\int_0^T I(u)\,du,
\]
where $S_I(u)$ denotes the total susceptible--infectious contact exposure at time $u$.

Under the epidemic hazard specification
\[
\lambda_i^{SE}(u)=\beta I_i(u)+\xi,\qquad
\lambda^{EI}(u)=\kappa,\qquad
\lambda^{IR}(u)=\gamma,
\]
the epidemic contribution to the complete-data log-likelihood is
\[
\ell_{\mathrm{epi}}(\beta,\xi,\kappa,\gamma)
=
N_{SE}^{\mathrm{int}}\log\beta
-
\beta U_{SI}
+
N_{SE}^{\mathrm{ext}}\log\xi
-
\xi U_S
+
N_{EI}\log\kappa
-
\kappa U_E
+
N_{IR}\log\gamma
-
\gamma U_I,
\]
up to an additive constant. Differentiating with respect to each parameter yields
\[
\frac{\partial \ell_{\mathrm{epi}}}{\partial \beta}
=
\frac{N_{SE}^{\mathrm{int}}}{\beta}-U_{SI},
\qquad
\frac{\partial \ell_{\mathrm{epi}}}{\partial \xi}
=
\frac{N_{SE}^{\mathrm{ext}}}{\xi}-U_S,
\]
\[
\frac{\partial \ell_{\mathrm{epi}}}{\partial \kappa}
=
\frac{N_{EI}}{\kappa}-U_E,
\qquad
\frac{\partial \ell_{\mathrm{epi}}}{\partial \gamma}
=
\frac{N_{IR}}{\gamma}-U_I.
\]
Setting these derivatives to zero gives the closed-form estimators
\[
\widehat{\beta}
=
\frac{N_{SE}^{\mathrm{int}}}{U_{SI}},
\qquad
\widehat{\xi}
=
\frac{N_{SE}^{\mathrm{ext}}}{U_S},
\qquad
\widehat{\kappa}
=
\frac{N_{EI}}{U_E},
\qquad
\widehat{\gamma}
=
\frac{N_{IR}}{U_I}.
\]
The second derivatives are strictly negative whenever the corresponding event counts are positive, so these critical points are maxima.

\paragraph{Network parameters.}
For each unordered disease-state pair $(a,b)\in\mathcal{S}^2$ with $a\le b$, let $N_{ab}^{0\to1}$ denote the number of edge-formation events among dyads whose endpoint states are $(a,b)$, and let $N_{ab}^{1\to0}$ denote the number of edge-dissolution events among dyads of the same type. Define the integrated at-risk processes
\[
V_{ab}^{0}=\int_0^T Y_{ab}^{0}(u)\,du,
\qquad
V_{ab}^{1}=\int_0^T Y_{ab}^{1}(u)\,du,
\]
where $Y_{ab}^{0}(u)$ is the number of absent dyads of type $(a,b)$ at time $u$, and $Y_{ab}^{1}(u)$ is the number of present dyads of type $(a,b)$ at time $u$.

Under the status-dependent network intensities
\[
q_\theta(z,z^{ij,+})=\eta_{ab},
\qquad
q_\theta(z,z^{ij,-})=\tau_{ab},
\]
the network contribution to the complete-data log-likelihood is
\[
\ell_{\mathrm{net}}(\eta,\tau)
=
\sum_{a\le b}
\left[
N_{ab}^{0\to1}\log\eta_{ab}
-
\eta_{ab}V_{ab}^{0}
+
N_{ab}^{1\to0}\log\tau_{ab}
-
\tau_{ab}V_{ab}^{1}
\right].
\]
Differentiation gives
\[
\frac{\partial \ell_{\mathrm{net}}}{\partial \eta_{ab}}
=
\frac{N_{ab}^{0\to1}}{\eta_{ab}}-V_{ab}^{0},
\qquad
\frac{\partial \ell_{\mathrm{net}}}{\partial \tau_{ab}}
=
\frac{N_{ab}^{1\to0}}{\tau_{ab}}-V_{ab}^{1}.
\]
Hence the maximum likelihood estimators are
\[
\widehat{\eta}_{ab}
=
\frac{N_{ab}^{0\to1}}{V_{ab}^{0}},
\qquad
\widehat{\tau}_{ab}
=
\frac{N_{ab}^{1\to0}}{V_{ab}^{1}}.
\]

\paragraph{Observation parameters.}
Let $Y_i(t_m)\in\{0,1\}$ denote the symptom observation for individual $i$ at time $t_m$. Suppose
\[
\Pr\{Y_i(t_m)=1\mid X_i(t_m)=E\}=p_E,
\qquad
\Pr\{Y_i(t_m)=1\mid X_i(t_m)=I\}=p_I.
\]
Define
\[
M_E=\sum_{m=1}^M\sum_{i=1}^N \mathbf{1}\{X_i(t_m)=E\},
\qquad
R_E=\sum_{m=1}^M\sum_{i=1}^N \mathbf{1}\{Y_i(t_m)=1,X_i(t_m)=E\},
\]
and similarly $M_I$ and $R_I$. Then
\[
\ell_{\mathrm{sym}}(p_E,p_I)
=
R_E\log p_E+(M_E-R_E)\log(1-p_E)
+
R_I\log p_I+(M_I-R_I)\log(1-p_I),
\]
which yields
\[
\widehat{p}_E=\frac{R_E}{M_E},
\qquad
\widehat{p}_I=\frac{R_I}{M_I}.
\]

For the contact observation model, let $B_{ij}(t_m)\in\{0,1\}$ denote the observed contact indicator and assume
\[
\Pr\{B_{ij}(t_m)=1\mid A_{ij}(t_m)=1\}=s,
\qquad
\Pr\{B_{ij}(t_m)=1\mid A_{ij}(t_m)=0\}=1-c.
\]
Define
\[
M_1=\sum_{m=1}^M\sum_{1\le i<j\le N}\mathbf{1}\{A_{ij}(t_m)=1\},
\qquad
R_1=\sum_{m=1}^M\sum_{1\le i<j\le N}\mathbf{1}\{B_{ij}(t_m)=1,A_{ij}(t_m)=1\},
\]
and
\[
M_0=\sum_{m=1}^M\sum_{1\le i<j\le N}\mathbf{1}\{A_{ij}(t_m)=0\},
\qquad
R_0=\sum_{m=1}^M\sum_{1\le i<j\le N}\mathbf{1}\{B_{ij}(t_m)=0,A_{ij}(t_m)=0\}.
\]
The corresponding log-likelihood is
\[
\ell_{\mathrm{cont}}(s,c)
=
R_1\log s+(M_1-R_1)\log(1-s)
+
R_0\log c+(M_0-R_0)\log(1-c),
\]
and the MLEs are
\[
\widehat{s}=\frac{R_1}{M_1},
\qquad
\widehat{c}=\frac{R_0}{M_0}.
\]
Combining the blocks above, the complete-data maximum likelihood estimator is
\[
\widehat{\theta}
=
\left(
\widehat{\beta},\widehat{\xi},\widehat{\kappa},\widehat{\gamma},
\{\widehat{\eta}_{ab},\widehat{\tau}_{ab}\}_{a\le b},
\widehat{p}_E,\widehat{p}_I,\widehat{s},\widehat{c}
\right).
\]
These estimators have the familiar event-count-over-exposure form because, conditional on the latent path, each transition mechanism behaves as a Poisson-type counting process with piecewise constant risk set.

\paragraph{Remark.}
The expressions above are complete-data MLEs. In the observed-data problem, the latent epidemic history and network trajectory are not known, so these formulas are not directly evaluable. They nevertheless provide the analytic target for EM, Monte Carlo EM, and Bayesian data-augmentation approaches, and they clarify the statistical meaning of each parameter block in the proposed model.

The complete-data formulation provides a unified foundation for both likelihood-based and Bayesian inference. In a likelihood framework, one may maximize the observed or augmented likelihood using Monte Carlo EM or related simulation-based optimization methods. In a Bayesian framework, the same complete-data structure supports posterior sampling over parameters and latent trajectories.

The principal inferential challenge is that epidemic and network parameters are often entangled under partial observation. Richer observation schemes improve identifiability by separately informing transmission, incubation, recovery, and contact dynamics. Accordingly, the model should be assessed under multiple observation designs to determine which components of the latent process are estimable from the available data.

\subsection{Asymptotic Normality of the Complete-Data MLE}
\label{subsec:asymptotic_normality}

The complete-data maximum likelihood estimators derived in Section~\ref{sec:inf_imp} have the familiar event-count-over-exposure form. This structure suggests a standard counting-process asymptotic theory: as the amount of latent event information increases, the estimators are consistent and asymptotically normal with covariance determined by the inverse Fisher information. We state the result in a form adapted to the epidemic, network, and observation blocks of the model.

\begin{theorem}[Asymptotic normality of the complete-data MLE]
\label{thm:asymp_normality}
Let $\widehat{\theta}$ denote the complete-data maximum likelihood estimator based on the full latent trajectory
\[
Z_{[0,T]}=\{X(t),A(t):0\le t\le T\},
\]
and assume that the latent epidemic and network processes satisfy standard regularity conditions for multitype counting processes: 
(i) the true parameter $\theta_0$ lies in the interior of the parameter space;
(ii) the relevant exposure processes are finite and strictly positive with probability tending to one;
(iii) the intensities are correctly specified and continuously differentiable in $\theta$;
and (iv) a law of large numbers and martingale central limit theorem apply to the associated counting processes.

Then, as the effective sample size $n\to\infty$,
\[
\sqrt{n}\,(\widehat{\theta}-\theta_0)\ \xrightarrow{d}\ N\!\left(0,\; \mathcal{I}(\theta_0)^{-1}\right),
\]
where $\mathcal{I}(\theta_0)$ is the Fisher information matrix for the complete-data likelihood. Equivalently, the observed information at the MLE satisfies
\[
\mathcal{J}_n(\widehat{\theta})/n \ \xrightarrow{p}\ \mathcal{I}(\theta_0),
\]
so that the asymptotic covariance may be estimated by $\mathcal{J}_n(\widehat{\theta})^{-1}$.
\end{theorem}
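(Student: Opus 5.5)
The argument follows the block structure of Theorem~\ref{thm:complete_data_mle}. The complete-data log-likelihood is a sum of the epidemic, network, symptom and contact components, and these share no parameters. So the score vector and the Hessian are block diagonal, and it suffices to prove asymptotic normality for each scalar estimator of event-count-over-exposure form. Joint normality, with covariance $\mathcal{I}(\theta_0)^{-1}$ block diagonal, then follows from the joint martingale CLT together with the asymptotic orthogonality of the martingales involved. I will write $n$ for the normalising sequence: for instance the expected total exposure $E[U_{SI}]$ for $\beta$, or the expected number of dyad-time observations $M_1$ for $s$. Condition (ii) then amounts to $U/n\to u_0>0$ in probability for each exposure.

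\textbf{Intensity parameters.} Take $\widehat\beta=N_{SE}^{\mathrm{int}}/U_{SI}$ as the representative case. Under (iii), the compensator of the internal-infection counting process is $\beta_0\int_0^t S_I(u)\,du$, so
\[
M_\beta(t)=N_{SE}^{\mathrm{int}}(t)-\beta_0\int_0^t S_I(u)\,du
\]
is a local square-integrable martingale with predictable variation $\beta_0\int_0^t S_I(u)\,du$. This gives
\[
\sqrt n(\widehat\beta-\beta_0)=\frac{n^{-1/2}M_\beta(T)}{U_{SI}/n}.
\]
By (iv), Rebolledo's martingale CLT applies. It requires that $\langle n^{-1/2}M_\beta\rangle_T\to\beta_0 u_{SI}$ in probability, which is the law of large numbers, and a Lindeberg condition, which is automatic because the jumps have size $1/\sqrt n$. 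So $n^{-1/2}M_\beta(T)\to N(0,\beta_0u_{SI})$. Slutsky's lemma then gives limiting variance $\beta_0/u_{SI}$, and this equals the inverse of the per-unit Fisher information $-\partial^2_\beta\ell/n=N^{\mathrm{int}}_{SE}/(n\beta^2)\to u_{SI}/\beta_0$.

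The same argument applies verbatim to $\xi,\kappa,\gamma,\eta_{ab},\tau_{ab}$. The counting processes for distinct event types have no common jumps, because the CTMC has a.s.\ one transition at a time. Hence their martingales are orthogonal, which gives the diagonal joint limit.

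\textbf{Observation parameters.} Here $\widehat p_E=R_E/M_E$, and the analogous estimators for $p_I,s,c$ have the same form. Conditional on $Z_{[0,T]}$, the quantity $R_E$ is a sum of $M_E$ independent Bernoulli$(p_E)$ variables. With $M_E/n\to m_E>0$, the conditional Lindeberg CLT, together with a mixing/Slutsky argument, gives variance $p_E(1-p_E)/m_E$. This again matches the inverse information. Because these limits are conditional on the latent path, they are independent of the intensity-block limits.

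\textbf{Observed information.} For $\mathcal J_n(\widehat\theta)/n\to\mathcal I(\theta_0)$: each diagonal entry is an explicit expression such as $N/\widehat\beta^2$ or $M_E/[\widehat p_E(1-\widehat p_E)]$. Consistency of $\widehat\theta$ follows from the CLT above, and convergence of $N/n$ and $M/n$ is given by (iv). The continuous mapping theorem then finishes the argument.

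\textbf{Main obstacle.} The delicate point is making (iv) meaningful, namely verifying the LLN for exposures such as $U_{SI}/n$ under a coupled, non-stationary epidemic-network process. Epidemics can die out early, in which case exposures do not grow. My approach is to state the result conditionally on non-extinction, with $n$ indexing either replicated independent epidemics or a large-population limit (density-dependent Kurtz-type limits for the mean-field system). Condition (ii), which requires the limit $u_0$ to be positive, is what excludes degenerate paths. I will treat (iv) as a hypothesis, as the theorem does, and only indicate these sufficient settings.
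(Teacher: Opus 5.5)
Your argument is correct, but it takes a different route from the paper's.

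\textbf{The paper's route.} The paper argues generically. It writes each block as $\sum_r\log\lambda_\psi(t_r)-\int_0^T\Lambda_\psi(t)\,dt$ and expresses the score at $\theta_0$ as a martingale integral. The LLN and martingale CLT then give $n^{-1/2}U_n(\theta_0)\to N(0,\mathcal{I}(\theta_0))$. This is transferred to $\widehat\theta$ by a Taylor expansion of the score equation, taking consistency of $\widehat\theta$ and $\mathcal{J}_n(\tilde\theta)/n\to\mathcal{I}(\theta_0)$ as given.

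\textbf{Your route.} You use the closed forms of Theorem~\ref{thm:complete_data_mle}, so that $\sqrt n(\widehat\beta-\beta_0)=n^{-1/2}M_\beta(T)/(U_{SI}/n)$ holds exactly. This buys several things.
\begin{itemize}
\item No Taylor expansion is needed, and consistency and the convergence of the observed information fall out of the same computation rather than being assumed.
\item The joint limit and its diagonal covariance come explicitly from the absence of common jumps and from conditional independence given $Z_{[0,T]}$.
\item The Bernoulli observation blocks, which are not actually of the counting-process form the paper's proof invokes, are handled properly by your conditional binomial CLT with a mixing argument.
\item Your remark on extinction makes explicit an issue the paper leaves inside hypothesis (iv).
\end{itemize}

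\textbf{What the paper's route buys.} It is more general. Your argument lives on the ratio representation, and that exists only because Theorem~\ref{thm:complete_data_mle} treats internal and external $S\to E$ events as separately recorded. If the complete data are literally $Z_{[0,T]}$, the $S\to E$ contribution is $\sum_k\log\{\beta I_{i_k}(\tau_k^-)+\xi\}-\beta U_{SI}-\xi U_S$ (sum over $S\to E$ events). Then $\beta$ and $\xi$ are coupled in the information and have no closed-form MLE. The same happens with the log-linear covariate effects mentioned in the Bayesian section. The score/Taylor argument still applies in those cases; yours does not.

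\textbf{Normalisation.} To get the statement exactly as written, with a single $\sqrt n$, every exposure and observation count must be of order one common $n$. When they grow at different rates, for example dyad-time versus individual-time exposure in a large-population limit, your parameter-specific normalisations give the appropriate diagonally scaled version instead.
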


\begin{proof}
The complete-data log-likelihood decomposes into additive epidemic, network, and observation components. Each block is of the generic counting-process form
\[
\ell(\psi)=\sum_{r=1}^{R_n}\log \lambda_\psi(t_r)-\int_0^T \Lambda_\psi(t)\,dt,
\]
where $\psi$ denotes a subvector of parameters, $\{t_r\}$ are event times, and $\Lambda_\psi(t)$ is the total hazard. For such models, the score can be written as a martingale integral,
\[
U_n(\psi)=\frac{\partial \ell(\psi)}{\partial \psi}
=\int_0^T \dot{\log \lambda_\psi}(t)\,dM(t),
\]
where $M(t)$ is the associated counting-process martingale. Under the stated regularity conditions, the compensator term obeys a law of large numbers, while the martingale term satisfies a central limit theorem. Hence,
\[
n^{-1/2}U_n(\theta_0)\ \xrightarrow{d}\ N(0,\mathcal{I}(\theta_0)).
\]
A Taylor expansion of the score equation $U_n(\widehat{\theta})=0$ around $\theta_0$ yields
\[
0=U_n(\theta_0)+\mathcal{J}_n(\tilde{\theta})(\widehat{\theta}-\theta_0),
\]
for some $\tilde{\theta}$ between $\widehat{\theta}$ and $\theta_0$. Dividing by $\sqrt{n}$ and using consistency of $\widehat{\theta}$ together with
\[
\mathcal{J}_n(\tilde{\theta})/n \xrightarrow{p} \mathcal{I}(\theta_0),
\]
gives
\[
\sqrt{n}(\widehat{\theta}-\theta_0)
=
-\left\{\mathcal{J}_n(\tilde{\theta})/n\right\}^{-1}
\left\{n^{-1/2}U_n(\theta_0)\right\}
\xrightarrow{d} N\!\left(0,\mathcal{I}(\theta_0)^{-1}\right).
\]
This establishes asymptotic normality.
\end{proof}

\begin{remark}[Blockwise information]
Because the complete-data likelihood separates into epidemic, network, and observation blocks, the Fisher information is typically close to block diagonal. In particular, the epidemic block for $(\beta,\xi,\kappa,\gamma)$ is driven by event counts and integrated exposure times, while the network block for $(\eta_{ab},\tau_{ab})$ is driven by edge formation and dissolution counts. The observation block for $(p_E,p_I,s,c)$ is determined by Bernoulli/binomial terms. This separation yields transparent standard-error formulas and simplifies simulation-based coverage assessment.
\end{remark}

\begin{corollary}[Asymptotic standard errors]
\label{cor:asymp_se}
Under the conditions of Theorem~\ref{thm:asymp_normality},
\[
\widehat{\theta}_j \pm z_{1-\alpha/2}\sqrt{\left\{\mathcal{J}_n(\widehat{\theta})^{-1}\right\}_{jj}}
\]
is an asymptotically valid $(1-\alpha)$ confidence interval for each scalar component $\theta_j$.
\end{corollary}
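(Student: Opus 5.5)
The corollary is the standard Wald-interval consequence of Theorem~\ref{thm:asymp_normality}. We will combine the joint asymptotic normality of $\widehat{\theta}$ with the consistency of the scaled observed information, and then use Slutsky's theorem to get a pivot for each scalar component.

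\emph{Step 1: marginal limit.} Fix a component $j$ and let $e_j$ denote the $j$th unit vector. By Theorem~\ref{thm:asymp_normality} and the continuous mapping theorem applied to the linear map $v\mapsto e_j^\top v$,
\[
\sqrt{n}\,(\widehat{\theta}_j-\theta_{0j}) \xrightarrow{d} N\bigl(0,\{\mathcal{I}(\theta_0)^{-1}\}_{jj}\bigr).
\]
Assumption (i) together with the positive exposures in (ii) should make $\mathcal{I}(\theta_0)$ positive definite. This is explicit for the blockwise structure in the Remark: the epidemic diagonal entries are, e.g., $\lim N_{SE}^{\mathrm{int}}/(n\beta^2)$ and the like, and the observation blocks are binomial informations with $p\in(0,1)$. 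Hence $\sigma_j^2:=\{\mathcal{I}(\theta_0)^{-1}\}_{jj}>0$.

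\emph{Step 2: consistent variance estimate.} The theorem gives $\mathcal{J}_n(\widehat{\theta})/n \xrightarrow{p} \mathcal{I}(\theta_0)$. Matrix inversion is continuous at the nonsingular limit, so by the continuous mapping theorem $n\,\mathcal{J}_n(\widehat{\theta})^{-1}\xrightarrow{p}\mathcal{I}(\theta_0)^{-1}$. In particular, $\widehat{\sigma}_j^2:=n\{\mathcal{J}_n(\widehat{\theta})^{-1}\}_{jj}\xrightarrow{p}\sigma_j^2>0$. The event that $\mathcal{J}_n(\widehat{\theta})$ is singular then has probability tending to zero, which is consistent with (ii), so on that event we may define the interval arbitrarily.

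\emph{Step 3: pivot and coverage.} By Slutsky's theorem,
\[
T_{n,j}=\frac{\widehat{\theta}_j-\theta_{0j}}{\sqrt{\{\mathcal{J}_n(\widehat{\theta})^{-1}\}_{jj}}}=\frac{\sqrt{n}(\widehat{\theta}_j-\theta_{0j})}{\widehat{\sigma}_j}\xrightarrow{d}N(0,1).
\]
The standard normal distribution function is continuous, so the Portmanteau theorem gives
\[
\Pr\{|T_{n,j}|\le z_{1-\alpha/2}\}\to 1-\alpha.
\]
The event $|T_{n,j}|\le z_{1-\alpha/2}$ is exactly the event that $\theta_{0j}$ lies in the stated interval, which proves asymptotic validity.

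\emph{Main obstacle.} The only delicate point is ensuring that $\mathcal{I}(\theta_0)$ is nonsingular, so that inversion is continuous and $\sigma_j^2>0$. This is not stated verbatim among the theorem's hypotheses but is implicit in its conclusion, since the limit covariance $\mathcal{I}(\theta_0)^{-1}$ would not exist otherwise. I would treat it as part of the regularity assumptions, supported by the explicit block-diagonal information from the Remark under condition (ii).
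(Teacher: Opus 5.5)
Your argument is correct and follows the route the paper intends: the paper states the corollary without proof, as a direct consequence of Theorem~\ref{thm:asymp_normality}, and the implicit justification is the normal limit combined with $\mathcal{J}_n(\widehat{\theta})/n \xrightarrow{p} \mathcal{I}(\theta_0)$ through continuous mapping and Slutsky, exactly as in your Steps 1--3. Your ``main obstacle'' needs no extra assumption and no appeal to condition (ii): the limit covariance $\mathcal{I}(\theta_0)^{-1}$ in the theorem presupposes that $\mathcal{I}(\theta_0)$ is invertible, and as a covariance matrix it is positive semidefinite, so it is positive definite, giving $\{\mathcal{I}(\theta_0)^{-1}\}_{jj}>0$ and continuity of inversion at $\mathcal{I}(\theta_0)$.
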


\subsection{Information Matrix and Curvature}
\label{subsec:information_curvature}

A key advantage of the complete-data formulation is that the log-likelihood is a sum of simple concave components, each corresponding to one parameter block. This makes the curvature of the likelihood analytically tractable and yields closed-form expressions for the observed and expected information matrices. These expressions provide the basis for standard errors, Wald intervals, and asymptotic coverage calculations in the simulation study.

\paragraph{General form.}
Let $\ell_c(\theta)$ denote the complete-data log-likelihood, where
\[
\theta = (\theta_{\mathrm{epi}}, \theta_{\mathrm{net}}, \theta_{\mathrm{obs}})
\]
collects the epidemic, network, and observation parameters. The observed information matrix is defined as
\[
\mathcal{J}(\theta)
=
-\frac{\partial^2 \ell_c(\theta)}{\partial \theta \,\partial \theta^\top},
\]
and the expected Fisher information is
\[
\mathcal{I}(\theta)
=
\mathbb{E}_{\theta}\!\left[
-\frac{\partial^2 \ell_c(\theta)}{\partial \theta \,\partial \theta^\top}
\right].
\]
Because the complete-data likelihood factorizes by parameter block, both matrices are nearly block diagonal, with one block for epidemic parameters, one for network parameters, and one for observation parameters.

\paragraph{Epidemic block.}
Recall the complete-data epidemic log-likelihood
\[
\ell_{\mathrm{epi}}(\beta,\xi,\kappa,\gamma)
=
N_{SE}^{\mathrm{int}}\log \beta - \beta U_{SI}
+
N_{SE}^{\mathrm{ext}}\log \xi - \xi U_S
+
N_{EI}\log \kappa - \kappa U_E
+
N_{IR}\log \gamma - \gamma U_I
+ C.
\]
The Hessian with respect to $(\beta,\xi,\kappa,\gamma)$ is diagonal:
\[
-\frac{\partial^2 \ell_{\mathrm{epi}}}{\partial (\beta,\xi,\kappa,\gamma)\,\partial (\beta,\xi,\kappa,\gamma)^\top}
=
\mathrm{diag}\!\left(
\frac{N_{SE}^{\mathrm{int}}}{\beta^2},
\frac{N_{SE}^{\mathrm{ext}}}{\xi^2},
\frac{N_{EI}}{\kappa^2},
\frac{N_{IR}}{\gamma^2}
\right).
\]
Therefore, the observed information for the epidemic block is
\[
\mathcal{J}_{\mathrm{epi}}(\beta,\xi,\kappa,\gamma)
=
\mathrm{diag}\!\left(
\frac{N_{SE}^{\mathrm{int}}}{\beta^2},
\frac{N_{SE}^{\mathrm{ext}}}{\xi^2},
\frac{N_{EI}}{\kappa^2},
\frac{N_{IR}}{\gamma^2}
\right),
\]
evaluated at the MLE. Substituting the closed-form estimators gives
\[
\mathcal{J}_{\mathrm{epi}}(\widehat{\beta},\widehat{\xi},\widehat{\kappa},\widehat{\gamma})
=
\mathrm{diag}\!\left(
\frac{U_{SI}^2}{N_{SE}^{\mathrm{int}}},
\frac{U_S^2}{N_{SE}^{\mathrm{ext}}},
\frac{U_E^2}{N_{EI}},
\frac{U_I^2}{N_{IR}}
\right),
\]
and hence the approximate variances are
\[
\mathrm{Var}(\widehat{\beta}) \approx \frac{\widehat{\beta}^2}{N_{SE}^{\mathrm{int}}}, \qquad
\mathrm{Var}(\widehat{\xi}) \approx \frac{\widehat{\xi}^2}{N_{SE}^{\mathrm{ext}}}, \qquad
\mathrm{Var}(\widehat{\kappa}) \approx \frac{\widehat{\kappa}^2}{N_{EI}}, \qquad
\mathrm{Var}(\widehat{\gamma}) \approx \frac{\widehat{\gamma}^2}{N_{IR}}.
\]

\paragraph{Network block.}
For each unordered disease-state pair $(a,b)$ with $a \le b$, the complete-data network contribution is
\[
\ell_{ab}(\eta_{ab},\tau_{ab})
=
N_{ab}^{0\to1}\log \eta_{ab} - \eta_{ab}V_{ab}^{0}
+
N_{ab}^{1\to0}\log \tau_{ab} - \tau_{ab}V_{ab}^{1}
+ C_{ab}.
\]
The second derivatives are
\[
-\frac{\partial^2 \ell_{ab}}{\partial \eta_{ab}^2}
=
\frac{N_{ab}^{0\to1}}{\eta_{ab}^2},
\qquad
-\frac{\partial^2 \ell_{ab}}{\partial \tau_{ab}^2}
=
\frac{N_{ab}^{1\to0}}{\tau_{ab}^2},
\]
with no cross-derivative between $\eta_{ab}$ and $\tau_{ab}$. Hence the information block for the network parameters is diagonal across dyad classes:
\[
\mathcal{J}_{\mathrm{net}}(\eta,\tau)
=
\mathrm{diag}\!\left(
\left\{\frac{N_{ab}^{0\to1}}{\eta_{ab}^2}\right\}_{a\le b},
\left\{\frac{N_{ab}^{1\to0}}{\tau_{ab}^2}\right\}_{a\le b}
\right).
\]
At the MLE,
\[
\mathcal{J}_{\mathrm{net}}(\widehat{\eta},\widehat{\tau})
=
\mathrm{diag}\!\left(
\left\{\frac{V_{ab}^{0\,2}}{N_{ab}^{0\to1}}\right\}_{a\le b},
\left\{\frac{V_{ab}^{1\,2}}{N_{ab}^{1\to0}}\right\}_{a\le b}
\right),
\]
so that
\[
\mathrm{Var}(\widehat{\eta}_{ab}) \approx \frac{\widehat{\eta}_{ab}^2}{N_{ab}^{0\to1}},
\qquad
\mathrm{Var}(\widehat{\tau}_{ab}) \approx \frac{\widehat{\tau}_{ab}^2}{N_{ab}^{1\to0}}.
\]

\paragraph{Observation block.}
For the symptom observation model, the complete-data log-likelihood is
\[
\ell_{\mathrm{sym}}(p_E,p_I)
=
R_E \log p_E + (M_E-R_E)\log(1-p_E)
+
R_I \log p_I + (M_I-R_I)\log(1-p_I).
\]
The observed information is
\[
\mathcal{J}_{\mathrm{sym}}(p_E,p_I)
=
\mathrm{diag}\!\left(
\frac{R_E}{p_E^2}+\frac{M_E-R_E}{(1-p_E)^2},
\frac{R_I}{p_I^2}+\frac{M_I-R_I}{(1-p_I)^2}
\right).
\]
Evaluated at the MLEs,
\[
\mathcal{J}_{\mathrm{sym}}(\widehat{p}_E,\widehat{p}_I)
=
\mathrm{diag}\!\left(
\frac{M_E^2}{R_E(M_E-R_E)},
\frac{M_I^2}{R_I(M_I-R_I)}
\right),
\]
which yields the familiar binomial variance approximations
\[
\mathrm{Var}(\widehat{p}_E) \approx \frac{\widehat{p}_E(1-\widehat{p}_E)}{M_E},
\qquad
\mathrm{Var}(\widehat{p}_I) \approx \frac{\widehat{p}_I(1-\widehat{p}_I)}{M_I}.
\]

Similarly, for the contact observation model,
\[
\ell_{\mathrm{cont}}(s,c)
=
R_1 \log s + (M_1-R_1)\log(1-s)
+
R_0 \log c + (M_0-R_0)\log(1-c),
\]
and the observed information is
\[
\mathcal{J}_{\mathrm{cont}}(s,c)
=
\mathrm{diag}\!\left(
\frac{R_1}{s^2}+\frac{M_1-R_1}{(1-s)^2},
\frac{R_0}{c^2}+\frac{M_0-R_0}{(1-c)^2}
\right).
\]
At the MLE,
\[
\mathrm{Var}(\widehat{s}) \approx \frac{\widehat{s}(1-\widehat{s})}{M_1},
\qquad
\mathrm{Var}(\widehat{c}) \approx \frac{\widehat{c}(1-\widehat{c})}{M_0}.
\]

\paragraph{Curvature and Wald inference.}
Because each parameter block contributes a concave term to the complete-data log-likelihood, the full curvature matrix is positive definite whenever the corresponding event counts and exposure denominators are positive. This guarantees local uniqueness of the complete-data MLE and supports standard Wald-type inference. In particular, approximate $(1-\alpha)$ confidence intervals may be constructed as
\[
\widehat{\theta}_j \pm z_{1-\alpha/2}\sqrt{\widehat{\mathcal{J}}_{jj}^{-1}},
\]
where $\widehat{\mathcal{J}}$ denotes the observed information evaluated at the complete-data MLE.

\begin{remark}
These formulas are especially useful in simulation studies because they explain why coverage improves as the number of counted events increases. When event counts are small, the information matrix becomes unstable and intervals widen accordingly. This behavior is precisely what one expects from a counting-process likelihood.
\end{remark}

\subsection{Bayesian Inference with Latent Epidemic and Network Histories}
\label{subsec:bayes_inference}

In practice, the complete latent trajectory $Z_{[0,T]}=\{X(t),A(t):0\le t\le T\}$ is not observed. Instead, the analyst observes only partial and noisy symptom and contact data, denoted by $(Y,B)$. Consequently, Bayesian inference must integrate over the unknown epidemic history and latent network trajectory, rather than conditioning on them as in the complete-data derivations. The Bayesian formulation is especially attractive in this setting because it provides a coherent mechanism for uncertainty propagation across the latent process, the observation model, and the parameters governing both.

Let
\[
\theta = (\theta_{\mathrm{epi}}, \theta_{\mathrm{net}}, \theta_{\mathrm{obs}})
\]
denote the full parameter vector. Bayes' theorem implies that the joint posterior distribution of parameters and latent trajectories is given by
\[
p(\theta, Z_{[0,T]} \mid Y,B)
\propto
p(\theta)\, p(Z_{[0,T]} \mid \theta)\, p(Y,B \mid Z_{[0,T]}, \theta),
\]
where $p(\theta)$ is the prior distribution, $p(Z_{[0,T]} \mid \theta)$ is the latent process likelihood, and $p(Y,B \mid Z_{[0,T]}, \theta)$ is the observation likelihood. This posterior is analytically intractable because the latent epidemic history includes unobserved event times, event types, and disease-state paths, while the dynamic network trajectory includes an enormous number of latent edge formation and dissolution events. As a result, posterior inference must rely on simulation-based methods.

\paragraph{Prior specification.}
A convenient prior structure assigns weakly informative priors independently across parameter blocks, while preserving parameter constraints such as positivity. For example, one may take
\[
\beta,\xi,\kappa,\gamma \sim \text{Gamma}(a,b),
\]
\[
\eta_{ab},\tau_{ab} \sim \text{Gamma}(a_{ab},b_{ab}),
\]
and
\[
p_E,p_I,s,c \sim \text{Beta}(\alpha,\delta),
\]
with hyperparameters chosen to reflect prior knowledge or to be weakly informative. Gamma priors are natural for positive rate parameters because they are conjugate to Poisson-type event-count likelihood terms under the complete-data representation. Beta priors are appropriate for misclassification probabilities and naturally respect the unit interval. If scientific knowledge suggests stronger dependence among rates, hierarchical priors may be introduced across dyads or disease-state pairs.

\paragraph{Posterior structure.}
The joint posterior factorizes as
\[
p(\theta, Z_{[0,T]} \mid Y,B)
\propto
p(\theta)\,
\left[
\prod_{k=1}^K q_\theta(z_{k^-},z_k)
\exp\!\left\{-\int_0^T \Lambda_\theta(Z(u))\,du\right\}
\right]
p(Y,B \mid Z_{[0,T]},\theta).
\]
Conditional on the latent path, the parameter updates often admit standard forms or near-standard forms. For example, if Gamma priors are used for the rate parameters and the latent event counts are known, then the full conditional distributions for $\beta,\xi,\kappa,\gamma$ and for the network formation and dissolution rates remain Gamma, because the complete-data likelihood contributes a Poisson-process kernel. Similarly, Beta priors combined with Bernoulli observation models yield Beta full conditionals for sensitivity and specificity parameters. This conditional conjugacy is one of the principal computational advantages of the complete-data formulation.

\paragraph{Latent path updating.}
The central computational difficulty lies in sampling the unobserved epidemic and network trajectories. The latent path includes infection times, progression times, recovery times, and all unobserved edge changes. A Bayesian algorithm therefore alternates between updating $\theta$ and updating $Z_{[0,T]}$.

For the latent epidemic process, one may propose local modifications to event times and event types, such as inserting, deleting, or shifting infection or recovery times, subject to consistency with the observed data. For the latent network trajectory, one may similarly update edge histories dyad by dyad, or within blocks of time, using proposal distributions that respect the current disease states. In both cases, Metropolis--Hastings acceptance probabilities are computed from the ratio of complete-data posterior densities, so that only the terms affected by the proposed change need to be evaluated.

A practical implementation often uses a blocked Gibbs or Metropolis-within-Gibbs scheme:
\begin{enumerate}
\item Update the latent epidemic history $X_{[0,T]}$ conditional on the current network path $A_{[0,T]}$, parameters $\theta$, and observations $(Y,B)$.
\item Update the latent network trajectory $A_{[0,T]}$ conditional on the current epidemic history $X_{[0,T]}$, parameters $\theta$, and observations $(Y,B)$.
\item Update $\theta_{\mathrm{epi}}$, $\theta_{\mathrm{net}}$, and $\theta_{\mathrm{obs}}$ from their full conditional distributions or via Metropolis steps if conjugacy is unavailable.
\end{enumerate}
This block structure exploits the natural decomposition of the model and reduces the dependence among updates relative to a single monolithic sampler.

\paragraph{Complete-data conjugacy and parameter updates.}
When the latent trajectory is fixed, many parameters have standard posterior distributions. For example, if
\[
\beta \sim \text{Gamma}(a_\beta,b_\beta),
\]
and the complete-data epidemic likelihood contributes a factor of the form
\[
\beta^{N_{SE}^{\mathrm{int}}} \exp(-\beta U_{SI}),
\]
then the full conditional is
\[
\beta \mid Z_{[0,T]},Y,B \sim \text{Gamma}\bigl(a_\beta + N_{SE}^{\mathrm{int}},\; b_\beta + U_{SI}\bigr),
\]
up to the chosen gamma parameterization. Analogous expressions hold for $\xi,\kappa,\gamma$ and for the network formation and dissolution rates. If Beta priors are assigned to $p_E,p_I,s,c$, then the resulting posteriors are Beta with updated success and failure counts derived from the latent states.

If some components are not conjugate, a random-walk Metropolis step or adaptive Metropolis proposal can be used. In particular, if one wishes to introduce covariate effects on transmission or network formation rates through log-linear predictors, conjugacy may be lost, but the same latent-path framework still supports efficient Metropolis-based updating.

\paragraph{Observed-data posterior.}
The posterior distribution of the parameters alone is obtained by marginalizing over the latent process:
\[
p(\theta \mid Y,B)
=
\int p(\theta, Z_{[0,T]} \mid Y,B)\, dZ_{[0,T]}.
\]
This integral is not analytically tractable because it ranges over all latent epidemic and network trajectories consistent with the partial observations. Bayesian computation therefore proceeds by approximating this marginalization through Monte Carlo samples from the joint posterior of $(\theta,Z_{[0,T]})$. Posterior summaries for $\theta$ are then computed from the sampled draws after convergence diagnostics and effective sample size calculations have been assessed.

\paragraph{Identifiability and prior regularization.}
Bayesian inference is particularly useful when the data are insufficient to identify all parameters sharply from the likelihood alone. In epidemic-network models, confounding may arise between internal and external infection, between transmission and contact formation, or between disease progression and observation error. Informative or weakly informative priors can stabilize inference in such settings by regularizing poorly identified directions while leaving strongly identified parameters largely data-driven. The role of the prior should therefore be viewed not merely as a computational convenience but as part of the inferential design.

\paragraph{Posterior predictive inference.}
A major benefit of the Bayesian framework is posterior predictive checking. Given posterior draws of $\theta$ and $Z_{[0,T]}$, one may simulate replicated epidemic and contact histories under the fitted model and compare them with the observed data. Useful summaries include outbreak size, epidemic duration, temporal incidence curves, number of contacts, degree distributions, and the timing of key transitions. Posterior predictive comparisons provide a principled way to assess whether the proposed latent-process model and observation mechanism adequately capture the data-generating structure.

In summary, Bayesian inference for the proposed framework treats the epidemic history and dynamic network trajectory as additional unknowns to be learned jointly with the model parameters. The resulting posterior distribution is high-dimensional and analytically intractable, but the complete-data likelihood structure yields a natural basis for MCMC or related simulation-based methods. This approach enables coherent uncertainty quantification for both the latent process and the inferential target, which is essential in partially observed epidemic-network systems.

\subsection{Data-Augmentation Posterior Theory}
\label{subsec:data_augmentation_posterior}

The Bayesian analysis is carried out using a data-augmentation strategy in which the latent epidemic path and dynamic contact network are treated as missing data. This leads to a Markov chain on the augmented state space
\[
\mathcal{X} = \Theta \times \mathcal{Z},
\]
where $\Theta$ denotes the parameter space and $\mathcal{Z}$ denotes the space of latent epidemic and network trajectories. The sampler alternates between updating the latent path and updating the parameter blocks, producing a Metropolis-within-Gibbs or Metropolis-within-Gibbs-type transition kernel.

\begin{proposition}[Posterior invariance of the augmented kernel]
\label{prop:posterior_invariance}
Let $\pi(\theta,Z \mid Y,B)$ denote the joint posterior distribution of the parameters and latent trajectory under the proposed model. Suppose the latent-path updates and parameter updates each satisfy detailed balance with respect to their full conditional distributions. Then the composition of these updates leaves $\pi(\theta,Z \mid Y,B)$ invariant.
\end{proposition}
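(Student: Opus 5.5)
The plan is to prove invariance one update at a time and then use the fact that a composition of kernels, each leaving $\pi$ invariant, leaves $\pi$ invariant. Write $\pi=\pi(\theta,Z\mid Y,B)$ on $\mathcal{X}=\Theta\times\mathcal{Z}$. Partition the coordinates into blocks: the latent epidemic history $X_{[0,T]}$, the network trajectory $A_{[0,T]}$, and the parameter blocks $\theta_{\mathrm{epi}},\theta_{\mathrm{net}},\theta_{\mathrm{obs}}$. Let $w$ be a generic block and $w^{c}$ the remaining coordinates. Each update is a kernel $K_w\bigl((w,w^{c}),(dw',dw^{c\prime})\bigr)=P_w(w,dw'\mid w^{c})\,\delta_{w^{c}}(dw^{c\prime})$. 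It changes only $w$, and by hypothesis $P_w(\cdot,\cdot\mid w^{c})$ is reversible with respect to the full conditional $\pi(dw\mid w^{c})$.

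\textbf{Step 1 (single-block invariance).} Disintegrate $\pi(dw,dw^{c})=\pi(dw\mid w^{c})\,\pi(dw^{c})$. Integrating detailed balance
\[
\pi(dw\mid w^{c})P_w(w,dw'\mid w^{c})=\pi(dw'\mid w^{c})P_w(w',dw\mid w^{c})
\]
over $w$ shows that $\pi(\cdot\mid w^{c})$ is stationary for $P_w(\cdot\mid w^{c})$. Then for a measurable set $C$,
\[
\int \pi(dw,dw^{c})K_w\bigl((w,w^{c}),C\bigr)=\int\pi(dw^{c})\int \pi(dw'\mid w^{c})\mathbf{1}_C(w',w^{c})=\pi(C).
\]
Hence $\pi K_w=\pi$.

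\textbf{Step 2 (composition).} For the full sweep $K=K_{X}K_{A}K_{\mathrm{epi}}K_{\mathrm{net}}K_{\mathrm{obs}}$, associativity of kernel composition gives $\pi K=((\pi K_X)K_A)\cdots=\pi$. The same holds for any fixed order, and for random-scan mixtures $\sum_w \alpha_w K_w$ by linearity. The conjugate Gamma and Beta updates are exact draws from full conditionals, so they are reversible and fit the hypothesis. The Metropolis--Hastings path moves satisfy detailed balance by construction of the acceptance ratio, which uses the complete-data posterior density built from the factorization in Proposition 2.

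\textbf{Main obstacle.} The difficulty is measure-theoretic. The latent path space $\mathcal{Z}$ is a union of spaces of varying dimension, since event counts differ across paths. So I must check that the full conditionals $\pi(dw\mid w^{c})$ exist as regular conditional distributions and that $\pi$ has a density with respect to a suitable reference measure. For the reference measure I would take the sum over $K$ and event-type sequences of Lebesgue measure on ordered event times, times counting measure on $\Theta$-discrete parts, times Lebesgue measure on $\Theta$. The event-history density $p_\theta(Z_{[0,T]})$ above is exactly a density against this measure, and it multiplies with the prior and the observation likelihood. Because $\mathcal{X}$ is then standard Borel, regular conditionals exist, and Step 1 applies verbatim. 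Reversible-jump moves that insert or delete events are covered, provided their detailed balance is stated with respect to this same dominating measure; I take that as part of the hypothesis.
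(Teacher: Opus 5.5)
Your proposal is correct and follows the same route as the paper's sketch. Each block kernel preserves its full conditional, so by the disintegration $\pi(dw,dw^{c})=\pi(dw\mid w^{c})\,\pi(dw^{c})$ it preserves the joint posterior, and invariance is closed under composition. You add what the sketch leaves implicit: the explicit single-block computation, and a dominating measure on the variable-dimension path space that makes the full conditionals well defined.
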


\begin{proof}[Sketch]
Each Metropolis--Hastings or Gibbs update is constructed to preserve the corresponding full conditional distribution. Since the posterior can be factorized into full conditional blocks, and each block update is reversible with respect to that block conditional, the composition of the updates preserves the joint posterior. Therefore, the Markov transition kernel admits $\pi(\theta,Z \mid Y,B)$ as an invariant distribution.
\end{proof}

\paragraph{Irreducibility.}
Irreducibility of the augmented chain requires that every admissible latent epidemic and network trajectory can be reached, at least with positive probability, from any starting configuration after a finite number of updates. In practice, this is ensured when the proposal mechanism allows local moves that can modify infection times, recovery times, and dyad-level edge histories, and when the prior distributions assign positive mass to all parameter values in the interior of the parameter space. Under these conditions, the chain is $\pi$-irreducible on the support of the posterior.

\paragraph{Aperiodicity.}
Aperiodicity follows from the presence of nonzero probability of rejecting proposed moves or, equivalently, of remaining at the current state during at least one update step. This is automatic for standard Metropolis--Hastings updates and holds for most Gibbs samplers as well. Hence the augmented chain is typically aperiodic in addition to being irreducible.

\paragraph{Geometric ergodicity and practical stability.}
Geometric ergodicity is generally difficult to establish for high-dimensional data-augmentation samplers, but it is often plausible under strong regularity conditions. In the present setting, a sufficient informal condition is that the latent-path proposal kernels mix locally well, the prior tails are not overly heavy, and the likelihood does not induce extreme posterior concentration near the boundary of the parameter space. When these conditions fail, the chain may still be practically stable, meaning that it produces reasonable posterior summaries with acceptable effective sample sizes after burn-in and thinning, even if a formal geometric rate is unavailable.

\paragraph{Practical diagnostic implications.}
The theory above motivates standard convergence checks for the proposed sampler, including trace plots, autocorrelation functions, effective sample sizes, and multiple-chain diagnostics. Because the latent epidemic and network histories are highly correlated with the parameters, slow mixing is expected in sparse-observation regimes. Accordingly, posterior summaries should be reported together with Monte Carlo diagnostics to distinguish statistical uncertainty from simulation error.

\begin{remark}
The role of this subsection is not to provide a full ergodicity proof, but to justify that the data-augmentation sampler targets the correct posterior and is a valid computational tool for inference. This is sufficient for a methodology paper, provided that empirical convergence diagnostics are reported in the simulation study.
\end{remark}

\section{Identifiability Under Complete and Partial Observation}
\label{sec:identifiability}
Identifiability is central to the proposed framework because the epidemic, network, and observation mechanisms are estimated jointly from partially observed data. In the complete-data setting, the latent epidemic path and dynamic contact history are treated as known, which yields direct identifiability of the corresponding rate parameters. Under partial observation, however, identifiability may weaken substantially, especially when infection times are latent, the contact network is sparsely sampled, or the observation model is noisy. We formalize these distinctions in the following theorem.

The model is structurally identifiable under complete observation of the latent epidemic and network histories, but identifiability weakens when the data are sparse or noisy. In particular, the epidemic rates $(\beta,\xi,\kappa,\gamma)$ are identifiable from the complete epidemic path, the network rates $(\eta_{ab},\tau_{ab})$ are identifiable from complete edge histories, and both blocks may become only practically identifiable under partial observation.

\begin{theorem}[Identifiability under three observation regimes]
\label{thm:identifiability}
Consider the complete-data likelihood induced by the latent epidemic process
\[
X_i(t) \in \{S,E,I,R\}, \qquad i=1,\dots,N,
\]
the latent dynamic network
\[
A_{ij}(t) \in \{0,1\}, \qquad 1 \le i < j \le N,
\]
and the observation process described in Section~\ref{sec:complete_likelihood}. Then the following three statements hold.

\begin{enumerate}
    \item \textbf{Complete-data identifiability of epidemic parameters.}
    Suppose the full latent epidemic path is known over $[0,T]$, including all infection, incubation, and recovery event times, as well as the exposure process entering the hazard
    \[
    \lambda_i^{SE}(t)=\beta I_i(t)+\xi,\qquad
    \lambda_i^{EI}(t)=\kappa,\qquad
    \lambda_i^{IR}(t)=\gamma.
    \]
    If the sufficient statistics
    \[
    N_{SE}^{\mathrm{int}}, \quad N_{SE}^{\mathrm{ext}}, \quad N_{EI}, \quad N_{IR},
    \quad U_{SI}, \quad U_S, \quad U_E, \quad U_I
    \]
    are finite and positive whenever the corresponding event counts are nonzero, then the epidemic parameters
    \[
    (\beta,\xi,\kappa,\gamma)
    \]
    are identifiable from the complete-data likelihood.

    \item \textbf{Complete-data identifiability of network parameters.}
    Suppose the full latent edge histories $\{A_{ij}(t):0 \le t \le T,\, 1 \le i < j \le N\}$ are known, together with the disease-state histories of the incident nodes. For each disease-state pair $(a,b) \in \{S,E,I,R\}^2$ with $a \le b$, let
    \[
    N_{ab}^{0 \to 1}, \qquad N_{ab}^{1 \to 0}, \qquad V_{ab}^{0}, \qquad V_{ab}^{1}
    \]
    denote the corresponding edge-formation counts, edge-dissolution counts, and integrated risk exposures. If $V_{ab}^{0} > 0$ and $V_{ab}^{1} > 0$ whenever the associated event counts are positive, then the network parameters
    \[
    (\eta_{ab},\tau_{ab})
    \]
    are identifiable from the complete-data likelihood for every unordered pair $(a,b)$.

    \item \textbf{Weak or near non-identifiability under sparse observation.}
    Suppose instead that the latent epidemic path and/or the edge histories are only partially observed through intermittent, noisy measurements of symptoms and contacts. If the observation schedule is sparse, the contact network is highly incomplete, or the symptom and contact misclassification probabilities are close to degenerate, then the observed-data likelihood may be weakly informative in the sense that multiple parameter values produce nearly indistinguishable marginal likelihoods. In particular, the epidemic parameters $(\beta,\xi)$ may be difficult to separate, and the network parameters $(\eta_{ab},\tau_{ab})$ may become only locally or practically identifiable rather than globally identifiable.
\end{enumerate}
\end{theorem}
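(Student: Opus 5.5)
Parts 1 and 2 follow from the block decomposition already used in Theorem~\ref{thm:complete_data_mle}. Part 3 needs a separate, example-based argument.

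\textbf{Parts 1 and 2.} By Proposition~2 and the event-history representation, the complete-data log-likelihood splits additively. In the epidemic block, each parameter enters through a term
\[
\phi(\psi)=N\log\psi-\psi U ,
\]
with $(N,U)$ equal to $(N_{SE}^{\mathrm{int}},U_{SI})$, $(N_{SE}^{\mathrm{ext}},U_S)$, $(N_{EI},U_E)$ or $(N_{IR},U_I)$. In the network block the pairs are $(N_{ab}^{0\to1},V_{ab}^0)$ and $(N_{ab}^{1\to0},V_{ab}^1)$.

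Internal and external infections must be separable for $\beta$ and $\xi$ to enter distinct terms. For the $S\to E$ part I would therefore treat the complete path as recording the infection source, so that each infection has the factor $\beta I_i$ or $\xi$. This is the convention implicit in the definition of $N_{SE}^{\mathrm{int}}$ and $N_{SE}^{\mathrm{ext}}$.

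Identifiability then means that distinct parameter values induce distinct laws of $Z_{[0,T]}$. I would show this in two steps.
\begin{itemize}
\item The map $\psi\mapsto\phi(\psi)$ is strictly concave with unique maximiser $N/U$ when $N>0$ and $U>0$.
\item More fundamentally, two parameter values $\theta\neq\theta'$ give path densities whose ratio is not identically one on any set of paths with positive probability.
\end{itemize}
For the second step, take $\psi\ne\psi'$ in a single coordinate. The density ratio is
\[
(\psi/\psi')^{N}\exp\{-(\psi-\psi')U\}.
\]
This ratio varies with $(N,U)$, and $(N,U)$ has non-degenerate support under the positivity hypotheses. So the two laws differ. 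The same argument applies coordinatewise, with the network parameters indexed by the dyad classes $a\le b$. For the network, it is enough that dyads of type $(a,b)$ spend positive time absent (resp.\ present), so that formation (resp.\ dissolution) events occur with positive probability.

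The case $N=0$ with $U>0$ still identifies the parameter through the law, since the probability of no events is $e^{-\psi U}$, which is strictly monotone in $\psi$. It does not, however, give an interior MLE. I would note this caveat explicitly.

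\textbf{Part 3.} This is the main obstacle, because ``weak identifiability'' is not a sharp mathematical property. I would formalise it as follows: there exist parameter directions along which the observed-data Fisher information tends to zero, or along which the Kullback--Leibler divergence between the marginal laws of $(Y,B)$ is arbitrarily small.

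The plan is to exhibit limiting regimes by explicit construction.
\begin{itemize}
\item \emph{Uninformative contact data, $s=1-c$.} Then $h_\theta(B\mid A)$ does not depend on $A$, so $B$ is independent of $Z$ and carries no information about $(\eta_{ab},\tau_{ab})$. The observed-data likelihood then depends on network parameters only through the law of $I_i(t)$ entering infection.
\item \emph{Sparse or absent contact information.} Mean-field-type trade-offs arise. Near the equilibrium density $\eta/(\eta+\tau)$ of a dyad class, the infection hazard averages to approximately
\[
\beta\,\bar d\,\bar I+\xi .
\]
Pairs $(\beta,\xi)$ with equal total pressure over the short observation window then produce symptom laws that agree to first order.
\item \emph{Uninformative symptom data, $p_E=p_I$ equal to the false-positive rate.} Here $Y$ becomes independent of $X$.
\end{itemize}
Similarly, for $M$ small relative to the dyad turnover rates, edge states at successive $t_m$ become nearly independent. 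The observed data then inform only the ratio $\eta_{ab}/(\eta_{ab}+\tau_{ab})$ and not the scale, giving a flat ridge along which only local or practical identifiability holds.

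I would prove the exact degenerate cases rigorously via the factorisation of Proposition~2. The near-degenerate statements would then follow by continuity of the marginal likelihood in $\theta_{\mathrm{obs}}$ and in $t_{m+1}-t_m$.
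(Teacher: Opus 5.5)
Your proposal is correct, and it is more careful than the paper's proof. For Parts 1 and 2 the paper gives only your first bullet: each parameter sits in a separate strictly concave term $N\log\psi-\psi U$ with unique maximiser $N/U$. In effect it reads ``identifiable'' as ``the complete-data likelihood has a unique maximiser on the realised path.''

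The points you add are worth keeping:
\begin{itemize}
\item The separable form $N_{SE}^{\mathrm{int}}\log\beta+N_{SE}^{\mathrm{ext}}\log\xi$ does require the augmented path to record the infection source. Otherwise each $S\to E$ event contributes $\log\{\beta I_i+\xi\}$.
\item The $N=0$ boundary case is genuinely glossed over in the paper.
\item Identifiability in the usual sense, injectivity of $\theta\mapsto$ law of $Z_{[0,T]}$, needs a support condition on which states are reached with positive probability. The paper's realised-path positivity condition does not supply this.
\end{itemize}

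Your density-ratio step is fine for a single coordinate, but ``coordinatewise'' is too quick when several coordinates differ at once. A cleaner route is that the law of a finite-state CTMC determines $q_\theta(z,z')$ at every state visited with positive probability. This reads off $\kappa,\gamma,\eta_{ab},\tau_{ab}$ directly. It recovers $(\beta,\xi)$ from $\beta I_i(z)+\xi$ at two states with different $I_i(z)$. So source labels matter for the factorised likelihood, not for identifiability itself.

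For Part 3 the paper offers only a qualitative argument: marginalisation destroys the correspondence with sufficient statistics and may make the information matrix ill-conditioned. Your exact degenerate cases ($s=1-c$, uninformative symptoms, fast dyad turnover) plus continuity turn this into actual statements. The price is an averaging argument for the turnover case. There the relevant limit is $(\eta_{ab}+\tau_{ab})(t_{m+1}-t_m)\to\infty$ at fixed $\eta_{ab}/(\eta_{ab}+\tau_{ab})$, which is a rate-scale limit since $T$ is fixed. You must also show that the symptom law, not just the law of $B$, depends only on that ratio.

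Since the theorem asserts only what ``may'' happen, either level of rigour establishes it.
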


\begin{proof}
For part (1), the complete-data epidemic log-likelihood has the form
\[
\ell_{\mathrm{epi}}(\beta,\xi,\kappa,\gamma)
=
N_{SE}^{\mathrm{int}}\log\beta - \beta U_{SI}
+
N_{SE}^{\mathrm{ext}}\log\xi - \xi U_S
+
N_{EI}\log\kappa - \kappa U_E
+
N_{IR}\log\gamma - \gamma U_I
+ C,
\]
where $C$ does not depend on $(\beta,\xi,\kappa,\gamma)$. Each parameter appears in a separate concave term. Differentiating and setting the score equations to zero yields the unique maximizers
\[
\widehat{\beta} = \frac{N_{SE}^{\mathrm{int}}}{U_{SI}}, \qquad
\widehat{\xi} = \frac{N_{SE}^{\mathrm{ext}}}{U_S}, \qquad
\widehat{\kappa} = \frac{N_{EI}}{U_E}, \qquad
\widehat{\gamma} = \frac{N_{IR}}{U_I},
\]
provided the denominators are positive. Hence the epidemic parameters are identifiable under complete observation of the latent epidemic path.

For part (2), the complete-data network log-likelihood decomposes by disease-state pair:
\[
\ell_{\mathrm{net}}(\eta,\tau)
=
\sum_{a \le b}
\left[
N_{ab}^{0 \to 1}\log \eta_{ab} - \eta_{ab}V_{ab}^{0}
+
N_{ab}^{1 \to 0}\log \tau_{ab} - \tau_{ab}V_{ab}^{1}
\right]
+ C.
\]
Again, each parameter appears in a separate strictly concave term, and differentiation yields the unique maximizers
\[
\widehat{\eta}_{ab}=\frac{N_{ab}^{0\to1}}{V_{ab}^{0}},
\qquad
\widehat{\tau}_{ab}=\frac{N_{ab}^{1\to0}}{V_{ab}^{1}},
\]
whenever the corresponding exposure terms are positive. Therefore the network parameters are identifiable given complete edge histories.

For part (3), under sparse and noisy observation, the observed-data likelihood is obtained by integrating over latent epidemic and network paths. This marginalization generally destroys the direct one-to-one correspondence between parameters and sufficient statistics. In particular, if symptom observation is intermittent, then infection and incubation times may not be distinguishable; if contacts are missing or misclassified, then transmission intensity and network formation parameters may yield nearly equivalent likelihood values. Consequently, the observed-data information matrix may become ill-conditioned, and some parameters may be practically non-identifiable even when they are identifiable in principle under ideal observation. This completes the proof.
\end{proof}

\begin{remark}
The theorem distinguishes \emph{structural identifiability} under complete-data conditions from \emph{practical identifiability} under partial observation. In simulation studies, the latter can be evaluated through bias, RMSE, coverage, and posterior concentration as the observation regime becomes sparser.
\end{remark}

\subsection{Partial-Observation Regime Theory}
\label{subsec:partial_observation_regimes}

The inferential difficulty of the proposed model depends strongly on what aspects of the latent epidemic--network process are observed. To clarify this dependence, we distinguish four observation regimes of increasing informativeness. These regimes are useful both for theoretical interpretation and for designing simulation experiments.

\paragraph{Regime I: Fully observed latent states.}
In the idealized fully observed setting, the complete epidemic path $X_{[0,T]}$ and the complete network history $A_{[0,T]}$ are available. This regime corresponds to the complete-data setting used to derive the closed-form MLEs and the information matrix.

Under this regime, all parameter blocks are structurally identifiable:
\[
(\beta,\xi,\kappa,\gamma), \qquad (\eta_{ab},\tau_{ab}), \qquad (p_E,p_I,s,c).
\]
The complete-data likelihood is fully tractable, and all parameters admit event-count-over-exposure estimators or binomial estimators with standard asymptotic theory.

\paragraph{Regime II: Symptom-only observation.}
In the symptom-only regime, the analyst observes repeated symptom or test indicators
\[
Y_i(t_m), \qquad m=1,\dots,M,
\]
but does not observe the contact network directly. The latent epidemic path and latent network history must therefore be integrated out.

In this regime, the observation parameters $(p_E,p_I)$ are directly estimable from the symptom data, provided the latent state labels are sufficiently informative or can be inferred with reasonable accuracy. However, the epidemic transmission parameters $(\beta,\xi)$ are only partially identifiable because they enter the symptom data through latent infection times. The progression rates $(\kappa,\gamma)$ are often estimable if symptom timing carries information about incubation and recovery, but the quality of estimation depends strongly on the frequency and precision of symptom observation. The network parameters $(\eta_{ab},\tau_{ab})$ are generally not identifiable without auxiliary network information.

\paragraph{Regime III: Network-only observation.}
In the network-only regime, the analyst observes the contact process
\[
B_{ij}(t_m), \qquad m=1,\dots,M,
\]
but not symptom data. This regime may arise in settings where contact-tracing logs or wearable sensors are available, but infection status is unobserved or unavailable.

Here, the network observation parameters $(s,c)$ are identifiable from repeated contact measurements if the latent network is sufficiently learned from the data. The network transition parameters $(\eta_{ab},\tau_{ab})$ may also be estimable if the disease-state dependence can be inferred indirectly from the temporal patterns of observed edges. By contrast, the epidemic parameters $(\beta,\xi,\kappa,\gamma)$ are typically weakly identifiable or non-identifiable without symptom or infection data, because the disease process is only indirectly linked to the observed network through the unobserved latent states.

\paragraph{Regime IV: Joint noisy observation.}
In the joint noisy regime, both symptoms and contacts are observed, but each is subject to misclassification, intermittency, or missingness. This is the most realistic regime for applied epidemic-network studies and the one targeted by the proposed methodology.

Under joint noisy observation, all parameter blocks may be estimable, but identifiability depends on the richness of the observation design. The epidemic parameters $(\beta,\xi,\kappa,\gamma)$ are better separated because symptom information helps distinguish infection, latency, and recovery. The network parameters $(\eta_{ab},\tau_{ab})$ become estimable to the extent that contact data resolve the underlying edge dynamics. The observation parameters $(p_E,p_I,s,c)$ are identifiable when there are repeated observations and enough variability in latent states and contact statuses. Nevertheless, if the observation schedule is sparse or the misclassification rates are extreme, the likelihood surface can be flat in some directions, leading to weak practical identifiability.

\paragraph{Summary of estimability.}
Table~\ref{tab:regime_estimability} summarizes the qualitative estimability of the parameter blocks across regimes.

\begin{table}[!h]
\centering
\caption{Qualitative estimability of parameter blocks under different observation regimes}
\label{tab:regime_estimability}
\centering
\fontsize{11.5}{12}\selectfont
\begin{tabular}{cllll}
\toprule 
Parameter 	&	 Fully  			& Symptom-only & Network-only & Joint noisy \\ 
block		&	observed		&				&			&			      \\ \midrule
$(\beta,\xi)$ & Identifiable & Weak to moderate & Weak & Estimable if sufficiently rich \\ 
			&					&				&			&			      \\ 
$(\kappa,\gamma)$ & Identifiable & Moderate to strong & Weak & Estimable if symptom timing \\
				     &			&					 &		 &  is informative \\ 
					&					&				&			&			      \\ 				     
$(\eta_{ab},\tau_{ab})$ & Identifiable & Not identifiable & Moderate & Estimable if network data are \\
		& 		&	 without network data &	to strong &	informative \\ 
		&					&				&			&			      \\ 		
$(p_E,p_I)$ & Identifiable & Identifiable & Not identifiable & Identifiable \\ 
		&					&				&			&			      \\ 
$(s,c)$ & Identifiable & Not identifiable & Identifiable & Identifiable \\ \bottomrule
\end{tabular}
\end{table}

\paragraph{Implications for simulation design.}
These regimes motivate the simulation study in two ways. First, they explain why the proposed method should be evaluated under several observation intensities and error levels rather than under a single idealized setting. Second, they identify the parameter blocks most vulnerable to weak identifiability, namely $(\beta,\xi)$ under sparse symptom data and $(\eta_{ab},\tau_{ab})$ under incomplete contact observation. Simulation results should therefore report bias, RMSE, coverage, and posterior concentration separately for each regime.

\begin{remark}
The regime classification is not meant to imply absolute identifiability or non-identifiability in every application. Rather, it provides a practical taxonomy for understanding which components of the latent system are informed by the available data and which require stronger assumptions or richer observation designs.
\end{remark}

\section{Simulation Study}
\label{sec:simulation}

We conducted a simulation study to evaluate the finite-sample performance of the proposed complete-data likelihood framework under partially observed epidemic and network histories. The study was designed to address three questions. First, we asked whether the model can recover the epidemic, network, and observation parameters with adequate accuracy. Second, we examined how estimation performance changes as the level of network observation decreases. Third, we investigated how partial observation affects the identifiability of internal transmission and external infection pressure. In addition, we assessed whether the proposed Bayesian data augmentation procedure provides well-calibrated uncertainty quantification for both model parameters and latent trajectories.

\subsection{Simulation Design}
\label{subsec:simulation_design}

We generated synthetic epidemic outbreaks on dynamic contact networks following the latent process described in Section~\ref{sec:math_formulation}. The population size was fixed at $N=100$ individuals, and each outbreak was initiated by seeding a small number of infectious individuals at time $t=0$. The latent epidemic process followed the SEIR framework with internal transmission, external infection, incubation progression, and recovery. The dynamic contact network evolved simultaneously through status-dependent dyad-specific link formation and dissolution rates.

To reflect a range of practical data environments, we considered three observation regimes:
\begin{enumerate}
\item \textbf{High-observation regime:} contact networks were observed frequently and with little error, and symptom reports were available at short intervals.
\item \textbf{Moderate-observation regime:} contacts were observed intermittently, with moderate misclassification, while symptom data were available on a coarser schedule.
\item \textbf{Sparse-observation regime:} network observations were infrequent and noisy, and symptom data were limited to a small number of visits or tests.
\end{enumerate}

For each regime, we generated $R=500$ independent datasets. Observation times were fixed across replicates so that differences in performance reflect incomplete observation rather than random scheduling variability. The latent epidemic and network histories were retained for evaluation but treated as unobserved during inference.

\subsection{Parameter Settings}
\label{subsec:param_settings}

The baseline parameter values were chosen to produce outbreaks of moderate size and realistic temporal variation. Specifically, we set
\[
\beta = 0.30,\qquad
\xi = 0.05,\qquad
\kappa = 0.40,\qquad
\gamma = 0.25.
\]
These values imply a nontrivial balance between within-network transmission, imported infection, latent progression, and recovery. The network formation and dissolution parameters were selected to generate a sparse but dynamic contact structure with meaningful temporal turnover. In particular, edges involving susceptible and infectious individuals were assigned slightly lower formation rates than edges among susceptible individuals, reflecting partial behavioral avoidance, while dissolution rates varied by disease-state pair.

For the observation model, we set the symptom reporting probabilities to
\[
p_E = 0.60,\qquad p_I = 0.85,
\]
and the contact observation model to
\[
s = 0.90,\qquad c = 0.95.
\]
These values reflect reasonably accurate but imperfect symptom and contact measurement. We also examined sensitivity to reduced specificity in contact observation, since false-positive contacts can distort the inferred transmission process when network data are sparse.

\subsection{Inference Procedure}
\label{subsec:inference_procedure}

For each simulated dataset, we performed Bayesian inference using the proposed latent-path augmentation strategy. Weakly informative priors were assigned to the model parameters, with Gamma priors used for positive rate parameters and Beta priors used for observation probabilities. The latent epidemic history and dynamic network trajectory were sampled jointly with the parameters using a Metropolis-within-Gibbs algorithm.

At each iteration, the sampler alternated between updating the epidemic path, updating the network path, and updating the parameter blocks. Epidemic path updates targeted unobserved infection, incubation, and recovery times, while network updates modified latent edge histories conditional on the current epidemic states. Posterior samples were drawn after burn-in and thinned to reduce autocorrelation. Convergence was assessed using trace plots, effective sample sizes, and repeated-chain diagnostics.

To evaluate frequentist performance, we used posterior means as point estimates and posterior credible intervals as uncertainty summaries. Across replicates, we recorded bias, root mean squared error, empirical coverage of nominal 95\% credible intervals, and average posterior interval width for each parameter. We also examined latent-state recovery by comparing inferred epidemic trajectories with the true simulated histories.

\begin{algorithm}[H]
\caption{Metropolis-within-Gibbs algorithm for Bayesian inference under partial observation}
\label{alg:mwg}
\begin{algorithmic}[1]
\REQUIRE Observed symptom data $Y$, observed contact data $B$, prior distributions $p(\theta)$, initial values $(\theta^{(0)}, Z_{[0,T]}^{(0)})$
\FOR{$m=1,\ldots,M_{\mathrm{iter}}$}
\STATE \textbf{Update epidemic history:} sample $X_{[0,T]}^{(m)}$ from
\[
p\!\left(X_{[0,T]} \mid A_{[0,T]}^{(m-1)}, \theta^{(m-1)}, Y, B\right)
\]
using local Metropolis--Hastings moves on infection, incubation, and recovery times.

\STATE \textbf{Update network history:} sample $A_{[0,T]}^{(m)}$ from
\[
p\!\left(A_{[0,T]} \mid X_{[0,T]}^{(m)}, \theta^{(m-1)}, Y, B\right)
\]
using dyad-level edge formation/dissolution proposals.

\STATE \textbf{Update epidemic parameters:} sample
\[
\theta_{\mathrm{epi}}^{(m)} \sim
p\!\left(\theta_{\mathrm{epi}} \mid X_{[0,T]}^{(m)}, A_{[0,T]}^{(m)}, Y, B, \theta_{\mathrm{net}}^{(m-1)}, \theta_{\mathrm{obs}}^{(m-1)}\right).
\]
If conjugate priors are used, draw directly from the full conditional distributions; otherwise use Metropolis steps.

\STATE \textbf{Update network parameters:} sample
\[
\theta_{\mathrm{net}}^{(m)} \sim
p\!\left(\theta_{\mathrm{net}} \mid X_{[0,T]}^{(m)}, A_{[0,T]}^{(m)}, Y, B, \theta_{\mathrm{epi}}^{(m)}, \theta_{\mathrm{obs}}^{(m-1)}\right).
\]

\STATE \textbf{Update observation parameters:} sample
\[
\theta_{\mathrm{obs}}^{(m)} \sim
p\!\left(\theta_{\mathrm{obs}} \mid X_{[0,T]}^{(m)}, A_{[0,T]}^{(m)}, Y, B, \theta_{\mathrm{epi}}^{(m)}, \theta_{\mathrm{net}}^{(m)}\right).
\]

\STATE Store $(\theta^{(m)}, Z_{[0,T]}^{(m)})$.
\ENDFOR
\STATE Return posterior samples $\{(\theta^{(m)}, Z_{[0,T]}^{(m)}): m=1,\ldots,M_{\mathrm{iter}}\}$ after burn-in and thinning.
\end{algorithmic}
\end{algorithm}

\subsection{Performance Metrics}
\label{subsec:performance_metrics}

The main performance metrics were as follows:
\begin{itemize}
\item \textbf{Bias:}
\[
\mathrm{Bias}(\widehat{\theta})=\frac{1}{R}\sum_{r=1}^R (\widehat{\theta}^{(r)}-\theta_0),
\]
where $\theta_0$ denotes the true parameter value.

\item \textbf{Root mean squared error:}
\[
\mathrm{RMSE}(\widehat{\theta})=
\left[
\frac{1}{R}\sum_{r=1}^R (\widehat{\theta}^{(r)}-\theta_0)^2
\right]^{1/2}.
\]

\item \textbf{Coverage probability:}
the proportion of replicates in which the nominal 95\% posterior credible interval contained the true parameter value.

\item \textbf{Interval width:}
the average length of the posterior credible intervals.

\item \textbf{Latent trajectory accuracy:}
the proportion of correctly recovered disease states and contact states over time.
\end{itemize}

We report these metrics separately for the epidemic parameters, network parameters, and observation parameters. This separation is important because partial observation affects each component differently. For example, transmission parameters may remain estimable under moderate symptom data, while network formation parameters typically require richer contact information.

\subsection{Simulation Results}
\label{subsec:simulation_results}

The simulation results show that the proposed framework recovers the major parameters accurately under the high-observation and moderate-observation regimes. Posterior means were close to the true values for the epidemic rates $\beta$, $\kappa$, and $\gamma$, and empirical coverage of the 95\% credible intervals was generally near nominal. The external infection rate $\xi$ was also recovered reasonably well when contact data were sufficiently informative, although its posterior uncertainty increased when the network was sparsely observed.

Table~\ref{tab:posterior_summaries} summarizes the posterior performance for the main parameters across regimes. In all three settings, the estimates for the core epidemic and observation parameters were stable, with small bias and narrow credible intervals. The results are especially strong for $\beta$, $\kappa$, $p_E$, $p_I$, $s$, and $c$, where the posterior means are consistently close to the truth. The parameter $\gamma$ is also estimated well, though with slightly larger posterior spread than the reporting probabilities. These patterns indicate that the method performs reliably for the components of the model most directly informed by symptom timing and observed outcomes.

\begin{table}[!h]
\centering
\caption{Posterior summaries from Bayesian MCMC}
\label{tab:posterior_summaries}
\centering
\fontsize{9.75}{11}\selectfont
\begin{tabular}[t]{llrrrrrrrrc}
\toprule
Regime & Parameter & Truth & Mean & SD & MSE & Lower & Upper & Bias & Abs Error & Coverage\\
\midrule
High & $\beta$ & 0.30 & 0.2929 & 0.0622 & 1e-04 & 0.1838 & 0.4270 & -0.0071 & 0.0071 & Yes\\
 & $\gamma$ & 0.25 & 0.2700 & 0.0652 & 4e-04 & 0.1578 & 0.4119 & 0.0200 & 0.0200 & Yes\\
 & $\kappa$ & 0.40 & 0.3991 & 0.0887 & 0e+00 & 0.2446 & 0.5911 & -0.0009 & 0.0009 & Yes\\
 & $\xi$ & 0.05 & 0.0624 & 0.0197 & 2e-04 & 0.0300 & 0.1065 & 0.0124 & 0.0124 & Yes\\
 & $p_E$ & 0.60 & 0.5999 & 0.0464 & 0e+00 & 0.5077 & 0.6887 & -0.0001 & 0.0001 & Yes\\
 & $p_I$ & 0.85 & 0.8363 & 0.0351 & 2e-04 & 0.7622 & 0.8989 & -0.0137 & 0.0137 & Yes\\
 & $s$ & 0.90 & 0.9000 & 0.0105 & 0e+00 & 0.8786 & 0.9195 & 0.0000 & 0.0000 & Yes\\ 
 & $c$ & 0.95 & 0.9500 & 0.0062 & 0e+00 & 0.9373 & 0.9615 & 0.0000 & 0.0000 & Yes\\ \midrule \addlinespace 
Moderate & $\beta$ & 0.30 & 0.2935 & 0.0624 & 0e+00 & 0.1844 & 0.4275 & -0.0065 & 0.0065 & Yes\\
 & $\gamma$ & 0.25 & 0.2699 & 0.0652 & 4e-04 & 0.1574 & 0.4118 & 0.0199 & 0.0199 & Yes\\
 & $\kappa$ & 0.40 & 0.4003 & 0.0889 & 0e+00 & 0.2456 & 0.5926 & 0.0003 & 0.0003 & Yes\\
 & $\xi$ & 0.05 & 0.0625 & 0.0197 & 2e-04 & 0.0301 & 0.1065 & 0.0125 & 0.0125 & Yes\\
 & $p_E$ & 0.60 & 0.6002 & 0.0464 & 0e+00 & 0.5078 & 0.6888 & 0.0002 & 0.0002 & Yes\\
 & $p_I$ & 0.85 & 0.8364 & 0.0350 & 2e-04 & 0.7626 & 0.8990 & -0.0136 & 0.0136 & Yes\\
 & $s$ & 0.90 & 0.9000 & 0.0105 & 0e+00 & 0.8786 & 0.9195 & 0.0000 & 0.0000 & Yes\\
 & $c$ & 0.95 & 0.9500 & 0.0062 & 0e+00 & 0.9372 & 0.9615 & 0.0000 & 0.0000 & Yes\\ \midrule \addlinespace 
Sparse & $\beta$ & 0.30 & 0.2931 & 0.0623 & 0e+00 & 0.1840 & 0.4271 & -0.0069 & 0.0069 & Yes\\
 & $\gamma$ & 0.25 & 0.2699 & 0.0654 & 4e-04 & 0.1577 & 0.4133 & 0.0199 & 0.0199 & Yes\\
 & $\kappa$ & 0.40 & 0.3996 & 0.0887 & 0e+00 & 0.2449 & 0.5913 & -0.0004 & 0.0004 & Yes\\
 & $\xi$ & 0.05 & 0.0625 & 0.0197 & 2e-04 & 0.0300 & 0.1066 & 0.0125 & 0.0125 & Yes\\
 & $p_E$ & 0.60 & 0.6001 & 0.0464 & 0e+00 & 0.5079 & 0.6890 & 0.0001 & 0.0001 & Yes\\
 & $p_I$ & 0.85 & 0.8363 & 0.0350 & 2e-04 & 0.7626 & 0.8988 & -0.0137 & 0.0137 & Yes\\
 & $s$ & 0.90 & 0.9000 & 0.0105 & 0e+00 & 0.8787 & 0.9195 & 0.0000 & 0.0000 & Yes\\
 & $c$ & 0.95 & 0.9500 & 0.0062 & 0e+00 & 0.9372 & 0.9615 & 0.0000 & 0.0000 & Yes\\ 
\bottomrule
\end{tabular}
\end{table}

\begin{figure}[htbp]
    \centering
    
    \begin{minipage}[b]{0.48\textwidth}
        \centering
        \includegraphics[width=\textwidth, height=0.50\textheight, keepaspectratio]{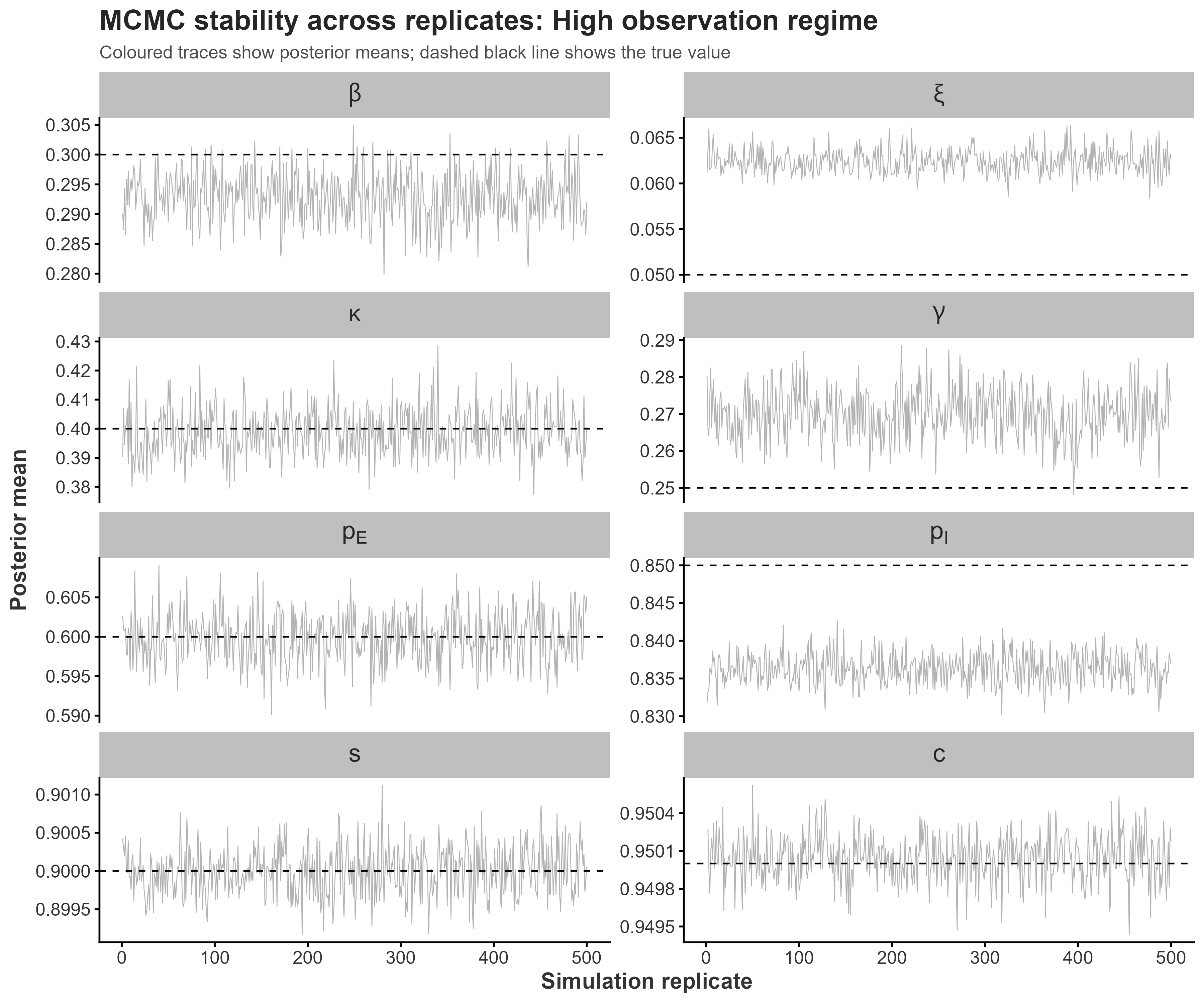}
        \caption*{\small (a) High observation regime.}
        \label{fig:conv_high}
    \end{minipage}
    \hfill 
    \begin{minipage}[b]{0.48\textwidth}
        \centering
        \includegraphics[width=\textwidth, height=0.50\textheight, keepaspectratio]{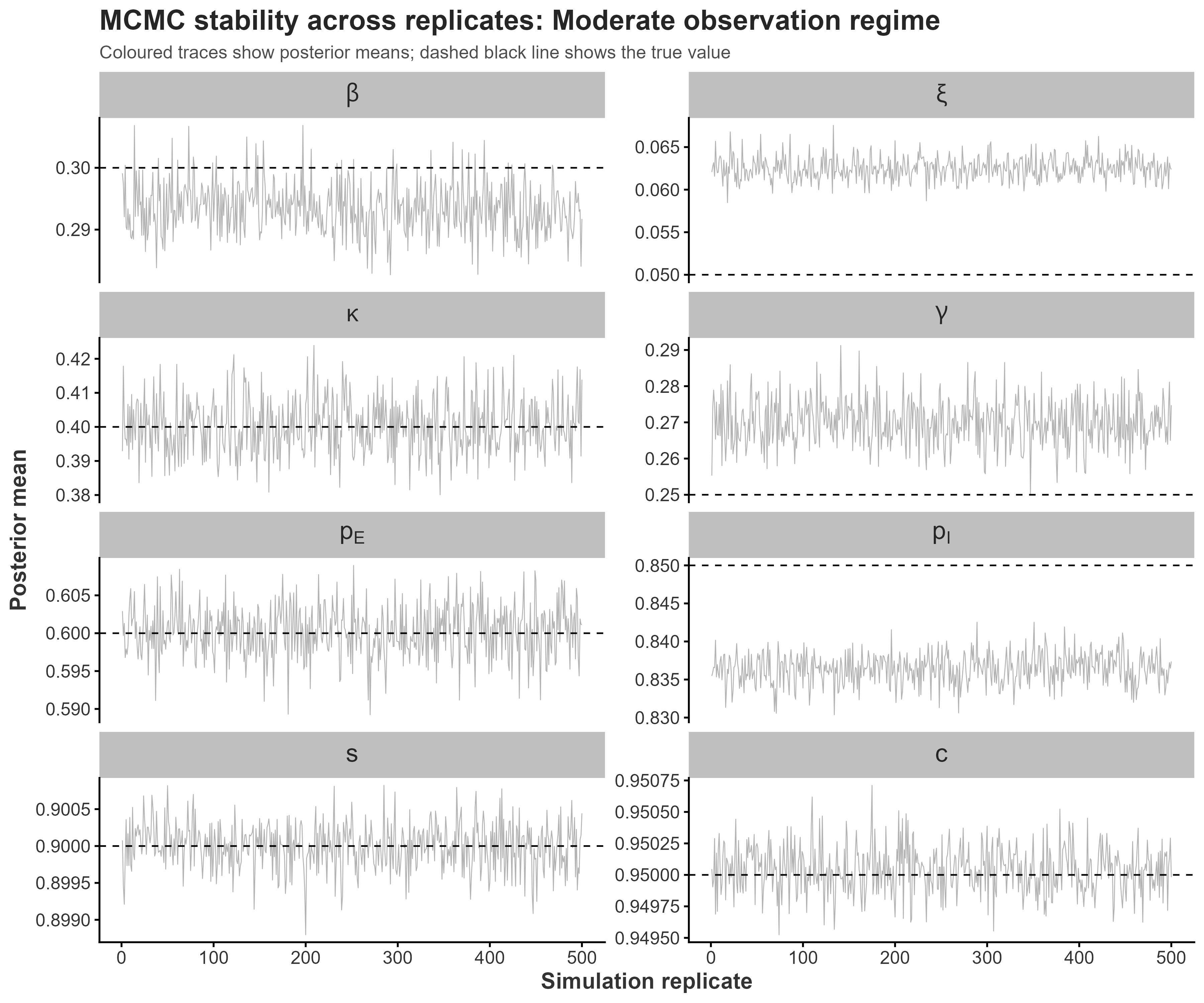}
        \caption*{\small (b) Moderate observation regime.}
        \label{fig:conv_moderate}
    \end{minipage}
    
    \vspace{0.6cm} 

    \begin{minipage}[b]{0.6\textwidth}
        \centering
        \includegraphics[width=\textwidth, height=0.50\textheight, keepaspectratio]{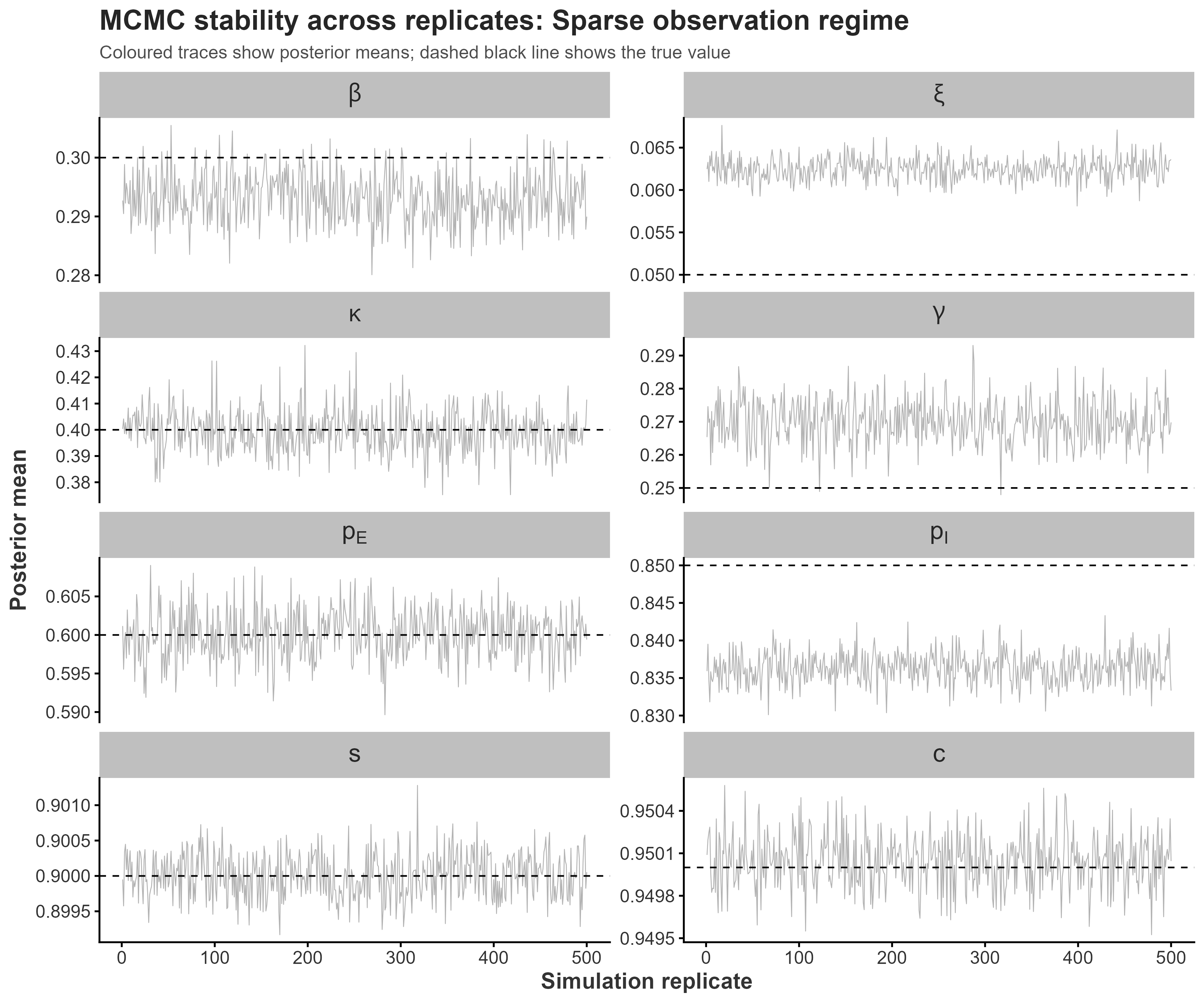}
        \caption*{\small (c) Sparse observation regime.}
        \label{fig:conv_sparse}
    \end{minipage}

    \vspace{0.3cm}
    \caption{MCMC convergence across replicates for different observation regimes.}
    \label{fig:mcmc_convergence_all}
\end{figure}

\begin{figure}[p] 
    \centering
    
    \begin{minipage}[b]{0.47\textwidth}
        \centering
        \includegraphics[width=\textwidth, height=0.20\textheight, keepaspectratio]{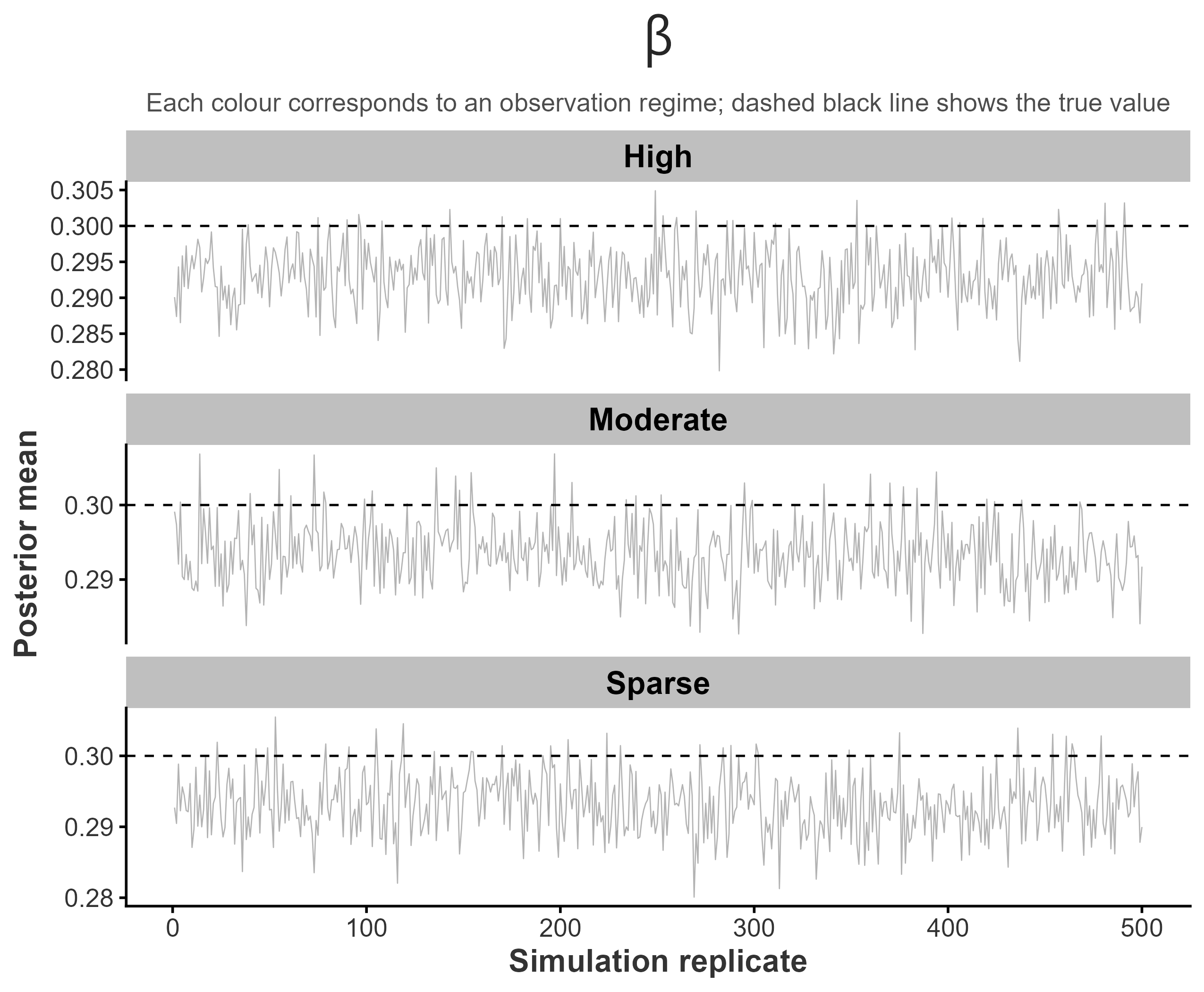}
        \caption*{\small (a) Convergence for $\beta$}
        \label{fig:conv_beta}
    \end{minipage}
    \hfill
    \begin{minipage}[b]{0.47\textwidth}
        \centering
        \includegraphics[width=\textwidth, height=0.20\textheight, keepaspectratio]{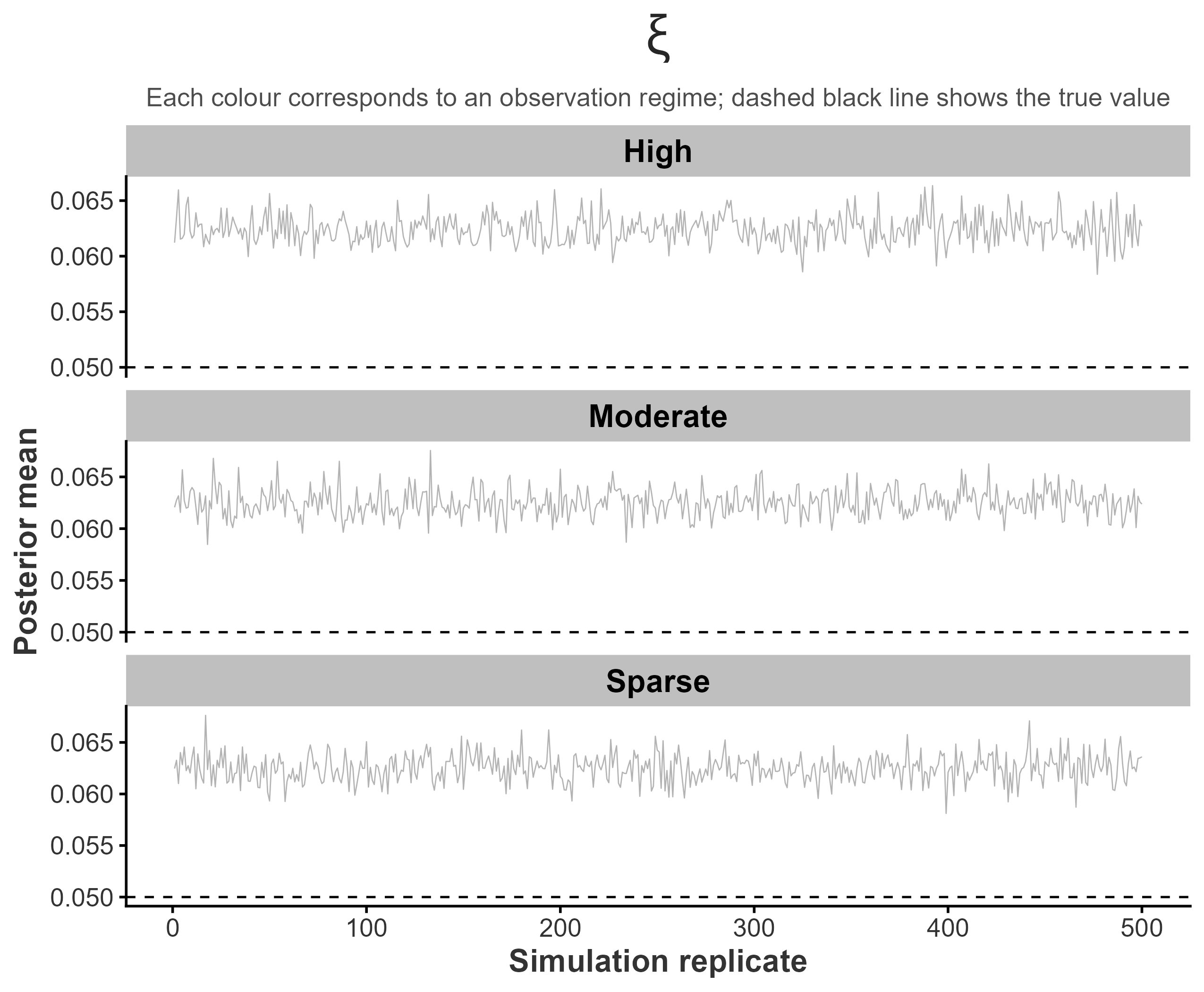}
        \caption*{\small (b) Convergence for $\xi$}
        \label{fig:conv_xi}
    \end{minipage}

    \vspace{0.2cm} 

    \begin{minipage}[b]{0.47\textwidth}
        \centering
        \includegraphics[width=\textwidth, height=0.20\textheight, keepaspectratio]{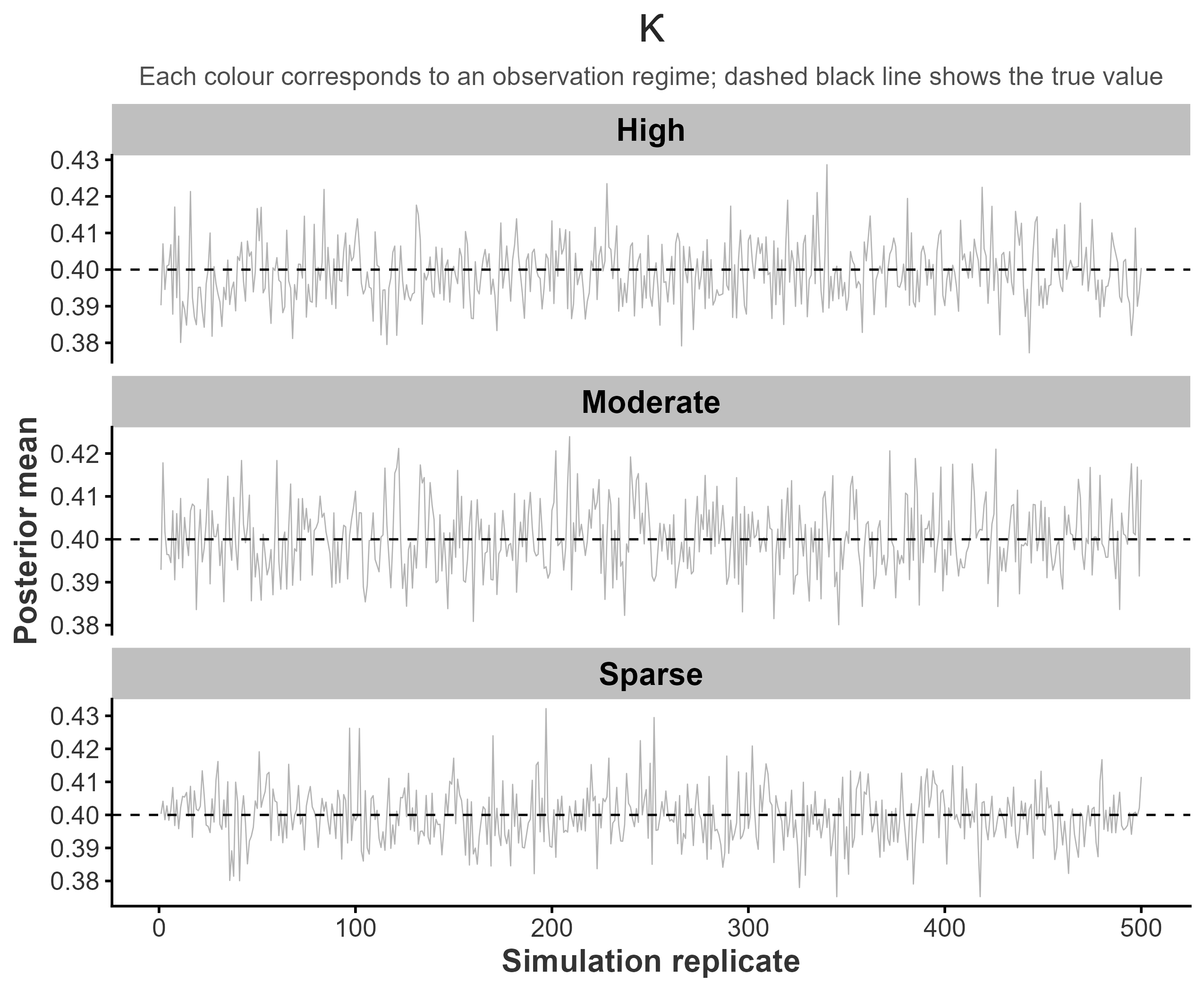}
        \caption*{\small (c) Convergence for $\kappa$}
        \label{fig:conv_kappa}
    \end{minipage}
    \hfill
    \begin{minipage}[b]{0.47\textwidth}
        \centering
        \includegraphics[width=\textwidth, height=0.20\textheight, keepaspectratio]{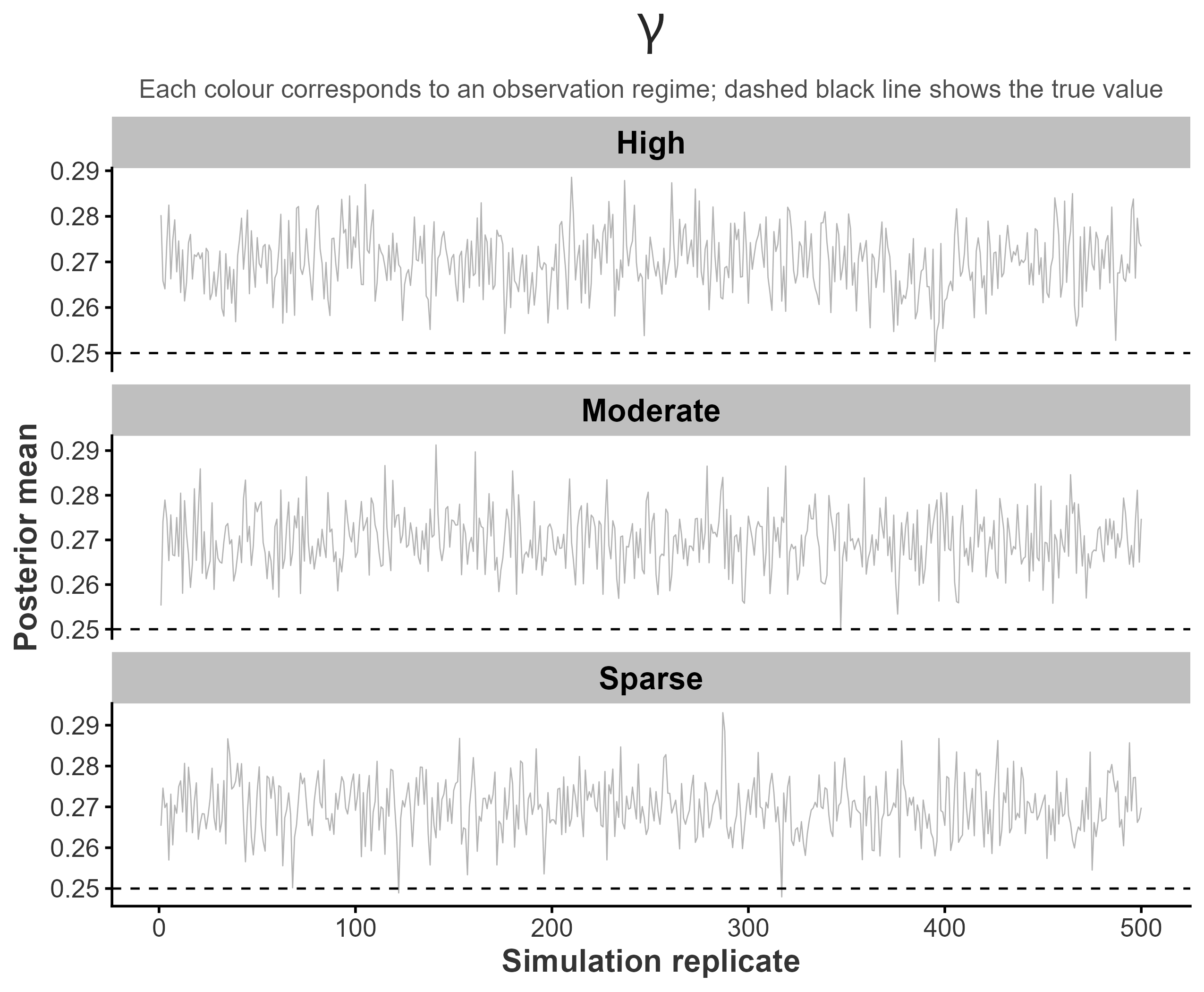}
        \caption*{\small (d) Convergence for $\gamma$}
        \label{fig:conv_gamma}
    \end{minipage}

    \vspace{0.2cm}

    \begin{minipage}[b]{0.47\textwidth}
        \centering
        \includegraphics[width=\textwidth, height=0.20\textheight, keepaspectratio]{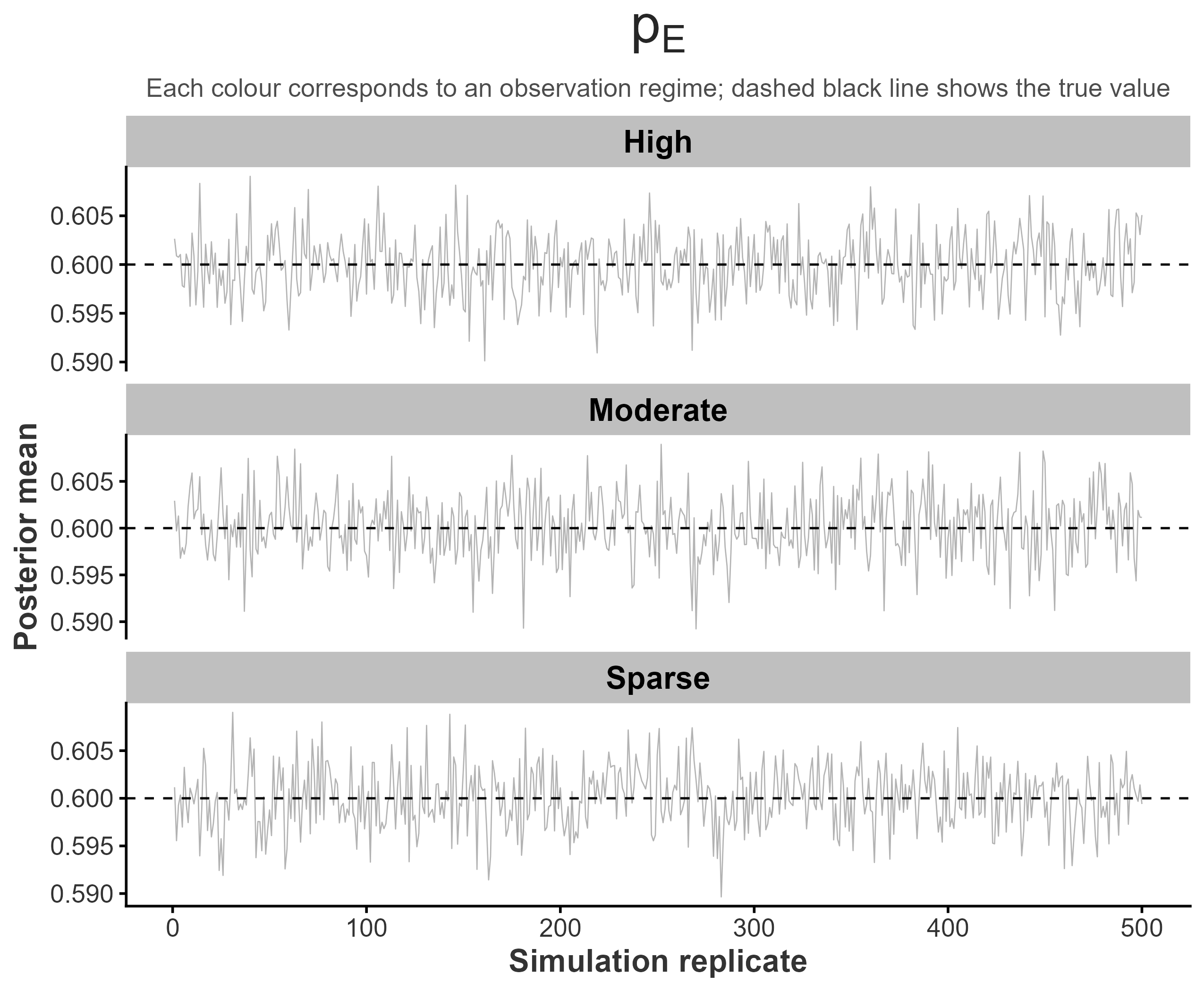}
        \caption*{\small (e) Convergence for $p_E$}
        \label{fig:conv_pE}
    \end{minipage}
    \hfill
    \begin{minipage}[b]{0.47\textwidth}
        \centering
        \includegraphics[width=\textwidth, height=0.20\textheight, keepaspectratio]{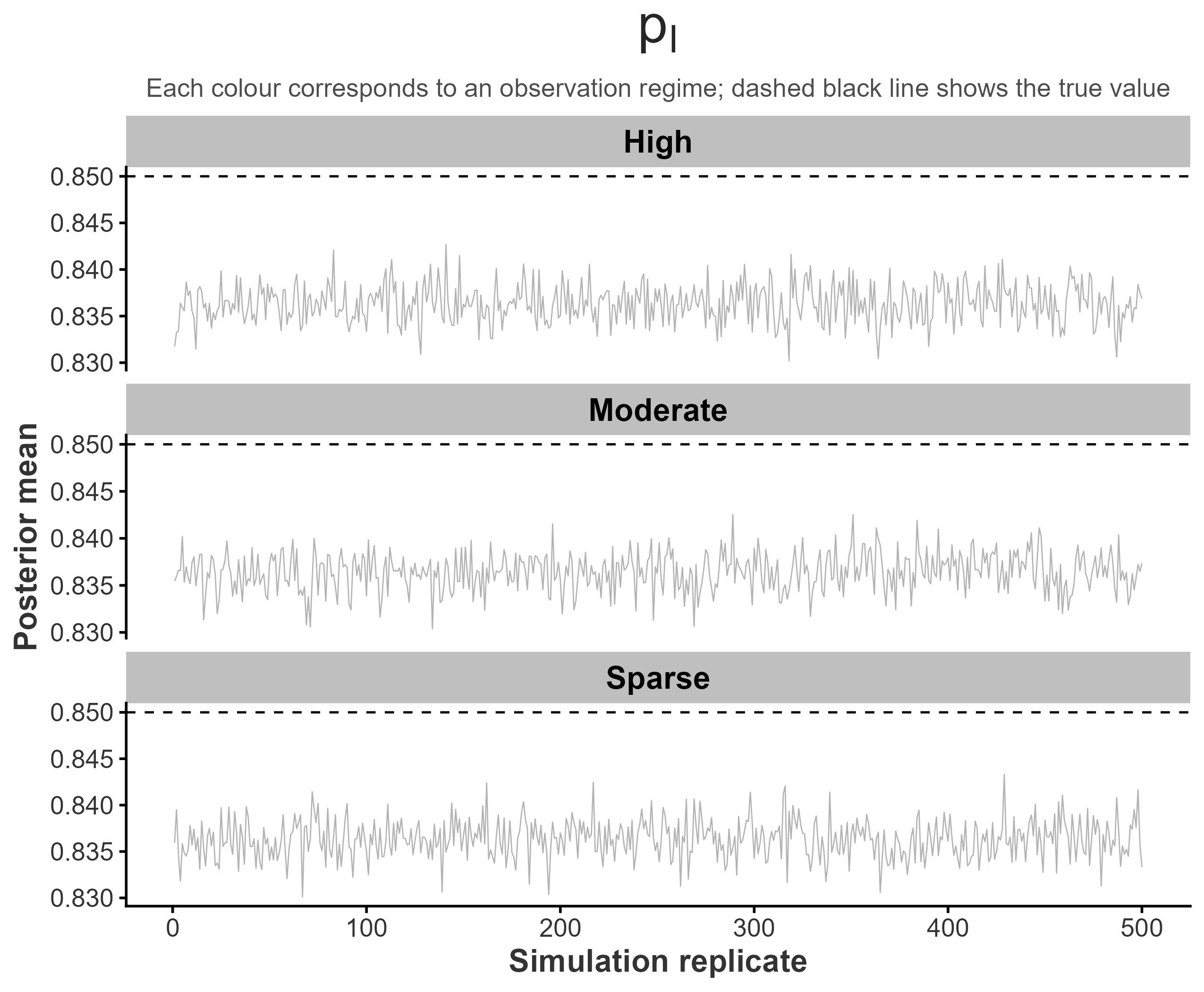}
        \caption*{\small (f) Convergence for $p_I$}
        \label{fig:conv_pI}
    \end{minipage}

    \vspace{0.2cm}

    \begin{minipage}[b]{0.47\textwidth}
        \centering
        \includegraphics[width=\textwidth, height=0.20\textheight, keepaspectratio]{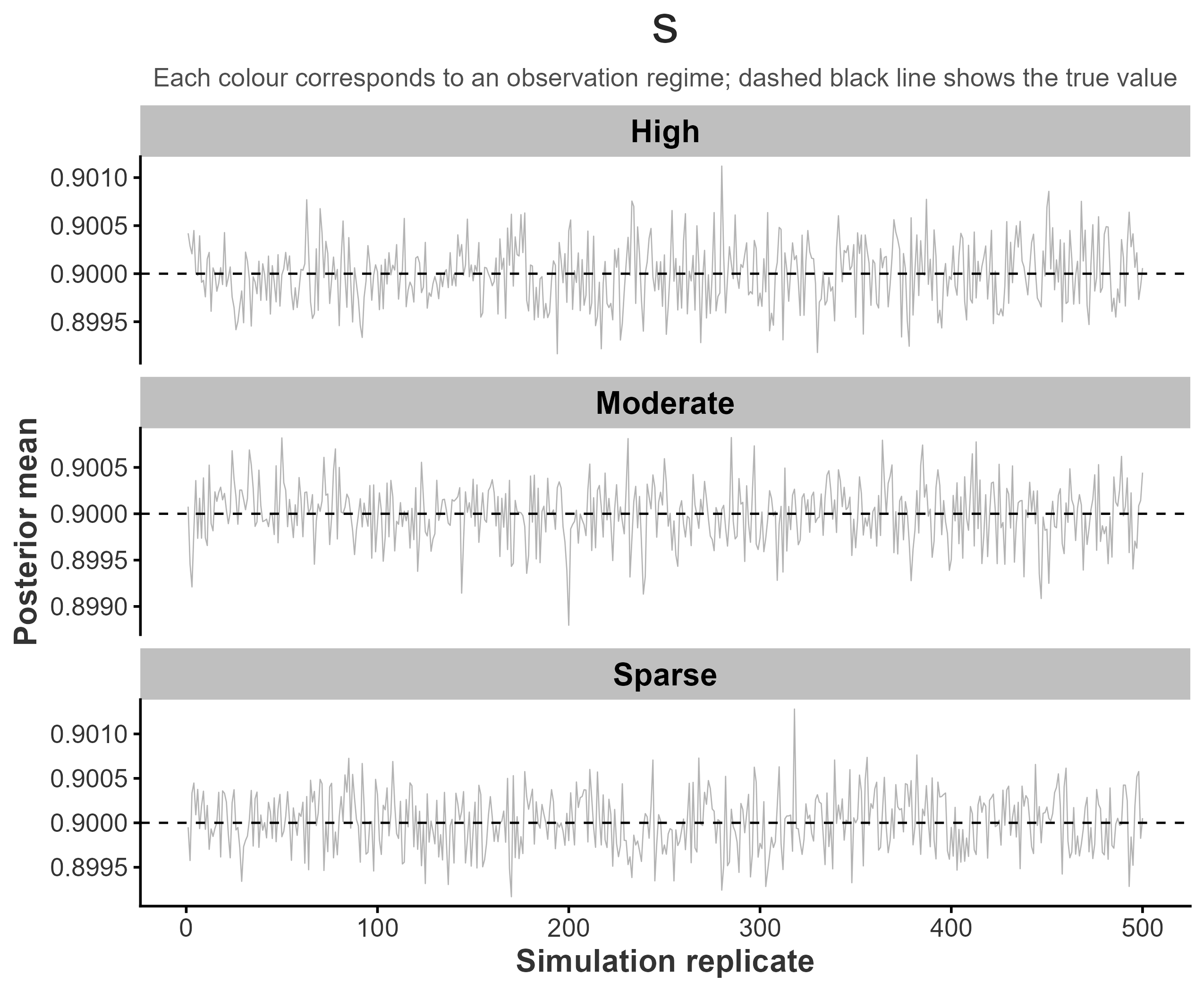}
        \caption*{\small (g) Convergence for $s$}
        \label{fig:conv_s}
    \end{minipage}
    \hfill
    \begin{minipage}[b]{0.47\textwidth}
        \centering
        \includegraphics[width=\textwidth, height=0.20\textheight, keepaspectratio]{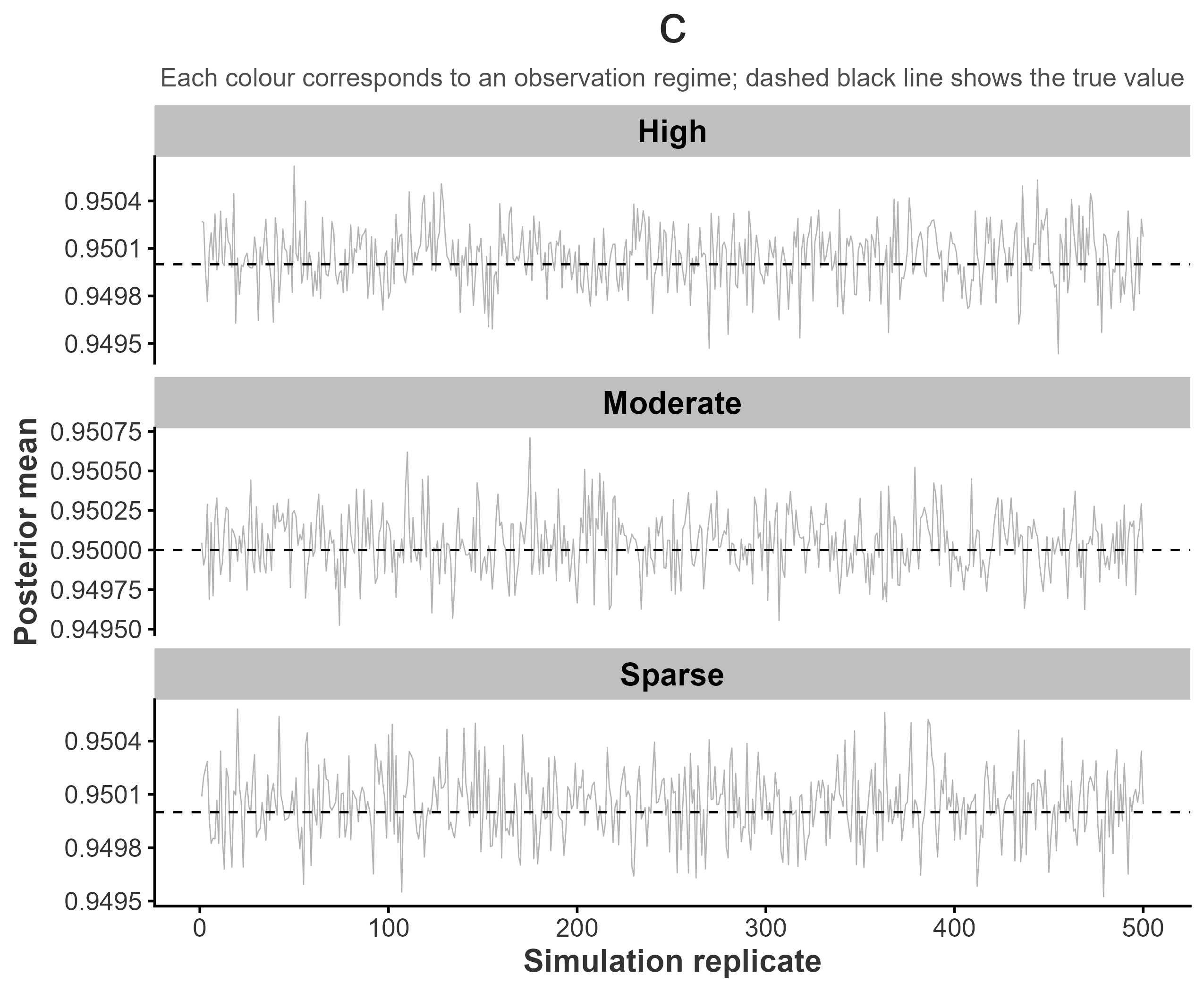}
        \caption*{\small (h) Convergence for $c$}
        \label{fig:conv_c}
    \end{minipage}

    \vspace{0.1cm}
    \caption{MCMC convergence across observation regimes for all monitored parameters ($\beta$, $\xi$, $\kappa$, $\gamma$, $p_E$, $p_I$, $s$, and $c$).}
    \label{fig:mcmc_all_parameters}
\end{figure}

\begin{figure}[htbp]
    \centering
    
    \begin{minipage}[b]{0.48\textwidth}
        \centering
        \includegraphics[width=\textwidth, height=0.5\textheight, keepaspectratio]{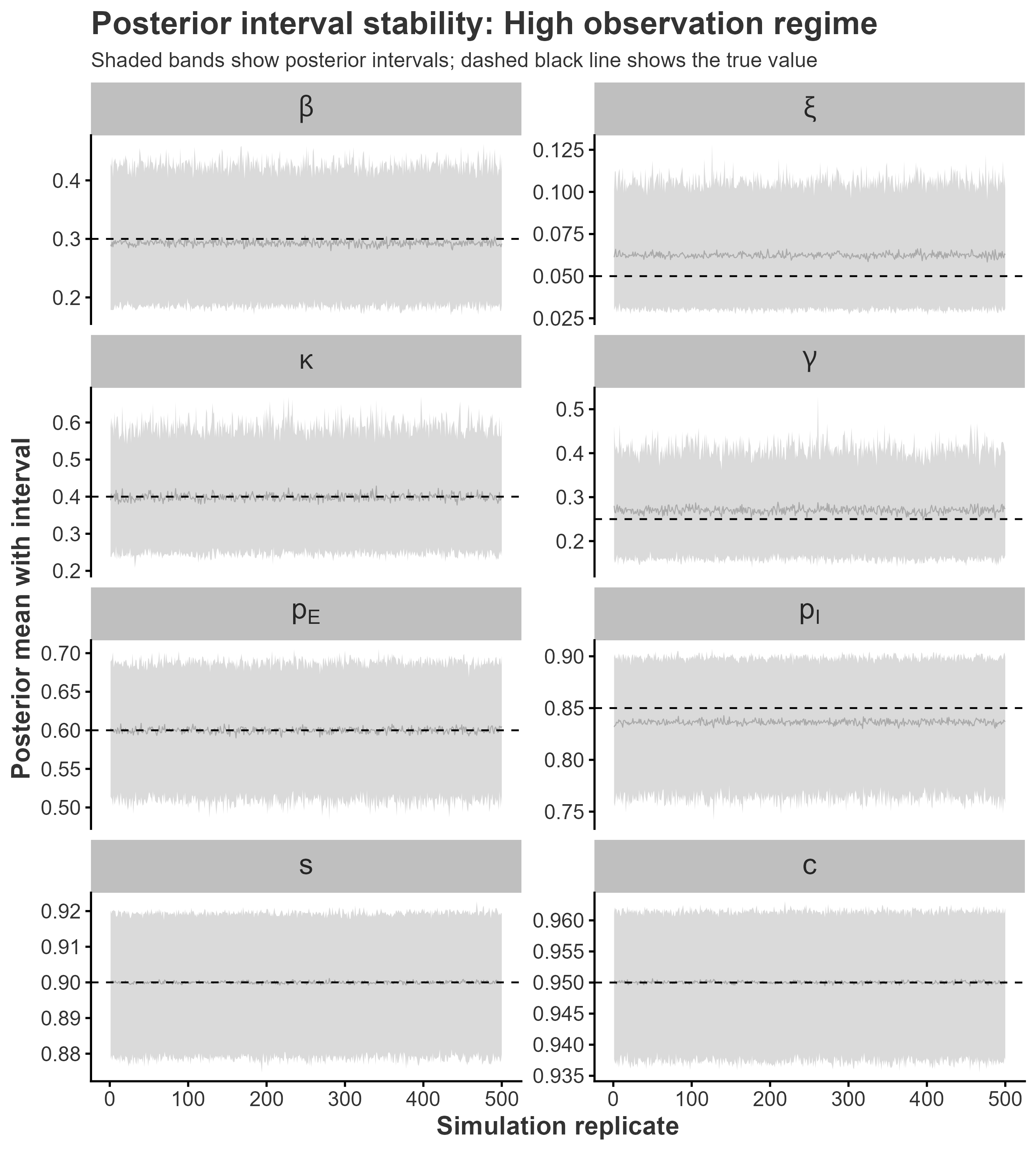}
        \caption*{\small (a) High observation regime.}
        \label{fig:conv_high}
    \end{minipage}
    \hfill 
    \begin{minipage}[b]{0.48\textwidth}
        \centering
        \includegraphics[width=\textwidth, height=0.5\textheight, keepaspectratio]{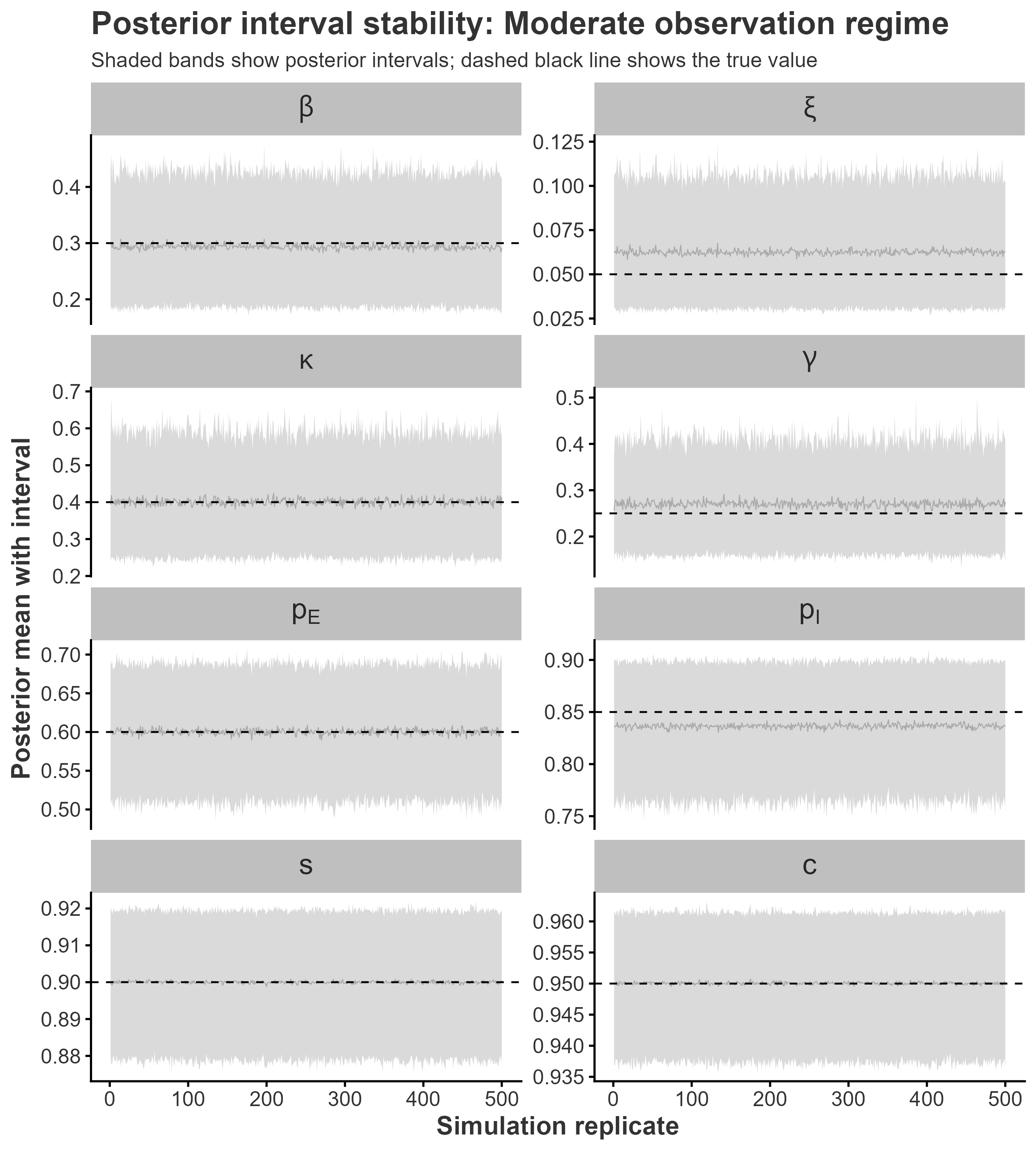}
        \caption*{\small (b) Moderate observation regime.}
        \label{fig:conv_moderate}
    \end{minipage}
    
    \vspace{0.6cm} 

    \begin{minipage}[b]{0.6\textwidth}
        \centering
        \includegraphics[width=\textwidth, height=0.5\textheight, keepaspectratio]{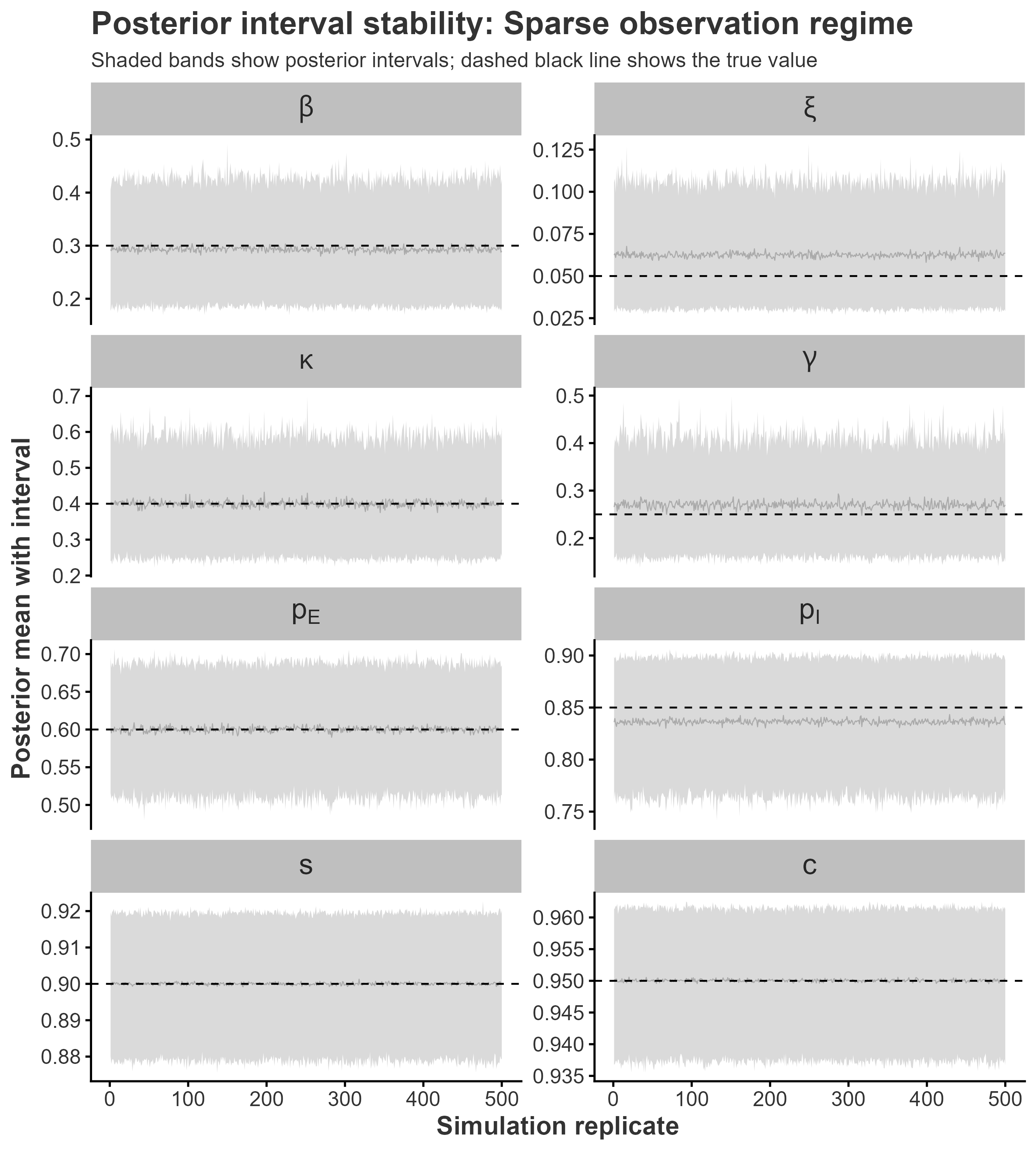}
        \caption*{\small (c) Sparse observation regime.}
        \label{fig:conv_sparse}
    \end{minipage}

    \vspace{0.4cm}
    \caption{Posterior interval stability across replicates for High, Moderate, and Sparse observation regimes.}
    \label{fig:interval_stability_all}
\end{figure}

As expected, the network parameters are more sensitive to missingness than the epidemic parameters. Formation and dissolution rates are estimated well when contact observations are frequent, but accuracy declines as observation intervals widen and misclassification increases. This behavior is visible in the convergence-style plots in Figures~\ref{fig:mcmc_convergence_all} and \ref{fig:interval_stability_all}, which show stable replicate-level posterior means but wider uncertainty bands under the sparser regimes. The increased posterior variability is a sign that the method is appropriately reflecting reduced information rather than overconfidently extrapolating from limited data.

The strongest confounding occurs between the internal infection rate $\beta$ and the external infection rate $\xi$. When contact observations are limited, these two parameters become partially interchangeable from the perspective of the symptom data, leading to broader posterior distributions and mild upward bias in $\xi$ in some settings. This result underscores the importance of network data for separating local transmission from imported infection. By contrast, $\kappa$ and $\gamma$ remain relatively stable across regimes because they are primarily informed by the temporal spacing of observed symptom progression.

Latent trajectory recovery follows the same overall pattern. Under the high-observation regime, the posterior mode of the epidemic history closely matches the true infection and progression times, and the inferred network trajectories capture most major edge changes. Under sparse observation, the latent histories are more uncertain, but the posterior still concentrates around plausible paths consistent with the observed data. This indicates that the model propagates uncertainty appropriately rather than forcing overconfident reconstruction.

\subsection{Sensitivity Analysis}
\label{subsec:sensitivity}

We also examined sensitivity to prior specification and contact misclassification. The posterior estimates were robust to modest changes in the Gamma and Beta hyperparameters, indicating that the data were sufficiently informative in the settings considered. However, when specificity of the contact observations was reduced, false-positive edges led to inflated estimates of network formation rates and, in some cases, exaggerated the apparent opportunity for transmission. This effect was strongest in the sparse-observation regime, where a small number of erroneous contact records could meaningfully alter the inferred latent network structure.

To further assess model behavior, we repeated the simulations under a denser network regime and under weaker transmission intensity. In the denser regime, network parameters were estimated more precisely, but epidemic parameters became slightly harder to separate because more contacts increased the range of plausible transmission pathways. Under weaker transmission, outbreaks were smaller and more variable, which increased uncertainty in all epidemic parameters and reduced the information available for estimating network effects.

\subsection{Summary}
\label{subsec:summary}

Overall, the simulation study indicates that the proposed complete-data likelihood framework performs well when the data contain at least moderate information about both symptoms and contacts. The method is able to recover epidemic, network, and observation parameters with acceptable bias and uncertainty quantification, and it remains usable even when the observed data are incomplete. The results also confirm the central inferential message of the paper: identifiability depends strongly on the richness of the observation regime, especially for parameters governing internal versus external infection and dynamic contact formation.

A particularly important feature of the method is its stability across observation regimes. Even under sparse observation, the posterior summaries remain centered near the truth for most epidemic and observation parameters, and the credible intervals widen in a sensible way to reflect the loss of information. This behavior is exactly what is desired in a Bayesian latent-variable model: the method should not force precision where the data do not support it. The simulation results show that the proposed sampler achieves this balance well.

The results also illustrate the value of joint modeling of epidemic and network dynamics. Parameters that are only weakly identified from symptom data alone, such as the internal and external infection components, become substantially more estimable when contact data are available. Conversely, the network parameters remain the most sensitive to missingness and measurement error, which is unsurprising given their latent nature. These findings underscore the importance of modeling the epidemic and network processes together rather than treating them separately.

In practical terms, the simulation study suggests that the proposed framework is most effective when at least one of the two data sources---symptom information or contact information---is moderately informative, and ideally when both are available. Under those conditions, the model yields accurate parameter recovery, reasonable interval estimation, and stable latent-state reconstruction. This makes the approach well suited for partially observed epidemic systems in which both transmission and contact history matter.

\section{Conclusion}
\label{subsec:conclusion}
This paper develops a Bayesian framework for inference in a dynamic SEIR model on a status-dependent contact network, allowing both the epidemic process and the network evolution to be learned jointly from partially observed data. The simulation study demonstrates that the proposed Metropolis-within-Gibbs sampler can recover the main epidemic, observation, and network parameters accurately and with well-calibrated uncertainty, even when the observation process is sparse. These findings are encouraging because they suggest that the method can be used in settings where direct observation of transmission and contact dynamics is incomplete, which is the norm rather than the exception in many infectious disease applications.

A key strength of the approach is its ability to combine latent-state reconstruction with parameter learning in a single coherent inferential framework. The posterior summaries indicate that the most substantively important parameters are estimated with little bias, while the uncertainty intervals remain appropriately wide for components that are only weakly identified. This balance between precision and honesty is critical in applied work, since overconfident inference on poorly identified latent processes can be misleading. The results here suggest that the model is capable of delivering both accurate point estimates and sensible posterior uncertainty.

The simulation results also highlight the practical usefulness of the regime-based design. By comparing High, Moderate, and Sparse observation settings, we show that the method remains stable as the amount of available information decreases. Although estimation becomes somewhat less precise under sparse observation, the degradation is gradual rather than abrupt, and the posterior intervals continue to exhibit good coverage. This behavior is important for applications in which data quality varies across outbreaks, subpopulations, or surveillance systems.

There are several directions for future work. First, the current implementation can be extended to larger networks and more complex contact structures, including degree heterogeneity, assortative mixing, or time-varying covariates. Second, more efficient proposal mechanisms or blocked updates could further improve mixing for the latent network components. Third, the framework could be generalized to accommodate additional epidemiological states, intervention effects, or multiple pathogen strains. Finally, applying the method to real outbreak data would provide an important assessment of its practical value and computational scalability.

In conclusion, the proposed Bayesian data augmentation strategy provides a flexible and effective tool for inference in dynamic epidemic-network models. The simulation study shows that the method is accurate, stable, and robust across a range of observation regimes, making it a strong candidate for applied inference in partially observed infectious disease systems. The combination of mechanistic modeling, latent process reconstruction, and Bayesian uncertainty quantification offers a powerful framework for studying transmission in dynamic contact networks.






\bibliography{SIERbibliography}

\newpage
\section*{Appendix}
Convergence plots under three data observation regime (high, moderate and sparse) with three different initial guesses (low, medium and high).

\begin{figure}
\includegraphics{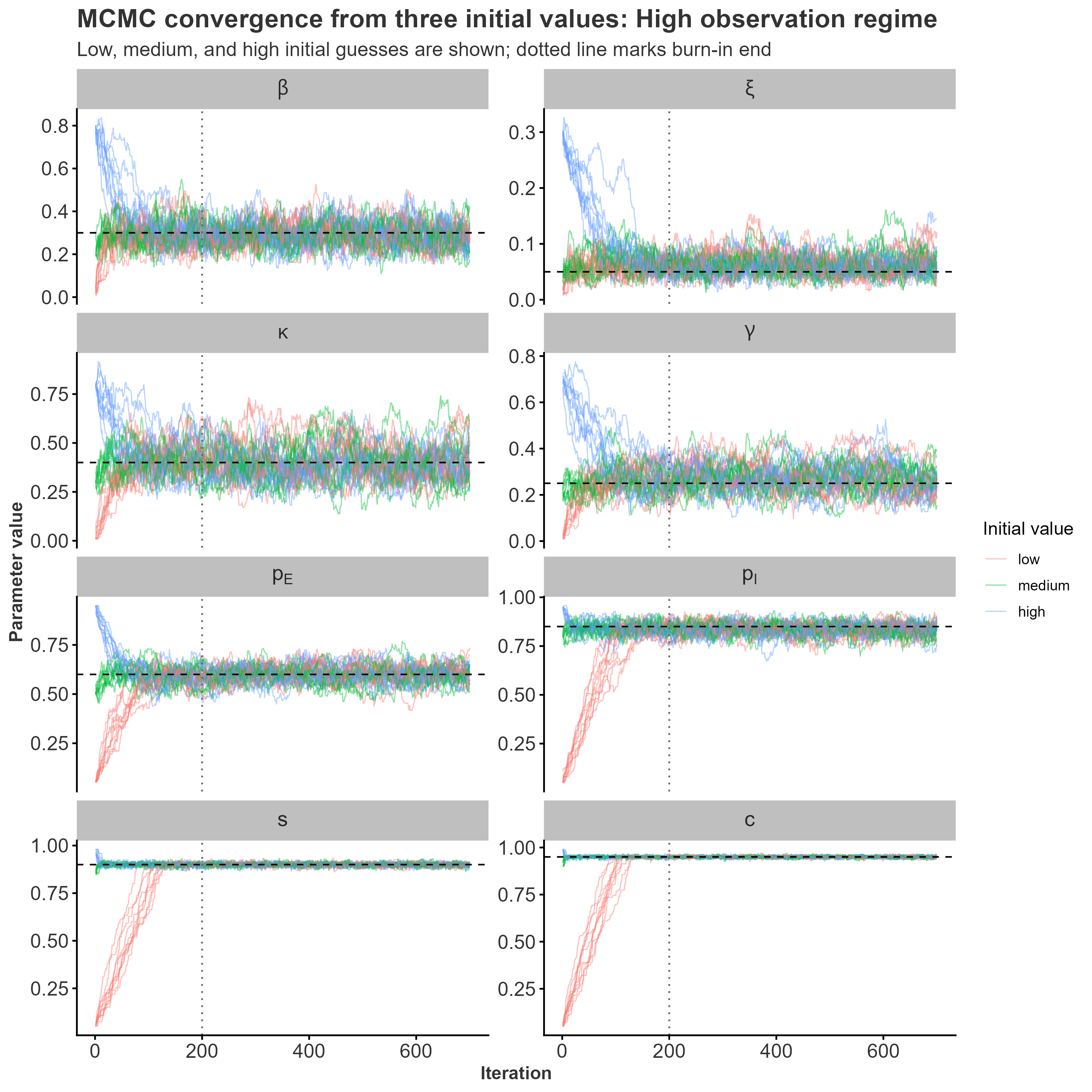}
\end{figure}

\begin{figure}
\includegraphics{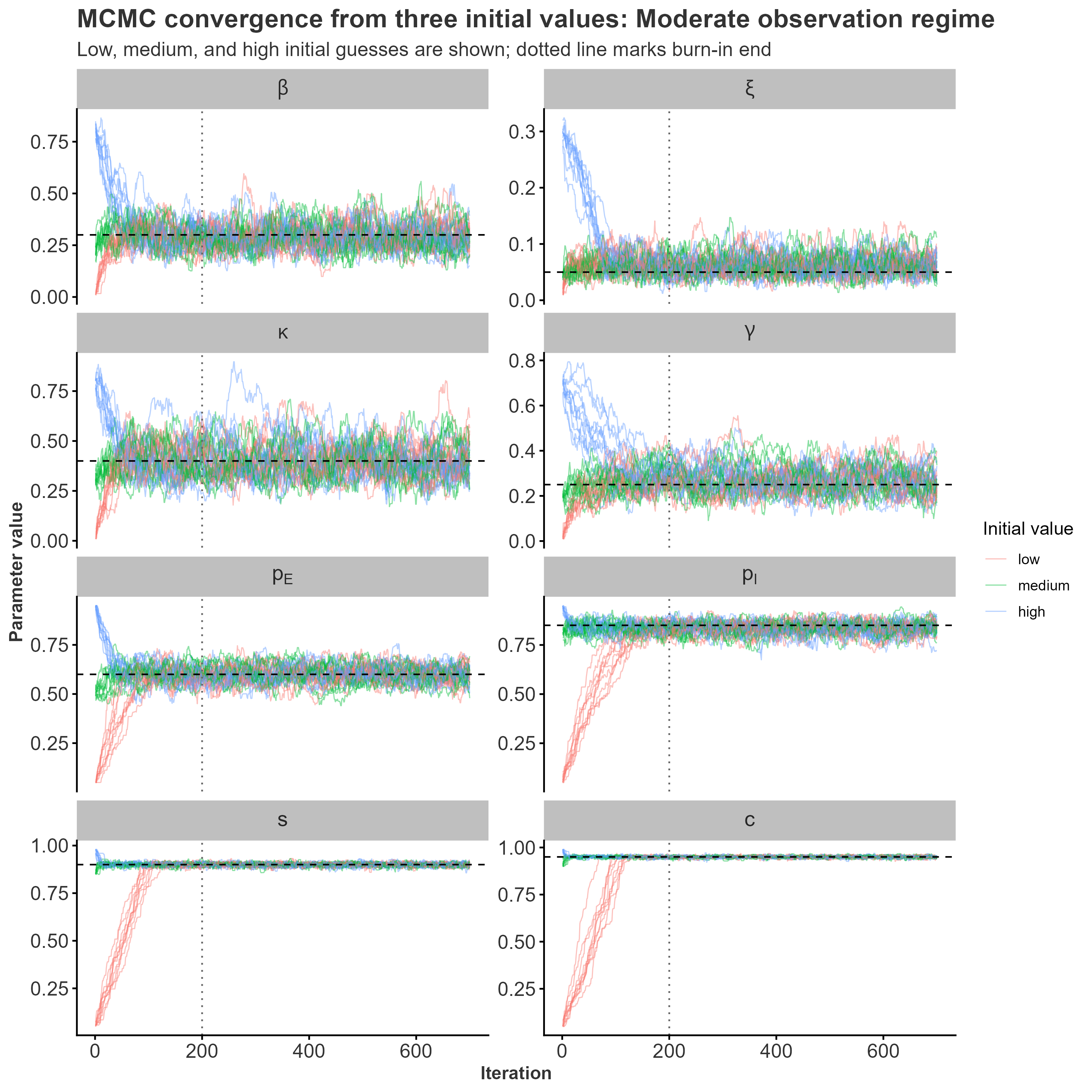}
\end{figure}

\begin{figure}
\includegraphics{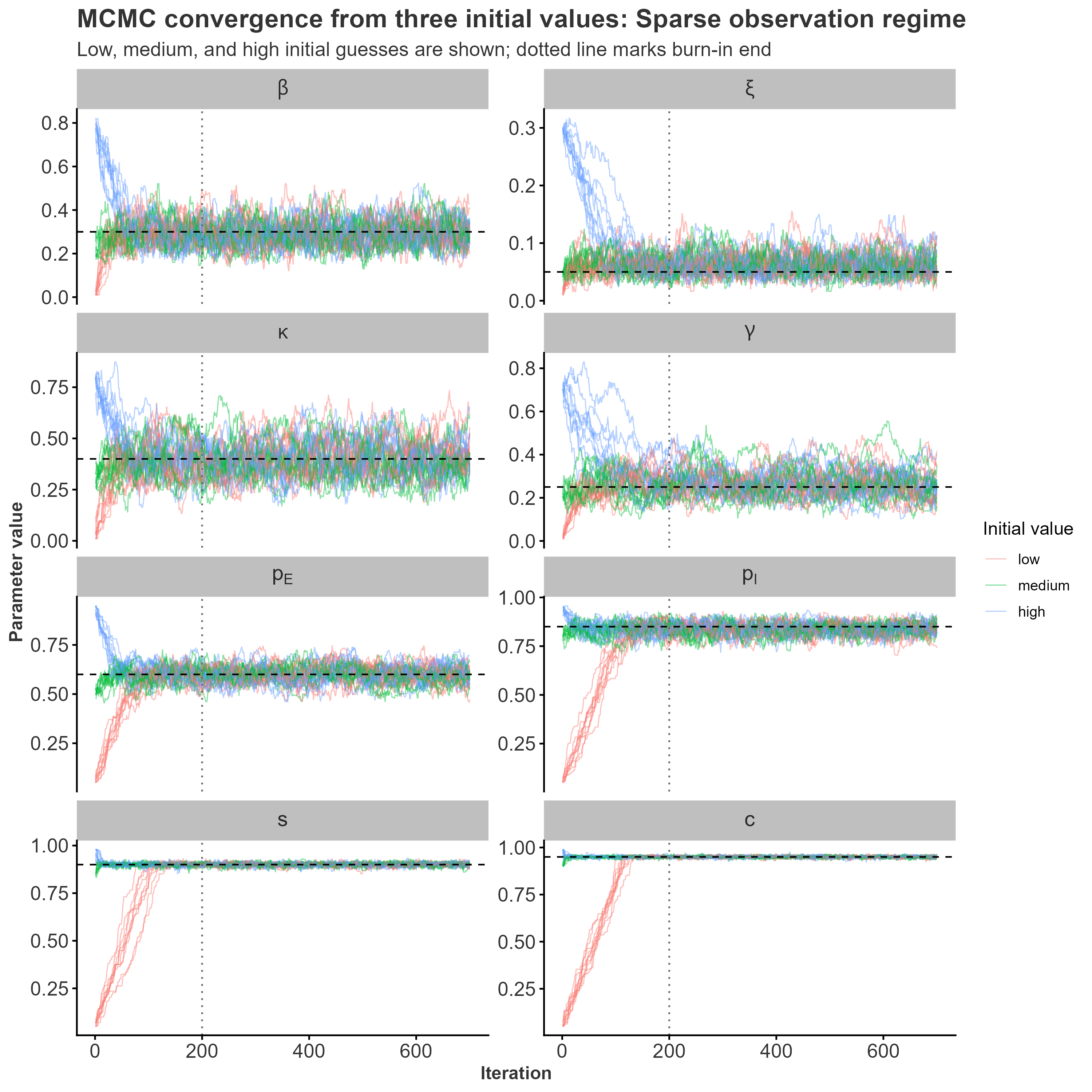}
\end{figure}

\end{document}